\documentclass[a4,11pt]{article}
\usepackage[dvipdfmx]{graphicx}
\usepackage{lscape}
\usepackage{enumerate}
\usepackage{latexsym}
\usepackage{amsthm}
\usepackage{subcaption}  

\usepackage{color}
\usepackage{ulem}

\usepackage{framed}
\usepackage{color}

\newtheorem{theorem}{Theorem}

\newtheorem{lemma}{Lemma}

\usepackage{amsmath,amssymb}

\DeclareMathOperator{\argmin}{argmin}

\DeclareMathOperator{\sgn}{sgn}
\def\vec#1{\mbox{\boldmath $#1$}}

\begin{document}

\title{Structural grouping of extreme value models via graph fused lasso}

\author{
  {\sc Takuma Yoshida}$^{1,3}$,
  {\sc Koki Momoki}$^{2}$,
  {\sc Shuichi Kawano}$^{4}$\\
 {\it $^{1,2}$ Graduate School of Science and Engineering, Kagoshima University}\\
  {\it $^{3}$ Data Science and AI Innovation Research Promotion Center, Shiga University}\\
  {\it $^{4}$Faculty of Mathematics, Kyushu University}
}

\date{\empty}
\maketitle

\begin{abstract}
The generalized Pareto distribution (GPD) is a fundamental model for analyzing the tail behavior of a distribution.
In particular, the shape parameter of the GPD characterizes the extremal properties of the distribution.
As described in this paper, we propose a method for grouping shape parameters in the GPD for clustered data via graph fused lasso.
The proposed method simultaneously estimates the model parameters and identifies which clusters can be grouped together.
We establish the asymptotic theory of the proposed estimator and demonstrate that its variance is lower than that of the cluster-wise estimator.
This variance reduction not only enhances estimation stability but also provides a principled basis for identifying homogeneity and heterogeneity among clusters in terms of their tail behavior. 
We assess the performance of the proposed estimator through Monte Carlo simulations.
As an illustrative example, our method is applied to rainfall data from 996 clustered sites across Japan.
\end{abstract}

{\rm Keywords:\ \ Extreme value analysis; Generalized Pareto distribution; Graph fused lasso; Clustered data; SCAD}

\section{INTRODUCTION}
In many fields of risk assessment, interest lies in predicting the tail behavior of a distribution rather than its mean or central characteristics.
However, tail prediction is inherently difficult and unstable because of the inherent data sparsity in extreme regions.
To construct efficient estimators and prediction models for tail behavior, we primarily examine the tail of the distribution rather than modeling the entire distribution.
Extreme value theory (EVT) provides a theoretical foundation and statistical tools for modeling tail behavior.
As a commonly used EVT model, the generalized Pareto distribution (GPD) is fitted to data exceeding a specified threshold and serves as a model for extreme values.
The GPD is characterized by two parameters: a shape parameter and a scale parameter.
The shape parameter, also known as the extreme value index, determines the heaviness of the tail and the domain of attraction, whereas the scale parameter describes the magnitude of threshold exceedance (de Haan and Ferreira 2006).
Estimation of shape and scale parameters is typically conducted using the maximum likelihood method (e.g., Coles 2001).
The fundamental theories of the maximum likelihood estimator were presented by Smith (1987) and Drees et al. (2004).

A typical example of a statistical problem using extreme value analysis is risk modeling of climate data (e.g., Bousquet and Bernardara 2021). 
In general, climate data are available as clustered data because these are observed daily at multiple observation sites. 
Then, the interesting purpose is the prediction of marginal distribution of each cluster and dependence between clusters. 
As described herein, we specifically examine the estimation of the marginal distribution of each cluster. 
Then, the cluster-wise estimation of the parameters included in GPD is a classical approach. 
However, for example, in climate data, one might consider that nearby clusters (sites) tend to have similar features of the data. 
This tendency motivates us to explore the possibility of constructing more accurate estimators of the parameters of GPD by incorporating mutual information between clusters. 
Actually, Hosking and Wallis (1997) provided regional frequency analysis, which are pooling data observed at several clusters where the statistical behavior of the data is assumed to be similar. 
Casson and Coles (1999) also studied the pooling method of information of clusters in spatial extreme value models. 
Einmahl et al. (2020) developed the novel extreme value theory for pooling clusters (and time dependence). 
However, with those approaches above, it is difficult to choose the clusters to be grouping when the number of sites is very large. 
Dupuis et al. (2023) and Rohrbeck and Tawn (2023) have also proposed methods for clustering in the context of extreme value analysis.
Dupuis et al. (2023) specifically examined block maxima using the generalized extreme value distribution and proposed a heuristic grouping method based on marginal similarity, selecting the best model by BIC. 
Their method might present scalability issues when the number of clusters is large because of the exponential growth in candidate models.
Rohrbeck and Tawn (2023) considered clustering sites with extreme value inference.
Their Bayesian approach jointly estimates marginal effect, dependence structure, and cluster assignments, incorporating similarity in shape and scale parameters as well as extreme dependence.
Details of other grouping methods in extreme value modeling are presented by Rohrbeck and Tawn (2023). 
Consequently, extreme value modeling with cluster grouping constitutes a useful framework for structurally characterizing tail behavior across clusters.

As described herein, we provide the novel method for simultaneous construction of accurate estimators of the parameters of the GPD in each cluster and grouping shape parameters in some clusters. 
Then, we aim to group the shape parameter between clusters because it affects the tail behavior and domain of attraction directly.
However, grouping of scale parameters should be treated carefully because the scale parameter behavior is closely related to both the threshold and the shape parameter (de Haan and Ferreira 2006, theorem 1.2.5). 
Consequently, when the scale parameter is grouped, we should consider a grouping threshold in addition to the shape parameter (Section 3.2 herein).
However, it is difficult because the determination of a threshold is beyond the framework of estimation of parameters in GPD. 
Therefore, we will try to group only shape parameters; not scale parameters. 
From this, the scale parameter represents the specific information of extreme value model for each cluster.
Our proposed method is conducted using a penalized maximum likelihood method. 
The likelihood part consists of the likelihood based on GPD for each cluster. 
In the part of the penalty, we use the fused lasso penalty for the shape parameters across clusters (Tibshirani et al. 2005). 
Then, if some of the shape parameters can be regarded as equal, then the difference between them is reduced to zero.
In this situation, we use the fused lasso on the graph of some connections between each pair of clusters. 
Statistical methods using the idea of graph fused lasso have been developed by She et al. (2010), Ke et al. (2015), Wang et al. (2016), Tang and Song (2016), Wang et al. (2018), and Hui et al. (2024). 
However, a study examining a grouping method of the extreme value feature using graph fused lasso has not been reported to date. 
By the sparsity of the fused lasso penalty, the shape parameters in multiple clusters are equal when these shape parameters are inherently similar. 
If the shape parameters in multiple clusters are detected to be equal, then such clusters are said to have common extreme value distribution. 
One important issue for applying the graph fused lasso is determination of which clusters are linked.
If the number of clusters is $J>2$, then the maximum number of possible connections (i.e., edges on the graph) between clusters is ${}_{J}\mathrm{C}_2=J(J-1)/2$.
For example, if $J=1000$, then we have $J(J-1)/2\approx 500,000$, which represents too many connections to optimize the estimation.  
She (2010) pointed out that too many penalties engender unstable estimation or infeasible computation in graph fused lasso problem. 
Therefore, for application, it is necessary to reduce the number of edges of the graph appropriately.
For that reason, in our approach, we use a user-specified graph of the connections between clusters by prior analysis or information of data.
For example, for the climate data or spatial data, we can consider the connection based on the distance between the locations of clusters.
For time series observed data, some dependence measure is also useful to decide the graph. 

After constructing the graph, we estimate the parameters using a penalized maximum likelihood approach with a graph fused lasso penalty.
To calibrate the penalty strength for each pair of clusters, we incorporate the adaptive lasso framework proposed by Zou (2006).
Specifically, cluster-wise estimators are used as initial values for the shape parameters.
Subsequently, we employ the first derivative of a non-concave penalty, such as SCAD (Fan and Li 2001) or MCP (Zhang 2010), to construct adaptive weights.
This weighting mechanism effectively removes the penalty for pairs of clusters for which initial estimates differ considerably, even if they are connected in the specified graph.
Consequently, by adaptively refining the graph structure through these weights, the proposed joint estimation and clustering procedure improves computational efficiency and decreases estimation bias arising from inappropriate penalization between heterogeneous clusters.
For this study, we establish the asymptotic theory of the proposed adaptive graph fused lasso estimator and demonstrate that its variance is low compared to that of the cluster-wise estimator.
This variance reduction not only enhances estimation stability but also facilitates a principled identification of homogeneity and heterogeneity among clusters in terms of their tail behavior.

This paper is organized as explained below. 
Section 2 presents reviews of the EVT and GPD. 
The proposed grouping method of clusters is described in Section 3. 
Theoretical properties of the proposed method are revealed in Section 4.
A simulation study is conducted in Section 5, whereas a real data example is demonstrated in Section 6. 
Section 7 presents the salient conclusions. 
In the Appendix, the technical lemmas and the proof of theorems in Section 4 are presented. 
Additional studies of simulation and real data example are presented as supplementary materials.

\section{GENERALIZED PARETO DISTRIBUTION}

Let $X_1,\ldots,X_n\in \mathbb{R}$ be $i.i.d.$ random variables with a distribution function $F(x)=P(X_i<x)$. 
First we define the domain of attraction of $F$. 
If there exists a constant $\gamma\in\mathbb{R}$ and sequences $a_n\in\mathbb{R}$ and $b_n>0, n\in\mathbb{N}$ such that 
\[
\lim_{n\rightarrow\infty}P\left(\frac{\max_i X_i-a_n}{b_n} <z\right)=G_\gamma(z)=\exp[-(1+\gamma z)_+^{-1/\gamma}],\ \ z>0,
\]
then the distribution function $F$ is said to belong to the maximum domain of attraction of $G_\gamma$, as denoted by $F\in{\cal D}(G_\gamma)$, where $(x)_+=\max\{x,0\}$.
The distribution function $G_\gamma$ is the so-called generalized extreme value distribution. 
We next introduce the GPD. 
Let $H(x\mid \gamma)= (1+\gamma x)_+^{-1/\gamma}$. 
If $\gamma=0$, then we define $H(x\mid 0)=\lim_{\gamma\rightarrow 0}H(x\mid\gamma)=1-e^{-y}$. 
The function $H(x\mid \gamma)$ is known as the survival function of the GPD.
According to Chapter 4 of Coles (2001), $F\in{\cal D}(G_\gamma)$ if and only if 
\begin{eqnarray}
\lim_{w\rightarrow x^*}P(X_i > w+ \xi_w x\mid X_i>w)
=
\lim_{w\rightarrow x^*}\frac{1-F(w+\xi_w x)}{1-F(w)}= H(x\mid \gamma) \label{GPD}
\end{eqnarray}
with some sequence $\xi_w>0$ of $w\in\mathbb{R}$ and $x^*=\sup\{x : F(x)<1\}$. 
In fact, $\xi_w$ is not unique because its behavior is determined only by the limiting property (de Haan and Ferreira 2006, theorem 1.2.5). 
From (\ref{GPD}), it is apparent that for the random variable $Y_i=X_i-w$ given some large $w\in\mathbb{R}$,
\[
P(Y_i> y\mid Y_i>0) \approx H\left(\frac{y}{\xi_w} \middle| \gamma\right).
\]
From the above, $\gamma$ and $\xi_w$ respectively represent the so-called shape parameter and scale parameter.
Consequently, the tail behavior of the distribution function $F$ can be dominated by $H(y/\xi_w\mid\gamma)$. 
For our analyses, we let
\[
h^o(y\mid\gamma, \xi_w)=\frac{\partial 1- H(y/\xi_w\mid\gamma)}{\partial y}=\frac{1}{\xi_w}\left(1+\frac{\gamma}{\xi_w} y\right)_+^{-1/\gamma-1}
\]
be the density function associated with $H(y/\xi_w\mid \gamma)$. 
Therefore, the log-likelihood function of $(\gamma,\xi_w)$ is defined as 
\[
\sum_{i=1}^n I(Y_i>0)\log h^o(Y_i\mid \gamma,\xi_w).
\]
In fact, the samples with $Y_i<0$ are removed from the log-likelihood function via the indicator function $I(\cdot)$ as non-extreme value data. 
By maximizing the log-likelihood, we obtain the maximum likelihood estimator of $(\gamma,\xi_w)$. 

According to Drees et al. (2004), in the method presented above, the estimators of $\gamma$ and $\xi_w$ are dependent. 
Chavez-Demoulin and Davison (2005) examined reparameterization of the model to orthogonalize the maximum likelihood estimators of shape and scale parameters.
Defining $\sigma_w=\xi_w(\gamma+1)$, we consider the pair $(\gamma,\sigma_w)$. 
Then, the survival function of the GPD is 
\[
H\left(\frac{y}{\xi_w}\mid \gamma\right)=H\left(\frac{(\gamma+1)y}{\sigma_w}\middle|\gamma\right).
\]
Therefore, the density function of $1-H$ with parameters $(\gamma,\sigma_w)$ becomes
\begin{eqnarray}
h(y\mid \gamma,\sigma_w) = \frac{\gamma+1}{\sigma_w}\left(1+\frac{\gamma(\gamma+1)}{\sigma_w} y\right)_+^{-1/\gamma-1}. \label{rescaleGP}
\end{eqnarray}
The log-likelihood function of $(\gamma,\sigma_w)$ is 
\[
\ell(\gamma,\sigma_w)=\sum_{i=1}^n I(Y_i>0)\log h(Y_i\mid \gamma,\sigma_w).
\]
The maximum likelihood estimator of $(\gamma,\sigma_w)$ is denoted by $(\tilde{\gamma},\tilde{\sigma}_w)$. 
By similar arguments to those presented by Smith (1987) and Drees et al. (2004), we can show that $(\tilde{\gamma}, \tilde{\sigma}_w)$ is distributed asymptotically as a normal distribution. 
Then, by reparameterization of the scale parameter, one can show that
\[
E\left[\frac{\partial^2 \log h(Y_i\mid \gamma,\sigma_w)}{\partial \gamma \partial \sigma_w} \middle| Y_i>0\right]=0.
\]
Showing this implies that the asymptotic covariance matrix of  $(\tilde{\gamma}, \tilde{\sigma}_w)$, which is obtained as the inverse of  Fisher-information matrix, is diagonal.
Consequently, $\tilde{\gamma}$ and $\tilde{\sigma}_w$ are asymptotically independent.
Throughout this paper, GPD is parameterized using $(\gamma, \sigma_w)$ instead of $(\gamma, \xi_w)$.

Finally, we review the limiting behavior of $\sigma_w$. 
When $\gamma<0$, $x^*$ is finite. 
The $x^*$ can be finite or infinite for $\gamma=0$. 
However, for the analyses presented in this paper, we assume that $H(x)=1-e^{-x/\sigma}$ with $\sigma>0$ if $\gamma=0$. 
Then, $x^*=\infty$ for also $\gamma=0$.
Under such a condition, according to Theorem 1.2.5 of de Haan and Ferreira (2006), we have 
\begin{eqnarray}
\left\{
\begin{array}{ll}
\displaystyle\lim_{w\rightarrow \infty} \frac{\sigma_w}{w} =\gamma(\gamma+1), & \gamma>0,\\
\displaystyle\lim_{w\rightarrow x^*} \frac{\sigma_w}{x^*-w} =-\gamma(\gamma+1), & \gamma<0,\\
\displaystyle\lim_{w\rightarrow \infty}  \sigma_w =\sigma, &\gamma=0.
\end{array}
\right.
\label{LimitSigma}
\end{eqnarray}
The formula (\ref{LimitSigma}) shows that $\sigma_w$ is not unique, but the behavior depends on the $\gamma$ and threshold $w$. 

\section{PROPOSED METHOD}

\subsection{Clustered data}

We let $\{(X_{i1},\ldots,X_{iJ}): i=1,\ldots,n\}$  be $J$-variate random variables from the joint distribution $F(x_1,\ldots,x_J) = P(X_{i1}<x_1,\ldots, X_{iJ}<x_J)$. 
Here, $X_{ij}, i=1,\ldots,n, j=1,\ldots,J$ is the $i$-th observation of $j$-th cluster.
That is, the $i$-th events for all clusters occur simultaneously. 
For example, $X_{ij}$ is the observation at $i$-th day of $j$-th location or site. 
Next, we assume that $X_{1j},\ldots,X_{n,j}$ are marginally distributed independently and identically as $F_j(x_j)=P(X_{ij}<x_j)$ for $j=1,\ldots,J$. 
In addition, we presume that $F_j\in{\cal D}(G_{\gamma_j}), j=1,\ldots,J$.
Similarly to Section 2, we can establish the tail model of $X_{ij}$ using the GPD for each cluster $j$. 
That is, for each $j$, given threshold $w_j\in\mathbb{R}$, $Y_{ij}=X_{ij}-w_j$ is assumed to be distributed as the GPD $P(Y_{ij}>y\mid Y_{ij}>0)\approx H((\gamma_j+1)y/\sigma_{j, w_j}\mid\gamma_j)$ with auxiliary sequence $\sigma_{j, w_j}>0$.

As described in this paper, we are interested in the extreme value inference of the marginal distributions for all clusters.  
Of course, it might be sufficient to fit the GPD separately for each cluster.  
However, for clustered data, constructing a model that incorporates or shares information across clusters is an important topic (Hosking and Wallis 1997; Rohrbeck and Tawn 2023).
In EVT, if the shape parameters are equal across clusters, i.e., $\gamma_j = \gamma_k = \gamma$ for $j \ne k$, then both $F_j$ and $F_k$ belong to the same domain of attraction, ${\cal D}(G_\gamma)$.  
This sameness of domain implies pooling information across clusters, which can lead to more stable estimates of the shape parameter and improve prediction of extreme value model. 
Therefore, we intend to group the clusters based on the similarity of their shape parameters $\gamma_1, \ldots, \gamma_J$, which aims to improve the marginal extreme value model for each cluster by incorporating information from other clusters.

In extreme value analysis for clustered data, the extreme dependence between clusters is also interesting topic. 
However, from copula theory, estimation of the dependence structure and the marginal effects are separable (Nelsen 2006; Genest and Segers 2009). 
Under this separability, the present study focuses on marginal inference for each cluster independently, without explicitly modeling the extreme dependence between clusters.

\subsection{Graph fused lasso penalization}

To group the shape parameters having common information, we use the graph fused lasso-type penalized (negative) log-likelihood method. 
Define $\vec{\gamma}=(\gamma_1,\ldots,\gamma_J)^\top$, $\vec{\sigma}=(\sigma_{1},\ldots,\sigma_{J})^\top$ and the negative log-likelihood function as
\[
\ell(\vec{\gamma},\vec{\sigma})=-\sum_{j=1}^J\sum_{i=1}^{n} I(Y_{ij}>0)\log h(Y_{ij}\mid\gamma_j,\sigma_{j}).
\] 
We next define the graph fused lasso penalty. 
We let $G=(V, E)$ be a graph with the vertices $V=\{1,\ldots,J\}$ and unordered edges $E=\{e_1,\ldots,e_M\}$ for $M>0$. 
The vertices correspond to the index of clusters. 
Edges have the role of connection by which clusters are linked.  
For any $1\leq m\leq M$, there exists $k, j\in V$ such that $e_m=(j,k)$. 
The loss function of graph fused lasso (e.g., Wang et al. 2016) is defined as
\begin{eqnarray}
\ell_F(\vec{\gamma}, \vec{\sigma})
&=&\ell(\vec{\gamma},\vec{\sigma})+\lambda\sum_{(j,k)\in E} v_{j,k}\lvert\gamma_j-\gamma_k\rvert, \label{GFLasso}
\end{eqnarray}
where $\lambda>0$ is the tuning parameter and $v_{j,k}$s are weights.
As described in this paper, the graph $G$ is user-specified. 
As the weight, the typical choice is $v_{j,k}=1$, which indicates the original fused lasso penalty (Tibshirani et al. 2005). 
The weight type of adaptive lasso (Zou 2006) uses $v_{j,k}=1/\lvert\tilde{\gamma}_j-\tilde{\gamma}_k\vert$, where $\tilde{\gamma}_j$ and $\tilde{\gamma}_k$ represent the cluster-wise estimates.
As described herein, we mainly use the derivative of folded-concave type penalty.
The weight is defined as $v_{j,k}=\rho_{\lambda,a}(\lvert\tilde{\gamma}_j-\tilde{\gamma}_k\rvert)$ where $\rho_{\lambda,a}(t), a>0$ is continuous except for the finite number of $t$ and satisfies $\rho_{\lambda,a}(t)=0$ for $\lvert t\vert>a\lambda$ and $\lim_{t\rightarrow 0+}\rho_{\lambda,a}(t)=1$.
As an example, the derivatives of SCAD (Fan and Li 2001) and MCP (Zhang 2010) are useful. 
The SCAD-type $\rho_\lambda$, which is the derivative of the original definition of the SCAD penalty, is defined as 
\[
\rho_{\lambda,a}(t)=\sgn(t) I(\lvert t\rvert<\lambda)+ \frac{(a\lambda-\lvert t\rvert)}{a-1}\sgn(t) I(\lambda<\lvert t\rvert<a\lambda)
\]
for the constant $a>0$. 
Here, $\sgn(x)$ is the signature of $x$. 
When $\lvert t\rvert>a\lambda$, we obtain {$\rho_{\lambda, a}(t)=0$, which implies that if $\lvert\tilde{\gamma}_j-\tilde{\gamma}_k\rvert>a\lambda$, then $v_{j,k}=0$. 
Then, the penalty of $j$-th and $k$-th cluster is removed. 
Therefore, these cannot be grouped as same value. 
Consequently, by our method, the graph $G$ is first set by users from pilot study or prior information of data.
Secondly, the graph validity is checked by the weight $\rho_{\lambda,a}(\cdot)$. 

The proposed method is fundamentally similar to the one-step local linear approximation method (Zou and Li 2008) using cluster-wise estimator as the initial estimator. 
Similarly to the original method of SCAD and that presented by Ke et al.\ (2015), we can consider the SCAD penalty and can update the penalty along with the estimator. 
However, for our experiments, the results obtained using iterative method were quite similar to those obtained using one-step optimization. 

Sharing information across clusters may stablize the extreme value model. 
However, in the proposed method, we do not consider grouping the scale parameters $(\sigma_1,\ldots,\sigma_J)$.  
From this, the difference between clusters can also be incorporated. 
In other words, if scale parameter also grouped, it loses the specific information of extreme value model for each cluster. 
Moreover, introducing an additional penalty for scale parameters complicates optimization of the objective function.  
For these reasons, we do not perform grouping of the scale parameters when using the proposed method.
In that sense, our method is related to the model worked by Einmahl et al. (2020).

\subsection{Tuning parameter selection}

We denote $\{(\hat{\gamma}_j, \hat{\sigma}_j) : j=1,\ldots, J\}$ as the minimizer of (\ref{GFLasso}). 
Using the proposed method (\ref{GFLasso}), the selection of the tuning parameter $\lambda$ is crucially important. 
For this study, we use BIC-type criteria as 
\begin{eqnarray*}
{\rm BIC}(\lambda) = -2 \sum_{j=1}^J\sum_{i=1}^{n} I(Y_{ij}>0)\log h(Y_{ij}\mid\hat{\gamma}_j, \hat{\sigma}_{j}) + (J+K(\lambda))\log\left(\sum_{j=1}^J n_j\right),
\end{eqnarray*}
where for each $j=1,\ldots,J$, $n_j =\sum_{i=1}^n I(Y_{ij}>0)$ denotes the effective sample size for the $j$-th cluster, and $K(\lambda)$ represents the number of groups conducted by the shape parameters. 
When $\lambda=0$, the grouping does not proceeded. Therefore, $K(0)=J$. 
If $\lambda\rightarrow\infty$, then all $\gamma_j$'s are grouped as one parameter, which indicates $K(\infty)=\kappa(G)$, where $\kappa(G)$ is the number of connected components from initial graph $G$. 
However, when $\lambda$ is very large, some clusters will be forced to be grouped irrespective of the goodness of fit to the likelihood.
In that sense, the BIC above balances the goodness of fit and the model complexity.
In fact, in the BIC above, the term $J\log\left(\sum_{j=1}^J n_j\right)$ is related to the number of scale parameters. It is meaningless to select $\lambda$. 
A similar type of BIC was used also by Dupius et al.\ (2023).

\subsection{Implementation}

To solve (\ref{GFLasso}), we use the alternating direction method of multipliers (ADMM). 
The general properties of ADMM are described by Boyd et al. (2011).
Letting $D$ be $M\times J$ oriented incidence matrix of the graph $G$ with elements $\{-1, 0, 1\}$, where $M$ is the number of edge $E$, then for $m=1,\ldots,M$, if $e_m=(j,k)$, the $m$-th row vector of $D$ consists $1$ for $j$-th element, - for the $k$-th element, and 0 otherwise, where the orientations of signs are arbitrary. 
We next define the $M$-diagonal matrix $V$ with $(m,m)$-elements $\tilde{v}_m = v_{j,k}$ for $e_m=(j,k)$. 
That is, we obtain
\[
V D\vec{\gamma} = (v_{j,k} (\gamma_j-\gamma_k), (j,k)\in E).
\]
Let $\|\cdot\|_1$ be $L_1$-norm. 
Then, the loss function of (\ref{GFLasso}) can be written as
\[
\ell_F(\vec{\gamma},\vec{\sigma}) = \ell(\vec{\gamma},\vec{\sigma})+ \lambda \|V D\vec{\gamma}\|_1.
\]
The equality-constrained optimization problem is 
\begin{eqnarray*}
&&{\rm minimize}\ \ \ell(\vec{\gamma},\vec{\sigma})+ \lambda \|V \vec{u}\|_1\\
&&{\rm subject\ to}\ \ D\vec{\gamma} = \vec{u}. 
\end{eqnarray*}
Its augmented Lagrangian function is obtainable as
\begin{eqnarray*}
\ell_F(\vec{\gamma},\vec{\sigma},\vec{u},\vec{s})
=
\ell(\vec{\gamma},\vec{\sigma}) + \lambda \|V \vec{u}\|_1 +\vec{s}^\top(D\vec{\gamma}-\vec{u})+\frac{\rho}{2}\|D\vec{\gamma}-\vec{u}\|^2
\end{eqnarray*}
with the unknown vector $\vec{u}, \vec{s}\in\mathbb{R}^{M}$ and the additional tuning parameter (step size) $\rho>0$. 
The optimization problem of ADMM can be established as shown below. 
Letting $(\vec{\gamma}^{(0)}, \vec{\sigma}^{(0)})$ be the initial estimator of $(\vec{\gamma}, \vec{\sigma})$, then the estimators $(\vec{\gamma}^{(t)}, \vec{\sigma}^{(t)}, \vec{u}^{(t)}, \vec{s}^{(t)})$ of parameters for $t$-step iteration are
\begin{eqnarray}
 \vec{u}^{(t)}
 &=& \underset{\vec{u}}{\argmin}\ \     \{\vec{s}^{(t-1)}\}^\top(D\vec{\gamma}^{(t-1)}-\vec{u})+\frac{\rho}{2}\|D\vec{\gamma}^{(t-1)}-\vec{u}\|^2 +\lambda \|V \vec{u}\|_1,\nonumber\\
 \vec{s}^{(t)}
 &=&
\vec{s}^{(t-1)} +\rho(D\vec{\gamma}^{(t-1)}-\vec{u}^{(t)}),\nonumber\\
(\vec{\gamma}^{(t)}, \vec{\sigma}^{(t)})
&=& \underset{(\vec{\gamma},\vec{\sigma})}{\argmin}\ \   \ell(\vec{\gamma},\vec{\sigma})   +\{\vec{s}^{(t)}\}^\top(D\vec{\gamma}-\vec{u}^{(t)})+\frac{\rho}{2}\|D\vec{\gamma}-\vec{u}^{(t)}\|^2. \label{lossPara}
\end{eqnarray}
Each element of $\vec{u}^{(t)}$: $u_m^{(t)}$ is obtainable by closed form as 
\[
u_m^{(t)} = \sgn\left((D\vec{\gamma}^{(t-1)})_m +\frac{s_m^{(t-1)}}{\rho}\right)\left(\left|(D\vec{\gamma}^{(t-1)})_m +\frac{s_m^{(t-1)}}{\rho}\right|-\frac{\lambda w_{m}}{\rho}\right)_+,\ \ m=1,\ldots, M,
\]
where $(D\vec{\gamma}^{(t-1)})_m$ and $s_m^{(t-1)}$ respectively represent the $m$-th element of $(D\vec{\gamma}^{(t-1)})$ and $\vec{s}^{(t-1)}$.

In our method, step size $\rho$ is varied in each iteration as $\rho=\rho^{(t)}$ because the estimator is heavily dependent on the choice of $\rho$. 
The step size $\rho^{(t)}$ is defined by the method of He et al. (2000) and Wang and Liao (2001), defined as 
\[
\rho^{(t)}
=
\left\{
\begin{array}{lc}
2 \rho^{(t-1)}, & \|\vec{\eta}^{(t-1)}\|>\mu \| \vec{\xi}^{(t-1)}\|,\\
2^{-1}\rho^{(t-1)}, & \|\vec{\xi}^{(t-1)}\| > \mu \|\vec{\eta}^{(t-1)}\|,\\
\rho^{(t-1)}, & {\rm otherwise},
\end{array}
\right.
\]
where $\vec{\eta}^{(t)} = D\vec{\gamma}^{(t)}-\vec{u}^{(t)}$, $\vec{\xi}^{(t)}= \rho^{(t-1)} D(\vec{\gamma}^{(t)}-\vec{\gamma}^{(t-1)})$ and $\mu>0$. 
For Sections 5 and 6, we used $\mu=5$.  
Roughly speaking, if the norm $\|\vec{\eta}^{(t-1)} \|$ is large, then the constraint $D\vec{\gamma} = \vec{u}$ is not well satisfied, implying insufficient sparsity in $D\vec{\gamma}$. Therefore, the last term of (\ref{lossPara}) is emphasized by increasing $\rho^{(t)}$.
However, a large norm $\|\vec{\xi}^{(t-1)}\|$ implies poor goodness-of-fit to the model. In this case, decreasing $\rho^{(t)}$ reduces the influence of the last term of (\ref{lossPara}).
Consequently, dynamically updating the step size $\rho^{(t)}$ balances the sparsity of $D\vec{\gamma}$ and the goodness-of-fit of the model.

If the algorithm is stopped for $t$-th iteration, then we define the estimator $(\hat{\vec{\gamma}},\hat{\vec{\sigma}})=(\vec{\gamma}^{(t)}, \vec{\sigma}^{(t)})$. 
The grouped clusters are identifiable by the zero elements of $\hat{\vec{u}}=D\hat{\vec{\gamma}}$.

\section{ASYMPTOTIC THEORY}

\subsection{General conditions}

To investigate the theoretical properties of the proposed estimator, some technical conditions must be established. 
The general conditions are listed as follows to establish the asymptotic theory of the estimator. 
\begin{enumerate}
\item[(C1)] There exist constants $\gamma_*,\gamma^*\in(-1/2,\infty)$ such that $\gamma_*< \min_j \gamma_j < \max_j \gamma_j <\gamma^*$. 
\item[(C2)] For $j=1,\ldots,J$, $w_j\rightarrow x_j^*$, $n_j\rightarrow \infty$ and $n_j/n\stackrel{P}{\rightarrow} 0$ as $n\rightarrow\infty$. 
\end{enumerate}

(C1) is the standard condition of the GPD. 
If the shape parameter is less than $-1/2$, then the Fisher information matrix based on GPD is well known not to be positive definite, which implies that the shape parameter estimator is unstable. 
Consequently, $\gamma_*>-1/2$ is important.  
(C2) is also standard in the asymptotic analysis in the EVT. 
Actually from (C2), the effective sample size $n_j$ can be regarded as the intermediate order sequence in the EVT (de Haan and Ferreira 2006, chap. 2). 

In addition, we state the convergence rate of the distribution function converges to GPD. 
for $F_j\in{\cal D}(G_{\gamma_j})$, we now assume that, for the sequence of threshold $w_{j}$, there exists an auxiliary function $\alpha_j(w_j)$ satisfying $\alpha_j(w_j)\rightarrow 0$ as $w_j\rightarrow x_j^*:=\sup\{x: F_j(x)<1\}$ such that for any $x>0$,
\begin{eqnarray}
\lim_{w_j\rightarrow x_j^*}\left|\frac{\frac{1-F_j(w_j+x)}{1-F_j(w_j)}- H((\gamma_j+1)x/\sigma_{j,w_j} \mid \gamma_j)}{\alpha_j(w_j)}\right| = O(1). \label{SecondOrderSimple}
\end{eqnarray}
The condition (\ref{SecondOrderSimple}) is detailed as (\ref{SecondOrder}) given in Appendix. 
This condition is closely related to the second-order condition in Extreme Value Theory (EVT) (e.g., de Haan and Ferreira, 2006, Chapter 2) and is essential for characterizing the asymptotic bias of estimators in extreme value models. 
However, since the primary focus of this study is not bias reduction, we employ the simplified notation above. 
Specifically, in Theorems \ref{Oracle} and \ref{MainTheorem}, we adopt a variance-dominated assumption, where the term $\alpha_j(w_j)$ is assumed to vanish at a sufficiently fast rate.

\subsection{Oracle estimator}

To clarify benefits of the proposed grouping shape parameter method, we first show the asymptotic result of the estimator of shape parameter under the condition that grouping is already done.
Let ${\cal A}$ be the index set of one of the true groups. 
For simplicity, we write $\gamma=\gamma_j,\ j\in{\cal A}$.
That is, for all $j\in {\cal A}$, $F_j\in{\cal D}(G_\gamma)$. 
Without loss of generality, we can assume that ${\cal A}=\{1,\ldots,J\}$.
We then construct the estimator $\{\hat{\gamma},\hat{\sigma}_j: j\in{\cal A}\}$ as the maximizer of
\begin{eqnarray*}
\ell_{{\cal A}}(\gamma,\sigma_1,\ldots,\sigma_J) = \sum_{j\in{\cal A}} \sum_{i=1}^nI(Y_{ij}>\omega_j)\log h(Y_{ij}\mid\gamma, \sigma_{j}).
\end{eqnarray*}

Letting $n_{{\cal A}} =\sum_{j\in{\cal A}} n_j$, we can then obtain the following theorem.

\begin{theorem}\label{Oracle}
Suppose that {\rm (C1)} and {\rm (C2)}, and $n_{\cal A}^{1/2}\max_{j\in{\cal A}}\alpha_j(w_j)\rightarrow 0$.
Then, as $n\rightarrow \infty$,
\[
\{n_{{\cal A}}\}^{1/2}(\hat{\gamma}-\gamma)\xrightarrow{D} N(0, (\gamma+1)^2)
\]
and 
\[
n_j^{1/2}\left(\frac{\hat{\sigma}_j}{\sigma_{j,w_j}}-1\right)\xrightarrow{D} N(0, 2\gamma+1),\ \ j\in{\cal A}.
\]
Furthermore, for any $j\in{\cal A}$, $\{n_{{\cal A}}\}^{1/2} Cov(\hat{\gamma},\hat{\sigma}_j)\rightarrow 0$ as $n\rightarrow \infty$.
\end{theorem}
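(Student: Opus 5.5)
The plan is the standard M-estimation argument for the maximizer of $\ell_{\cal A}$ with parameter $\vec\theta=(\gamma,\sigma_1,\ldots,\sigma_J)$, done in the local parametrization $\sigma_j=\sigma_{j,w_j}(1+t_j)$. The score and observed information are normalized by the block-diagonal matrix
\[
\Gamma_n={\rm diag}(n_{\cal A}^{1/2}, n_1^{1/2},\ldots,n_J^{1/2}).
\]
Conditionally on the exceedance indicators, the $Y_{ij}$ with $Y_{ij}>0$ form, for fixed $j$, an i.i.d.\ sample of size $n_j$ from the conditional excess law $F_{j,w_j}$. By (\ref{SecondOrderSimple}), this law differs from the GPD $h(\cdot\mid\gamma,\sigma_{j,w_j})$ by $O(\alpha_j(w_j))$. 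I would first compute the exact GPD moments of the scores. Write $z=(\gamma+1)y/\sigma$. Under the reparametrization (\ref{rescaleGP}), the per-observation Fisher information for $(\gamma,\log\sigma)$ is diagonal, with entries $(\gamma+1)^{-2}$ and $(2\gamma+1)^{-1}$; this is the orthogonality stated in Section 2 and is finite precisely because $\gamma>-1/2$ by (C1). The off-diagonal entry vanishes.

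Next I would expand the score at the truth. The $\gamma$-component is $\sum_{j}\sum_i \partial_\gamma\log h(Y_{ij}\mid\gamma,\sigma_{j,w_j})$. Each cluster contributes a sum of $n_j$ terms with mean $O(n_j\alpha_j(w_j))$ and variance $n_j(\gamma+1)^{-2}(1+o(1))$. The $\sigma_j$-component involves only cluster $j$. The bias after scaling by $n_{\cal A}^{-1/2}$ is $O(n_{\cal A}^{1/2}\max_j\alpha_j)\to0$ by assumption, and similarly $n_j^{-1/2}\cdot n_j\alpha_j\to 0$ for the scale components.

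The main obstacle is asymptotic normality of the pooled $\gamma$-score. The clusters are dependent across $j$ (same $i$), so this is not a sum of independent terms. I would handle it by reading ``marginal inference'' as the paper does via the copula separability remark. That is, I would treat the cross-cluster covariances of exceedance scores as negligible: $Cov(\partial_\gamma\log h(Y_{ij}),\partial_\gamma\log h(Y_{ik}))$ restricted to joint exceedances contributes $o(n_{\cal A})$, which holds if joint exceedance counts are $o(n_j)$, i.e.\ asymptotic tail independence. Otherwise I would simply assume it, noting that this matches the independence framework. Then a Lindeberg CLT for the triangular array of independent-in-$i$ vectors gives
\[
\Gamma_n^{-1}\nabla\ell_{\cal A}\xrightarrow{D}N\bigl(0,{\rm diag}((\gamma+1)^{-2},(2\gamma+1)^{-1},\ldots)\bigr).
\]
Lindeberg follows from finiteness of the $(2+\delta)$ moments of the GPD scores for $\gamma>-1/2$ (small $\delta$), uniformly over $\gamma\in(\gamma_*,\gamma^*)$.

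Finally, I would show $\Gamma_n^{-1}\nabla^2\ell_{\cal A}(\vec\theta)\Gamma_n^{-1}\to -{\rm diag}(\ldots)$ uniformly on shrinking neighbourhoods $\|\Gamma_n(\vec\theta-\vec\theta_0)\|\le C$. This uses a law of large numbers for each cluster and a local Lipschitz bound on third derivatives, again valid since $\gamma>-1/2$. The mixed $\gamma$–$\sigma_j$ block, scaled by $n_{\cal A}^{-1/2}n_j^{-1/2}$, is $n_j\cdot o(1)/(n_{\cal A}n_j)^{1/2}\to 0$ by orthogonality. Existence and consistency of a local maximizer follow from the usual concavity-on-a-ball argument. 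A Taylor expansion then gives $\Gamma_n(\hat{\vec\theta}-\vec\theta_0)=I^{-1}\Gamma_n^{-1}\nabla\ell_{\cal A}+o_P(1)$. This yields the two stated limits with variances $(\gamma+1)^2$ and $2\gamma+1$. The block-diagonal limit gives the covariance claim: $Cov(\hat\gamma,\hat\sigma_j/\sigma_{j,w_j})$ is $o((n_{\cal A}n_j)^{-1/2})$, hence $n_{\cal A}^{1/2}Cov\to0$ (absorbing $\sigma_{j,w_j}$ scaling and $n_j\to\infty$).
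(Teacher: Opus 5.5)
Your proposal is correct and follows essentially the paper's route. The score bias is controlled through the second-order expansion together with $n_{\cal A}^{1/2}\max_j\alpha_j(w_j)\to0$, the variances come from the orthogonal GPD Fisher information (Lemma 1), a CLT/Cram\'er--Wold step handles the cluster-wise scores, and orthogonality of the $\gamma$--$\sigma_j$ block gives the covariance claim; your Hessian-on-shrinking-neighbourhoods and Taylor step fills in the passage from the score CLT to the estimator, which the paper leaves implicit. The cross-cluster negligibility you state explicitly is what the paper tacitly assumes when it asserts that the normalized cluster-wise scores have diagonal limiting covariance (a multiple of $I_J$), so it is an assumption the theorem genuinely needs rather than a defect of your argument.
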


From Theorem \ref{Oracle}, is readily apparent that the rate of convergence of the estimator $\hat{\gamma}$ is higher than that of each $\tilde{\gamma}_j, j\in{\cal A}$ if ${\cal A}$ contains at least two indices of clusters. 
Consequently, by grouping shape parameters among clusters, the accuracy of the estimator of the shape parameter is increasing drastically. 
We also find that the grouping shape parameter does not affect to the asymptotic rate of the scale parameter.
However, to obtain the confidence interval of $\sigma_j$, it is necessary to estimate $\gamma$, which appear as the standard deviation of $\hat{\sigma}_j$. 
Therefore, the accurate estimation of $\gamma$ also has a good influence on prediction of the scale parameter.

We briefly describe the relation between the grouped estimator $\hat{\gamma}$ and the cluster-wise estimators $\tilde{\gamma}_j$, $j \in \mathcal{A}$.
The likelihood function of the GPD is well known to be asymmetric with respect to the shape parameter (and also with respect to the scale parameter).
Specifically, for each $j$, the likelihood decreases much more rapidly when the shape parameter moves toward smaller values than when it moves toward larger values around the MLE $\tilde{\gamma}_j$.

As a consequence, when shape parameters are grouped across clusters, the grouped estimator tends to be shifted toward larger values to avoid a great loss in likelihood.
In other words, the set $\{ j \in \mathcal{A} : \hat{\gamma} > \tilde{\gamma}_j \}$ tends to be large.
This mechanism leads to a more conservative model for many clusters.

In application, it is interesting to investigate the probability of a rare event occurring, $P(X_{ij}>x)$ for $x >w_j$, instead of the estimator of parameters itself. 
From GPD modeling, we have $P(X_{ij}>x)= P(X_{ij}>x\mid X_{ij}>w_j)P(X_{ij}>w_j)\approx H(x-w_j\mid \gamma_j,\sigma_j) P(X_{ij}>w_j)$. 
Therefore, we can evaluate $P(X_{ij}>x)$ by replacing $(\gamma,\sigma_j)$ with its estimator. 
We are also interested in evaluating the quantile of $X_{ij}$ at region of extreme value, which is called the return level. 
For any $\tau\in(0,1)$, the $\tau$-th return level of $X_{ij}$ is defined as $R(\tau)= \inf\{t : P(X_{ij}>t) < \tau\}$.
If $\tau < P(X_{ij}>w_j)$, then the continuity of $H$ yields that 
\[
R(\tau) = R(\tau\mid\gamma,\sigma_j) \approx \{t :  H(y\mid\gamma,\sigma_j)P(X_{ij}>w_j) = \tau\} =w_j + \frac{\sigma_j}{\gamma(\gamma+1)} \left\{ \left( \frac{\tau }{P(X_{ij}>w_j)}\right)^{-\gamma} -1\right\}.
\]
We write $\xi_j =P(X_{ij}>w_j)$. 
Then, we obtain $\hat{\xi}_j=n_j/n$ and $\sqrt{n}(\hat{\xi}_j-\xi_j)\stackrel{D}{\to} N(0,\xi(1-\xi))$. 
Because $R$ depends on $(\gamma,\sigma_j, \xi_j)$, the return level can be estimated by replacing unknown parameters with these estimators.
By the delta method, we obtain 
\begin{eqnarray*}
&&R(\tau\mid\hat{\gamma},\hat{\sigma}_j, \hat{\xi}_j)\\
&&\approx R(\tau\mid\gamma,\sigma_j, \xi_j) + R_\gamma(\tau\mid\gamma,\sigma_j, \xi_j) (\hat{\gamma}-\gamma) + R_\sigma(\tau\mid\gamma,\sigma_j, \xi_j) (\hat{\sigma}_j-\sigma_j)\\
&&\quad + R_{\xi}(\tau\mid\gamma, \sigma_j, \xi_j)(\hat{\xi}_j-\xi_j),
\end{eqnarray*}
where $R_\gamma(\tau\mid\gamma,\sigma_j, \xi_j) = \partial R(\tau\mid\gamma,\sigma_j,\xi)/\partial \gamma$, $R_\sigma(\tau\mid\gamma,\sigma_j,\xi_j)= \partial R(\tau\mid\gamma,\sigma_j,\xi_j)/\partial \sigma_j$, and $R_\xi(\tau\mid\gamma,\sigma_j, \xi_j) = \partial R(\tau\mid\gamma,\sigma_j,\xi)/\partial \xi_j$ (see Section 4.3.3 of Coles 2001). 
Letting $z_p$ be the $100(1-p)\%$ quantile of standard normal distribution such that $P(Z>z_p)=p$ for $Z\sim N(0,1)$, then from Theorem \ref{Oracle}, the $100 (1-p)\%$ confidence interval of $R(\tau\mid\hat{\gamma}, \hat{\sigma}_j)$ is obtained as
\begin{eqnarray}
&&R(\tau\mid\hat{\gamma}, \hat{\sigma}_j)\pm z_{p/2}\sqrt{\frac{\hat{R}_\gamma^2(\hat{\gamma}_j+1)^2}{n_{{\cal A}}}+\frac{\hat{R}_\sigma^2 \hat{\sigma}_j^2(2\hat{\gamma}+1)}{n_j} + \frac{\hat{R}_\xi^2\hat{\xi}_j(1-\hat{\xi}_j)}{n} }, \label{returnlevel}
\end{eqnarray}
where $\hat{R}_\gamma = R_\gamma(\tau\mid\hat{\gamma},\hat{\sigma}_j,\hat{\xi}_j)$, and definitions of $\hat{R}_\sigma, \hat{R}_\xi$ are similar. 
When the shape parameters are not grouped, the term $n_{{\cal A}}$ in (\ref{returnlevel}) is replaced by $n_j$.
Consequently, the confidence interval obtained from the grouped estimator becomes shorter than that from the cluster-wise estimator, indicating that the estimation error of the return level is reduced.
Actually, the return level $R(\tau\mid \gamma, \sigma)$ depends exponentially on the shape parameter $\gamma$, whereas it depends linearly on the scale parameter $\sigma$.  
As a result, the sensitivity of $R(\tau\mid \gamma, \sigma_j)$ to $\gamma$ is much greater than to $\sigma_j$. 
In (\ref{returnlevel}), the rate of $\hat{\gamma}$, $n_{\mathcal{A}}$, which is larger than $n_j$ (rate of cluster-wise estimator), helps stabilize the behavior of $R_\gamma$.

Theorem \ref{Oracle} is useful if grouping is correct. 
However, Theorem \ref{Oracle} is only a theoretical result because we are unable to detect ${\cal A}$ completely in practice. 
As explained in the following section, we establish the asymptotic results obtained using the proposed estimator that creates a new group of clusters by graph fused lasso.

\subsection{Graph fused lasso estimator}

Let ${\cal A}_1,\ldots, {\cal A}_K$ be the index sets of true groups of shape parameters such that for $k=1,\ldots,K$, if $\alpha,\beta\in {\cal A}_k$, we obtain $\gamma_\alpha=\gamma_\beta=:\gamma$ and $F_\alpha,F_\beta\in{\cal D}(G_\gamma)$. 
When ${\cal A}_j=\{j\}$ and $K=J$, the distributions for all clusters belong to a domain of attraction with different shape parameters. 
The cluster-wise estimator is obtained under this condition. 
As described in this paper, we generally assume that $K<J$.  
The penalty in (\ref{GFLasso}) can be expressed as
\[
\lambda\sum_{(j,k)\in E}v_{j,k}|\gamma_j-\gamma_k| = \frac{\lambda}{2}\sum_{j,k=1}^J c_{j,k}v_{j,k}|\gamma_j-\gamma_k|
\]
where $c_{j,k}=1$ if there exists $m$ such that $e_{m}=(j,k)$, and $c_{j,k}=0$ if there is no edge between indices $j$ and $k$.
Therefore, we obtain $c_{j,k}=c_{k,j}$ and $c_{j,j}=0$ for $j,k=1,\ldots,J$. 
The term 1/2 above is absorbed in the tuning parameter $\lambda$.

Let $G_k=\{{\cal A}_k, E_k\}$ be the subgraph induced by the set of nodes ${\cal A}_k$, where $E_k=\{(i, j)\in{\cal A}_k\times{\cal A}_k: c_{i, j}=1\}.$
Here, $G$ and $G_k$ are undirected graphs. 
Then the sets ${\cal B}_{k,\ell}$ for $\ell=1,\ldots,L_k$ are defined as presented below.
\begin{enumerate}
\item ${\cal B}_{k,1},\ldots,{\cal B}_{k,L_k}$ are the distinct connected components of the subgraph $G_k$.
\item ${\cal B}_{k,1},\ldots,{\cal B}_{k,L_k}$ satisfy
\[
\bigcup_{\ell=1}^{L_k} {\cal B}_{k,\ell} ={\cal A}_k\ \ {\rm and}\ \ {\cal B}_{k,s}\cap{\cal B}_{k,t}=\emptyset\ {\rm for}\ s\not=t.
\]
\end{enumerate}
For example, we assume that ${\cal A}_1=\{1,2,3,4,5,6\}$ and $c_{1,2}=c_{2,3}=1$, $c_{1,3}=0$, $c_{4,5}=1$, $c_{j,4}=c_{j,5}=0$ for $j=1,2,3$, and $c_{i,6}=0$ for $i=1,\ldots,5$. 
Then, the graph is $G_1=({\cal A}_1, E_1)$ with $E_1=\{(1,2), (2,3), (4,5)\}$. Subsequently, we can construct ${\cal B}_{1,1}=\{1,2,3\}$, ${\cal B}_{1,2}=\{4,5\}$ and ${\cal B}_{1,3}=\{6\}$. 
It is noteworthy in this case that $\gamma_1=\cdots=\gamma_6$ although ${\cal B}_{1,i}\cap {\cal B}_{1,j}=\emptyset$ for $i,j=1,2,3\ (i\not=j)$.

The objective loss function (\ref{GFLasso}) can be written as 
\begin{eqnarray}
\ell_F(\vec{\gamma}, \vec{\sigma})
&=&\ell(\vec{\gamma},\vec{\sigma})+\lambda \sum_{k=1}^K\sum_{\ell=1}^{L_k}\sum_{i,j\in{\cal B}_{k,\ell}}c_{i,j}v_{j,k}|\gamma_i-\gamma_j| +\lambda \sum_{(i,j)\in{\cal A}^*}c_{i,j}v_{j,k}|\gamma_i-\gamma_j|, \label{GFLassoTheory}
\end{eqnarray}
where ${\cal A}^*=\{(\alpha,\beta) | {}^\exists k,h\ (k\not=h)\ {\rm s.t.}\ \alpha\in{\cal A}_{k}\ {\rm and}\ \beta\in{\cal A}_h\}$. 
For $(i,j)\in{\cal A}^*$ if $c_{i,j}=1$, meaning that the link of $\gamma_i$ and $\gamma_j$ is mis-specified. 
Because the graph is user-specified, it does not guarantee that clusters sharing the same value of shape parameter are linked correctly.
In other words, it is difficult to produce a graph to connect all clusters in ${\cal A}_k$ for $k=1,\ldots,K$. 
Therefore, the set ${\cal A}^*$, which indicates the misconnection, is generally non-empty. 
If we want to make the graph so that $L_k=1$ and ${\cal B}_{k,1}={\cal A}_k$ for $k=1,\ldots,K$, then we have no choice but to consider a complete graph that includes all possible pairs of clusters. 
However, the complete graph engenders heavy tasks of computation (She 2010), rendering its use unrealistic in practice.

We further let $\vec{\gamma}_{{\rm oracle}}=(\gamma_{{\cal B}_{k,\ell}}, \ell=1,\ldots,L_k, k=1,\ldots,K)$.
Then, the loss function for oracle situation is
\begin{eqnarray}
\ell_{{\rm oracle}}(\vec{\gamma}_{{\rm oracle}},\vec{\sigma})
&=&-\sum_{k=1}^K \sum_{\ell=1}^{L_k}\sum_{j\in{\cal B}_{k,\ell}} \sum_{i=1}^{n_j} \log h(Y_{ij}\mid\gamma_{{\cal B}_{k,\ell}},\sigma_{j}).
\end{eqnarray}
Then, the estimator obtained by minimizing $\ell_{{\rm oracle}}$ is defined as 
\[
\hat{\vec{\gamma}}_{{\rm oracle}}=(\hat{\gamma}_{{\cal B}_{k,\ell}}, \ell=1,\ldots,L_k,k=1,\ldots,K)
\]
and 
\[
\hat{\vec{\sigma}}_{{\rm oracle}}=(\hat{\sigma}_{{\cal B}_{k,\ell}, j}, j\in {\cal B}_{k,\ell}, \ell=1,\ldots,L_k,k=1,\ldots,K)
\]
The asymptotic property of $(\hat{\gamma}_{{\cal B}_{k,\ell}}, \hat{\sigma}_{{\cal B}_{k,\ell}, j})$ is obtained using theorem \ref{Oracle} with the true set ${\cal B}_{k,\ell}$.

Whether the proposed estimator attains the oracle estimator or not is important. 
The following theorem shows such an efficient property of the estimator. 
Before stating the theorem, we consider the following conditions.

\begin{itemize}
\item[(C3)]  There exists $\delta>0$ such that $\min_{j\not= k}\min_{\alpha\in{\cal A}_j,\beta\in{\cal A}_k}|\gamma_\alpha-\gamma_\beta|>\delta$.
\item[(C4)] The weight function $p_{\lambda,a}(\cdot)$ is a monotonically decreasing function, $\lim_{t\rightarrow +0}p_{\lambda,a}(t)=1$ and $p_{\lambda,a}(t)=0$ for $t\geq a\lambda$ with $\lambda,a>0$. 
\end{itemize}

We define $n_{{\cal B}_{k,\ell}}=\sum_{j\in{\cal B}_{k,\ell}} n_j$ for $\ell=1,\ldots,L_k, k=1,\ldots,K$.

\begin{theorem}\label{MainTheorem}
Suppose that {\rm (C1)}--{\rm (C4)}, and for $\ell=1,\ldots,L_k, k=1,\ldots,K$, $n_{{\cal B}_{k,\ell}}^{1/2}\max_{j\in{\cal B}_{k,\ell}}
\alpha_j(w_j) \rightarrow 0$ as $n\rightarrow\infty$. 
Furthermore, suppose that $\lambda$ satisfies $n\lambda\rightarrow\infty$ and $\max_{k,\ell}|{\cal B}_{k,\ell}|/(n\lambda)\rightarrow 0$. 
Then, $\ell_F(\vec{\gamma},\vec{\sigma})$ has a strictly local minimizer $(\hat{\vec{\gamma}},\hat{\vec{\sigma}})$ such that 
\[
\hat{\vec{\gamma}}=\hat{\vec{\gamma}}_{{\rm oracle}}, \ \ \hat{\vec{\sigma}}=\hat{\vec{\sigma}}_{{\rm oracle}}
\]
with probability tending to one. 
\end{theorem}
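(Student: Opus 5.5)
We follow the oracle-property argument for fused-type penalties used by Ke et al. (2015) and Tang and Song (2016). The plan is to show that, with probability tending to one, the oracle estimator $(\hat{\vec{\gamma}}_{\rm oracle},\hat{\vec{\sigma}}_{\rm oracle})$ (expanded to a $J$-vector by setting $\hat\gamma_j=\hat\gamma_{{\cal B}_{k,\ell}}$ for $j\in{\cal B}_{k,\ell}$) is a strict local minimizer of $\ell_F$. The argument has three steps.

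\emph{Step 1: the weights on the two kinds of edges.} Theorem \ref{Oracle} with a single cluster gives $\tilde\gamma_j-\gamma_j=O_P(n_j^{-1/2})$ for each $j$.
\begin{itemize}
\item[(a)] For a misconnected edge $(i,j)\in{\cal A}^*$, (C3) gives $|\tilde\gamma_i-\tilde\gamma_j|>\delta/2$ with probability tending to one. Provided $a\lambda<\delta/2$ (we need $\lambda\to0$, which I will add or check is implied), (C4) gives $v_{i,j}=0$. So the last sum in (\ref{GFLassoTheory}) vanishes with probability tending to one.
\item[(b)] For edges inside some ${\cal B}_{k,\ell}$, $|\tilde\gamma_i-\tilde\gamma_j|=O_P(n_j^{-1/2})$. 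I intend to use $\lambda\gg n_j^{-1/2}$, which I read as implicit in $n\lambda\to\infty$ on the per-sample scale. Then $v_{i,j}\to1$, and in particular $v_{i,j}\ge 1/2$ with probability tending to one.
\end{itemize}
On this event, $\ell_F$ decouples into independent problems, one per component ${\cal B}_{k,\ell}$:
\[
\ell(\vec\gamma,\vec\sigma)+\lambda\sum_{k,\ell}\sum_{i,j\in{\cal B}_{k,\ell}}c_{i,j}v_{i,j}|\gamma_i-\gamma_j|.
\]

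\emph{Step 2: subgradient (KKT) conditions at the oracle point.} Fix a component ${\cal B}={\cal B}_{k,\ell}$. The oracle point has all $\gamma_j$, $j\in{\cal B}$, equal to $\hat\gamma_{\cal B}$, and satisfies $\partial\ell/\partial\sigma_j=0$ and $\sum_{j\in{\cal B}}\partial\ell/\partial\gamma_j=0$. Write $S_j=\partial\ell/\partial\gamma_j$ at this point. We need $S_j+\lambda\sum_i c_{j,i}v_{j,i}z_{j,i}=0$ with antisymmetric $z_{j,i}=-z_{i,j}\in[-1,1]$. This is a flow problem on the connected graph $G_{k,\ell}$ with supplies $-S_j$, which sum to zero. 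Route the flow along a spanning tree. The resulting flows are bounded by $\sum_{j\in{\cal B}}|S_j|$. So it suffices to show
\[
\max_j|S_j|\cdot|{\cal B}|\le \lambda\min v/ (1+o(1)).
\]
Using Theorem \ref{Oracle} and a Taylor expansion around the truth, $S_j=O_P(n_j^{1/2})$ plus an $O_P(n_j\cdot n_{\cal B}^{-1/2})$ correction. Normalizing the likelihood by $n$ converts this into $S_j/n=o_P(1/n\cdot n\lambda)$. The condition $\max|{\cal B}|/(n\lambda)\to0$ then yields strict inequality $|z_{j,i}|<1$ with probability tending to one. Tracking the correct normalization here, so that the stated rates $n\lambda\to\infty$ and $|{\cal B}|/(n\lambda)\to0$ are exactly what is needed, is the main obstacle. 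I would first try writing the per-observation average loss and bounding $S_j$ via a CLT plus the bias condition $n_{\cal B}^{1/2}\max\alpha_j\to0$.

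\emph{Step 3: strictness of the local minimum.} Take any small perturbation $(\vec\gamma,\vec\sigma)$. Decompose $\vec\gamma$ into its projection onto the block-constant subspace and an orthogonal part.
\begin{itemize}
\item On the block-constant subspace, $\ell_F$ equals $\ell_{\rm oracle}$. The oracle point is a strict local minimizer there, because the Hessian is positive definite by (C1) with probability tending to one.
\item For the orthogonal part, the strict subgradient inequality $|z|<1$ from Step 2 gives a first-order increase of at least $c\lambda\sum|\gamma_i-\gamma_j|$. This dominates the second-order likelihood terms in a shrinking neighborhood.
\end{itemize}
Combining the two parts gives $\ell_F(\vec\gamma,\vec\sigma)>\ell_F(\hat{\vec\gamma}_{\rm oracle},\hat{\vec\sigma}_{\rm oracle})$ for every nearby point other than the oracle point. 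This completes the proof.
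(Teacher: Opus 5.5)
Your Step 1 matches the paper's. From there you take a different route. The paper never builds a subgradient certificate. It works in a shrinking neighbourhood $\mathcal{M}_n(t_n)$ of the oracle point, replaces $\vec{\gamma}$ by its projection $\vec{\gamma}^*$ onto the block-constant space (averages over each $\mathcal{B}_{k,\ell}$), and shows $\ell_F(\vec{\gamma},\vec{\sigma})\ge\ell_F(\vec{\gamma}^*,\vec{\sigma})\ge\ell_F(\hat{\vec{\gamma}}_{\rm oracle},\hat{\vec{\sigma}}_{\rm oracle})$.

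The second inequality holds because the surviving penalty vanishes on block-constant vectors and the oracle minimizes $\ell$ there. The first sets a connectivity lower bound $n\lambda a_0\max_{i,j\in\mathcal{B}_{k,\ell}}|\gamma_i-\gamma_j|$ on the within-component penalty against a mean-value bound (size of the $\gamma_j$-score)$\,\times|\mathcal{B}_{k,\ell}|\max_{i,j}|\gamma_i-\gamma_j|$ on the likelihood change. Your spanning-tree flow is the dual form of that connectivity bound.

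What your version gains is an explicit strictness argument. The certificate makes the first-order term plus the penalty increment nonnegative. The Taylor remainder is strictly positive because the separable likelihood has a positive definite $2\times 2$ block-diagonal Hessian near the oracle point with probability tending to one. The paper instead appeals to positive definiteness of the Hessian of the nonsmooth $\ell_F$. The paper's version, in turn, avoids solving for dual variables.

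The gap is the rate in Step 2. You flag it as the main obstacle and then close it with the wrong condition. At the oracle point $S_j$ is of exact order $n_j^{1/2}$, not smaller:
\begin{itemize}
\item the term $\sum_i I(Y_{ij}>0)\,\partial\log h(Y_{ij}\mid\gamma,\sigma_{j,w_j})/\partial\gamma$ is a nondegenerate CLT term;
\item the correction $O_P(n_jn_{\mathcal{B}}^{-1/2})$ is no larger, since $n_j\le n_{\mathcal{B}}$;
\item the bias $O(n_j\alpha_j(w_j))$ is $o(n_j^{1/2})$.
\end{itemize}
Take the penalty scaled as $n\lambda\sum$, which is how the paper's proof writes it. (With the literal $\ell+\lambda\sum$, any certificate would need $\lambda\gtrsim n_j^{1/2}$, contradicting $a\lambda<\delta/2$.) Then your tree flow is admissible only if $\sum_{j\in\mathcal{B}}|S_j|=o_P(n\lambda)$, i.e.\ $|\mathcal{B}|\max_{j\in\mathcal{B}}n_j^{1/2}/(n\lambda)\to0$. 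The hypothesis $\max|\mathcal{B}|/(n\lambda)\to0$ lacks the diverging factor $n_j^{1/2}$. So ``the condition then yields $|z_{j,i}|<1$'' is false as written.

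Relatedly, $\lambda n_j^{1/2}\to\infty$ in Step 1(b) is not implied by $n\lambda\to\infty$. For $\lambda=n^{-1/2}$, (C2) gives $\lambda n_j^{1/2}=(n_j/n)^{1/2}\to0$. You must add it as a hypothesis, together with $a\lambda<\delta/2$ eventually; the paper's proof uses both tacitly.

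Once you assume $\lambda\min_j n_j^{1/2}\to\infty$, the missing rate is immediate for bounded component sizes (fixed $J$): $|\mathcal{B}|\max_j n_j^{1/2}/(n\lambda)\le|\mathcal{B}|/(\lambda n^{1/2})\le|\mathcal{B}|/(\lambda\min_j n_j^{1/2})\to0$. If $|\mathcal{B}|$ may grow, impose $|\mathcal{B}|\max_j n_j^{1/2}=o(n\lambda)$ directly.

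The paper's comparison step needs exactly this rate too. Its likelihood difference contains the $O_P(n_j^{1/2})$ score times $|\gamma_j-\gamma_j^*|$, although its displayed bound on that score drops this leading term. So this is not a weakness of the KKT route specifically. But the stated hypothesis $\max|\mathcal{B}|/(n\lambda)\to0$ is not what makes either argument work, and your proof should not rely on it.
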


From Theorems \ref{Oracle} and \ref{MainTheorem}, it is readily apparent that the asymptotic rate of the variance of the proposed estimator is faster than that of the cluster-wise estimator. 
This result is affected strongly by the condition that $n_{{\cal B}_{k,\ell}}^{1/2}\max_{j\in{\cal B}_{k,\ell}}
\alpha_j(w_j) \rightarrow 0$ in Theorem \ref{MainTheorem}. 
This assumption indicates that the asymptotic bias of the estimator is sufficiently small. 
One might think that this condition is strong, but this is justified in the context of our grouping method. 
In practice, if the large value of the threshold is used, then the bias becomes small. 
For the cluster-wise estimator, the very large value of threshold leads to a large variance because the effective sample size becomes small. 
By contrast, using our method, such a difficulty of the small effective sample size would be covered by grouping clusters.
For example, the confidence interval of the parameters in GPD and return level is constructed traditionally under the condition that the bias is assumed to be zero (Coles 2001, sect. 3).
The grouping of clusters in our method helps to construct the accurate confidence interval.

\section{SIMULATION}

We conducted the Monte Carlo simulation to investigate the finite sample performance of the proposed estimator. 
The advantage of the proposed grouping method is reducing the variance of the estimator compared with cluster-wise estimator.
To confirm this benefit, in this simulation, we generate data from each cluster using the GPD with some dependence between clusters. 
The data were generated according to the following procedure.
First, we independently obtained $Z_{1,1},\ldots,Z_{n,1}\sim N(0,1)$, which are used to create the data for the first cluster.
Next, the $j$-th random vector $(Z_{1, j},\ldots, Z_{n,j})$ was determined depending on the $(j-1)$-th random vector $(Z_{1, j},\ldots, Z_{n,j})$ as $Z_{i,j} = \rho Z_{i,j-1} + \sqrt{1-\rho^2} V_i$, where $\rho=0.999$ and $V_i\overset{i.i.d.}{\sim} N(0, 1)$.
Using $\{Z_{ij}, i=1, \ldots, n, j=1, \ldots, J\}$, finally, we made $U_{i,j}=\Phi(Z_{i,j})$ and $X_{i,j}= F_j^{-1}(U_{i,j})$, where $\Phi(\cdot)$ is the distribution function of a standard normal, and $F_j$ is defined as
\[
F_j(x)=\
\left\{
\begin{array}{cc}
1-\left(1 + \frac{\gamma_j(\gamma_j+1) x}{\sigma_j}\right)_+^{-1/\gamma_j}, & \gamma_j\not=0,\\
1-e^{-x/\sigma_j},& \gamma_j=0
\end{array}
\right.
\]
with
\[
\gamma_j= 0.3 - 0.05 \left(\left\lceil \frac{j}{100} \right\rceil -1\right)  j=1,\ldots,J,
\]
and 
\[
\sigma_j =\begin{cases}
40 - 5\left(\left\lfloor \frac{(j-1)\ {\rm mod}\ 100}{20}\right\rfloor \right), & \ j=1,\ldots,600,\\
40,& j=601,\ldots,700,\\
200+ 50\left(\left\lfloor \frac{(j-1)\ {\rm mod}\ 100}{20}\right\rfloor \right),& j=701,\ldots, J.
\end{cases}
\]
For this study, we set $n=120$ and $J=1100$ respectively as the sample size and the number of clusters. 
For $j=1,\ldots,J$, $X_{1,j},\ldots,X_{n,j}$ were independently distributed as GPD with parameter $\gamma_j$ and $\sigma_j$. 
The true structures of parameters are detailed below. 
Roughly speaking, in the model above, the true shape parameters were constructed by assigning blocks of 100 values, each decreasing by 0.05 starting from 0.3. 
For example, the first 100 values ($j=1,\ldots,100$) are 0.3. The next 100 ($j=101,\ldots,200$) are 0.25. Also, $j\in[ 501, 600]$ corresponds to $\gamma_j = 0$, $j\in[ 601, 700]$ corresponds to $\gamma_j = -0.05$, and the final 100 values $(j = 1001,\ldots,1100)$ correspond to $\gamma_j = -0.2$.
Consequently, 100 clusters share the same shape parameter. 
However, within each group of clusters sharing the same shape parameter, the scale parameters $\sigma_j$ varied every 20 clusters.
The correlation between $X_{i,j}$ and $X_{i,k}$ is approximately $\rho^{|j-k|}$. 
The pair of Gaussian random variables is well known to have no extremal dependence unless correlation $\rho$ is close to one (Coles 2001). 
Thus, the extreme dependence of $X_{i,j}$ and $X_{i,k}$ is not strong from a theoretical perspective. 
However, since $\rho=0.999$, large dependence can be found when $|j-k|$ is small as the finite sample performance.

Because the true model is GPD, the threshold selection is no longer needed. 
Therefore, the approximation bias, denoted as $\alpha_j(w_j)$ in Section 4.1, is zero. 
In other words, we set zero as the threshold for all clusters. 
Under this, we fitted the GPD to the clustered data. 
We denote $\tilde{\gamma}_j$ as the cluster-wise estimator of $\gamma_j$ for $j=1, \ldots, J$. 
The cluster-wise estimator is apparent as the baseline estimator. 
To apply our method, we must set the initial graph to add the fused lasso penalty. 
For this study, the graph edge is set as $(j,j+1), (j,j+2), (j,j+3)$, and $(j,j+4)$ for $j=1,\ldots, J-4$. 
The total number of initial edges was 4384. 
For example, for $1\leq j\leq 96$, the connection of the edge is correct because $\gamma_1=\cdots=\gamma_{100}$. 
However, for $98\leq j\leq 101$, there is the mis-specified edge because $\gamma_{100}\not=\gamma_{101}$. 
Consequently, there are 4284 correct edges and 100 incorrect edges in this graph.
Under this setting, we applied the proposed method to the data. 
The proposed estimator of the shape parameter is denoted as $\hat{\gamma}_j$. 
We evaluated the accuracy of the estimators for each cluster by the ratio of the mean squared error
\[
{\rm ratio\ of\ MSE} = \frac{E[(\hat{\gamma}_j-\gamma_j)^2]}{E[(\tilde{\gamma}_j-\gamma_j)^2]}
\]

Hereinafter, we designate GFL as the proposed method using graph fused lasso. 
The left panel of Figure \ref{FigSection5_1} depicts the behaviors of the cluster-wise estimator and the GFL for $j=1,\ldots, J$. 
The solid and dashed lines respectively represent median and lower/upper 5\% quantiles calculated based on 1000 Monte Carlo iterations. 
It is readily apparent that the performances of the median of two estimators are similar. 
We see that the bias remains for all clusters, but this is natural for not large sample size in extreme value analysis. 
However, the lower quantiles of the GFL are close to the true value compared with the cluster-wise estimator. 
This implies that the proposed method improves the stability of the estimator. 

The right panel of Figure \ref{FigSection5_1} presents the ratio of MSE for all clusters. 
One can say that the GFL performed better than the cluster-wise method if this ratio is less than 1.
For many clusters, the GFL improved the performance.  
While both estimators exhibit some bias appered in left panel, the results in the right panel demonstrate that the GFL effectively reduces variance compared to the cluster-wise estimator. 

\begin{figure}
\begin{center}
\includegraphics[width=150mm,height=80mm]{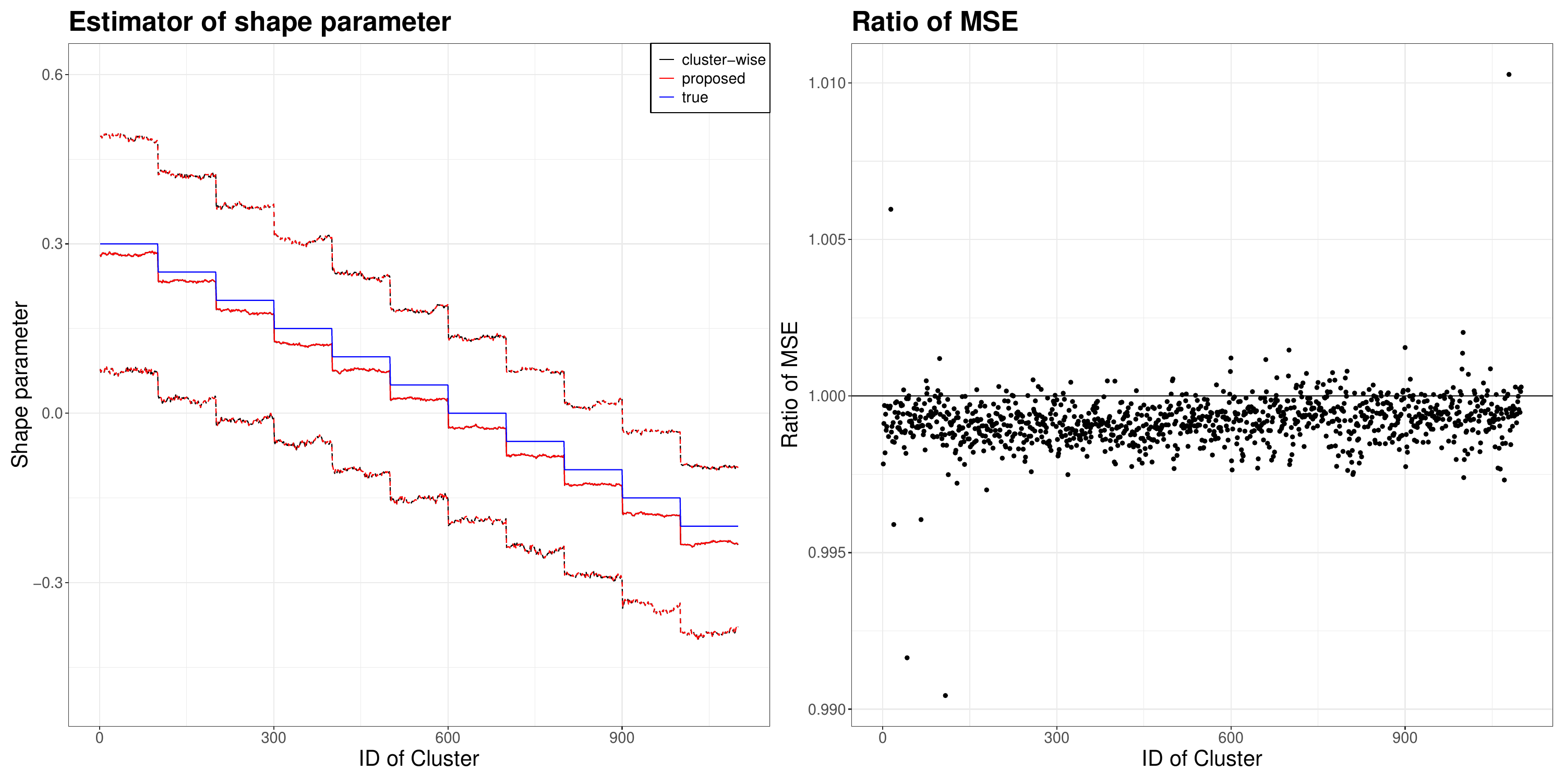}
\end{center}
\caption{Left: The cluster-wise estimator (black), the proposed estimator (red), and the true shape parameter (blue) for all clusters. 
The solid lines are medians. Dashed lines are lower/upper 2.5\% quantiles. 
Right: ratio of MSE for all clusters $j=1,\ldots,J$. 
\label{FigSection5_1}}
\end{figure}

Next, we specifically examine the return level. 
The return level of GPD for the probability $\tau\in (0,1)$ is given as 
\[
 R(\tau\mid\gamma,\sigma) =
 \left\{
 \begin{array}{cc}
 \frac{\sigma}{\gamma(\gamma+1)} \left\{\tau^{-\gamma}-1\right\}, & \gamma\not=0,\\
 -\sigma \log(\tau),& \gamma=0.
\end{array}
\right.
\]
Note that the parameter $(\gamma,\sigma)$ is defined as orthogonally transformed versions of shape and scale parameters (Chavez-Demoulin and Davison 2005).
Here, we set $\tau=1/2n$. 
Then, after constructing the 95\%-confidence interval (CI) of return level for all clusters (see Section 4.2), we calculate the coverage probability of CI and the average of ratio of length of CI of GFL over the cluster-wise estimator among 1000 Monte Carlo iterations. 
The results are presented in Figure \ref{FigSection5_2}. 
The left panel shows the coverage probability of CI whereas the ratios of lengths of CI are presented in the right panel. 
It is well known that the accuracy of confidence intervals for the return level tends not to be good in extreme value models.
Therefore the coverage probability not achieving the nominal significance level is not unexpected. 
Consequently, for almost all clusters, the coverage probabilities constructed by GFL were similar to those from cluster-wise method.
However, from right panel, for several clusters, the length of CI from GFL was shorter than that from cluster-wise method. 
This outcome might arise because GFL improves the bias (Figure \ref{FigSection5_1}). 
Furthermore, in our method, the scale parameter is estimated separately for each cluster rather than being grouped.
This helps avoid overly narrow confidence intervals, particularly for clusters with a negative shape parameter.
Consequently, although the shorter CI generally engenders reduced coverage, the bias reduction achieved by the GFL estimator allows the coverage probability to improve for several clusters.
As shown in the right panel, spike-like behaviors are observed for some clusters. 
These clusters correspond to indices around $i\times 100, i=0,\ldots, 10$, where the true shape parameter changes and grouping with neighboring clusters is less likely to occur.

\begin{figure}
\begin{center}
\includegraphics[width=150mm,height=80mm]{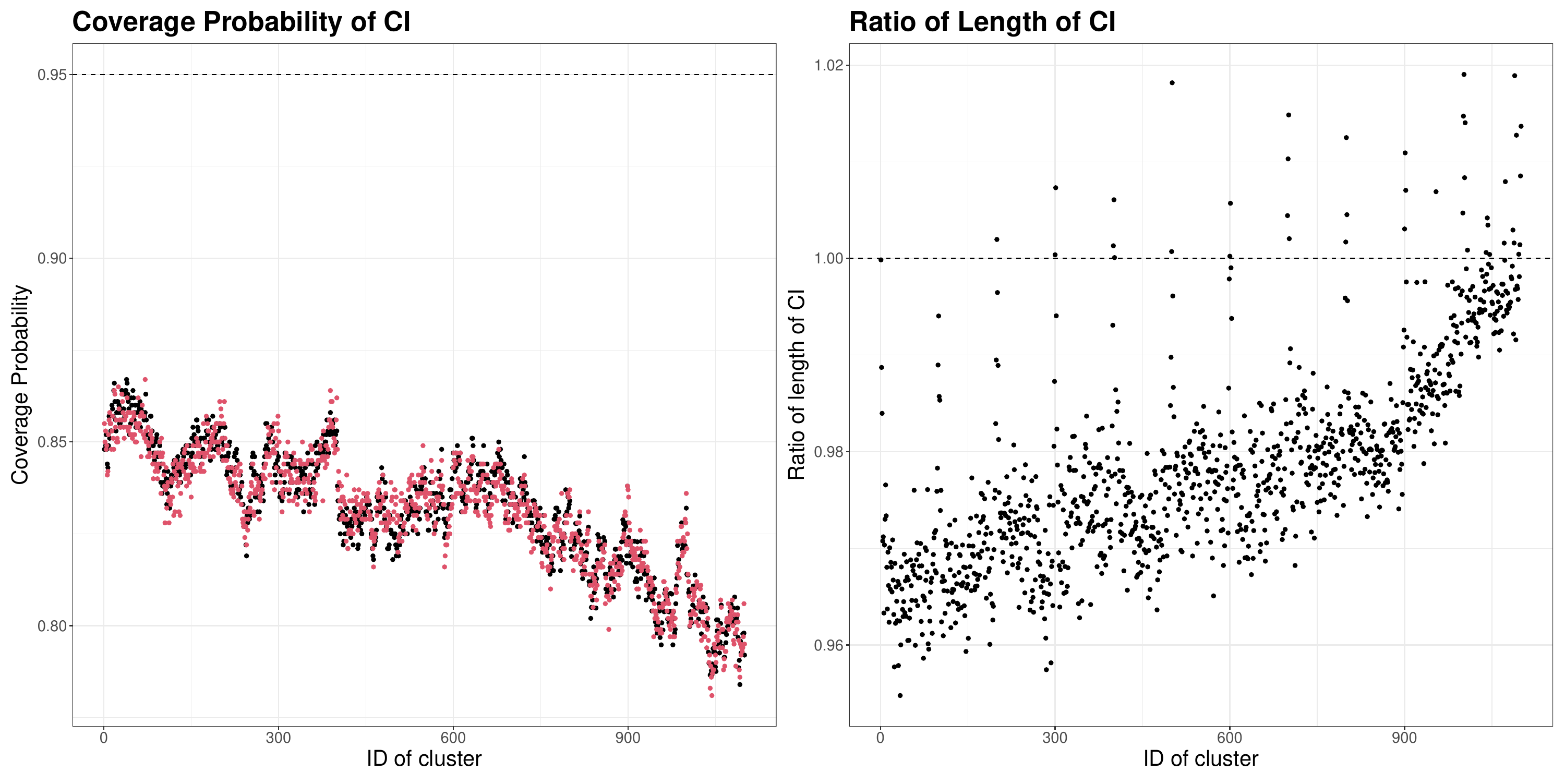}
\end{center}
\caption{ Results for 95\%-CI of return level for $j=1,\ldots,J$. 
Left: Coverage probabilities of CI from the proposed method (red) and the cluster-wise method (black). 
Right: Average of the ratio of length of CI of the proposed estimator over the cluster-wise estimator. 
\label{FigSection5_2}}
\end{figure}

Supplementary materials present additional simulation results obtained under larger sample sizes, different graph structures used in the GFL,
and alternative true models that deviate from the fully specified GPD.

\section{APPLICATION}

\subsection{Dataset}

As presented in this section, we applied the proposed graph fused lasso method to data of daily precipitation in Japan. 
Japan has over 1300 sites for recording of several meteorological data. 
The original dataset is available at the website of Japan Meteorological Agency \footnote{https://www.jma.go.jp/jma/indexe.html}. 
For this analysis, we considered 996 sites located on the main island of Japan (see Figure \ref{FigSection6_1}). Remote islands such as Okinawa were excluded because they are geographically distant from the main island and because they have distinct characteristics.
We recorded daily precipitation of 996 sites from  January 1, 2000 to December 31, 2022. 
From prior knowledge of climate information of Japan, we observed that high precipitation tended to occur during May--October, which result is consistent with expectations because these days are the rainy and typhoon season in Japan.  
Consequently, it is natural to consider that there is a seasonal effect in heavy precipitation phenomena in Japan. 
To remove such an effect, we analyzed data of May 1 to October 31 for each year at 996 sites. 
Therefore, we used 3520 observations of daily precipitation for each site. 

When analyzing heavy rainfall risk across multiple sites, nearby sites often exhibit similar meteorological patterns. 
Using this geographical structure, our goal is to improve the stability of marginal estimations at each site by incorporating spatial information. 
To this end, we applied our proposed method to rainfall data in Japan.

Before applying our method, to fit the GPD, we must choose the threshold value or number of effective sample sizes $n_j$ for all sites.
Although details are described in Section S2 of Supplementary file, here we use $n_j=116$ for all sites.

\begin{figure}
\centering
\includegraphics[width=0.45\linewidth]{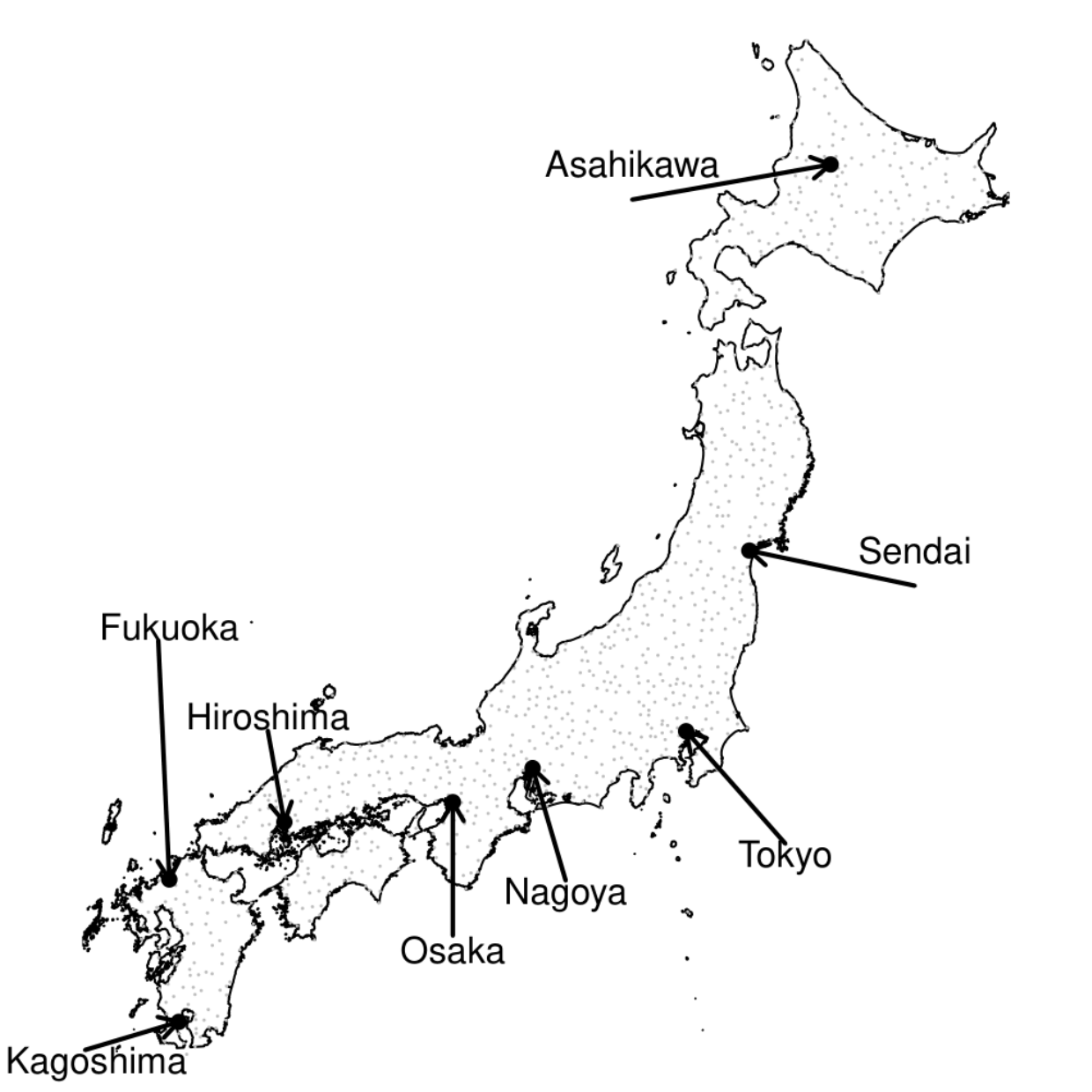}
\caption{Locations of 996 sites of the main island of Japan. \label{FigSection6_1}}
\end{figure}

\subsection{Construction of graph}

For our method, the choice of the graph in the fused lasso path is important, which is deterministic as user-specified. 
To investigate the pair of clusters having similar marginal effect, we first considered the tail dependence (see Coles et al. 1999).
For data with $j$-th and $k$-th clusters (sites), we define the asymptotic dependence as $\chi_{k,j}=\lim_{u\rightarrow 1}\chi_{k,j}(u)$, where 
\[
\chi_{k,j}(u)=P(F_k(X_{i,k})>u\mid F_j(X_{i,j})>u )=\frac{P(F_k(X_{i,k})>u, F_j(X_{i,j})>u)}{1-u}.
\]
The asymptotic dependence evaluates the event by which the uniform-transformed data of $j$-th and $k$-th exceed a given threshold $u$ simultaneously. 
If $\chi_{k,j}=0$, it is well known that data with $k$-th and $j$-th clusters are asymptotic independent, but are asymptotic dependent for $\chi_{k,j}>0$. 
In particular, $\chi_{k,j}=1$ signifies full tail dependence.
If $\chi_{k,j}$ takes the large value, then data in the $k$-th and $j$-th clusters are highly dependent in the tail region of their joint distribution, which implies that the marginal tail effects of these clusters might be similar, i.e., $\gamma_k$ might be close to $\gamma_j$.

As described herein, the edge of the graph $c_{j,k}$ was determined as 
\[
c_{j,k}= I(\chi_{j,k} > c^*),
\]
where $I$ is the indicator function and the cut-off $c^*\in(0,1)$ is a specified constant. 
That is, we added edges with a sparse penalty only to pairs having high asymptotic dependence. 
Also, we set $c^*=0.76$. 
The number of edges connecting sites by graph fused lasso was
\[
\sum_{j<k} I(\chi_{j,k}>c^*)=849. 
\]
Therefore, we can understand that two sites connected by an edge tend to experience heavy rainfall simultaneously.
Section S2 of Supplementary file describes the sensitivity of the choice of the above cut-off. 
In this setting of $c^*=0.76$, the number of single nodes for which the sites have no connection with any other site was 293. 
As single sites, they tend to be located at high elevations, in geographically isolated areas, or in places with little human activity.
The model of these 293 sites was reduced to the standard GPD approach. 
These are excluded from our proposed grouping method. 

In real data application, of course $\chi_{k,j}$ is unknown. 
The estimator of $\chi_{k,j}$ was obtained by replacing $F_j, F_k$ and $P$ with these empirical distributions and fixed $u\approx 1$. 
For this analysis, we set $u=0.98$.
The asymptotic dependence $\chi_{k,j}$ is used only to detect the pairs of clusters connecting by graph fused lasso. Its (estimated) value is not used to the weight of the graph fused lasso.

\subsection{Results}

The tuning parameter is selected using BIC. Then we obtain $\lambda=0.368$ and $K=K(\lambda) = 293$. 
The number of sites having a single node was 476. 
Actually, 293 sites had been single before application of our method. 
Consequently, 183 sites were decided by single, although grouping was considered from the method presented in Section 6.2. 
The remaining 520 sites were grouped with other sites. 
Details of results of tuning parameter selection are described in Section S2 of Supplementary file.

Estimators of shape parameters for each site are presented in Figure \ref{Figure4}. 
The left panel shows the estimated shape parameter obtained from the cluster-wise method. 
The estimator with the proposed method is plotted in the right panel. 
From this result, it is apparent that the distribution of shape parameters in the right panel is smoother than that in the left panel. 
In particular, for sites where the shape parameter was estimated as smaller than -0.17 by the cluster-wise method (marked by yellow rather than red), the proposed estimator provided slightly larger estimates. 
These larger estimates suggest that neighboring sites of those with small estimated shape parameters tend to show higher shape values and greater rainfall risk, leading the proposed estimator to yield more conservative predictions.

\begin{figure}[htbp]
\begin{center}
  \begin{minipage}[b]{0.45\linewidth}
    \centering
    \includegraphics[keepaspectratio, scale=0.3]{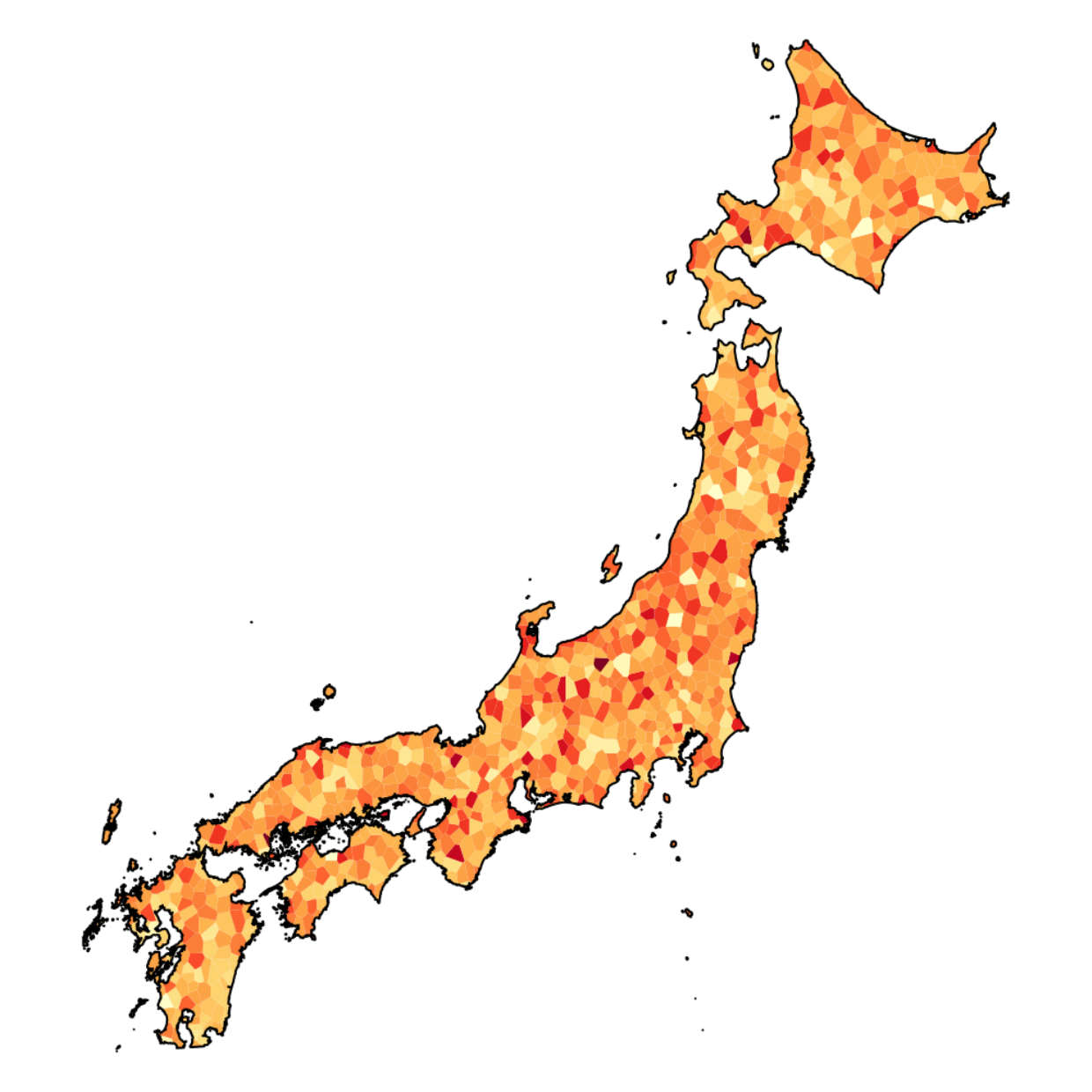}
    \vspace{0.5\baselineskip} 
    \textbf{Cluster-wise} 
  \end{minipage}
  \begin{minipage}[b]{0.45\linewidth}
    \centering
    \includegraphics[keepaspectratio, scale=0.3]{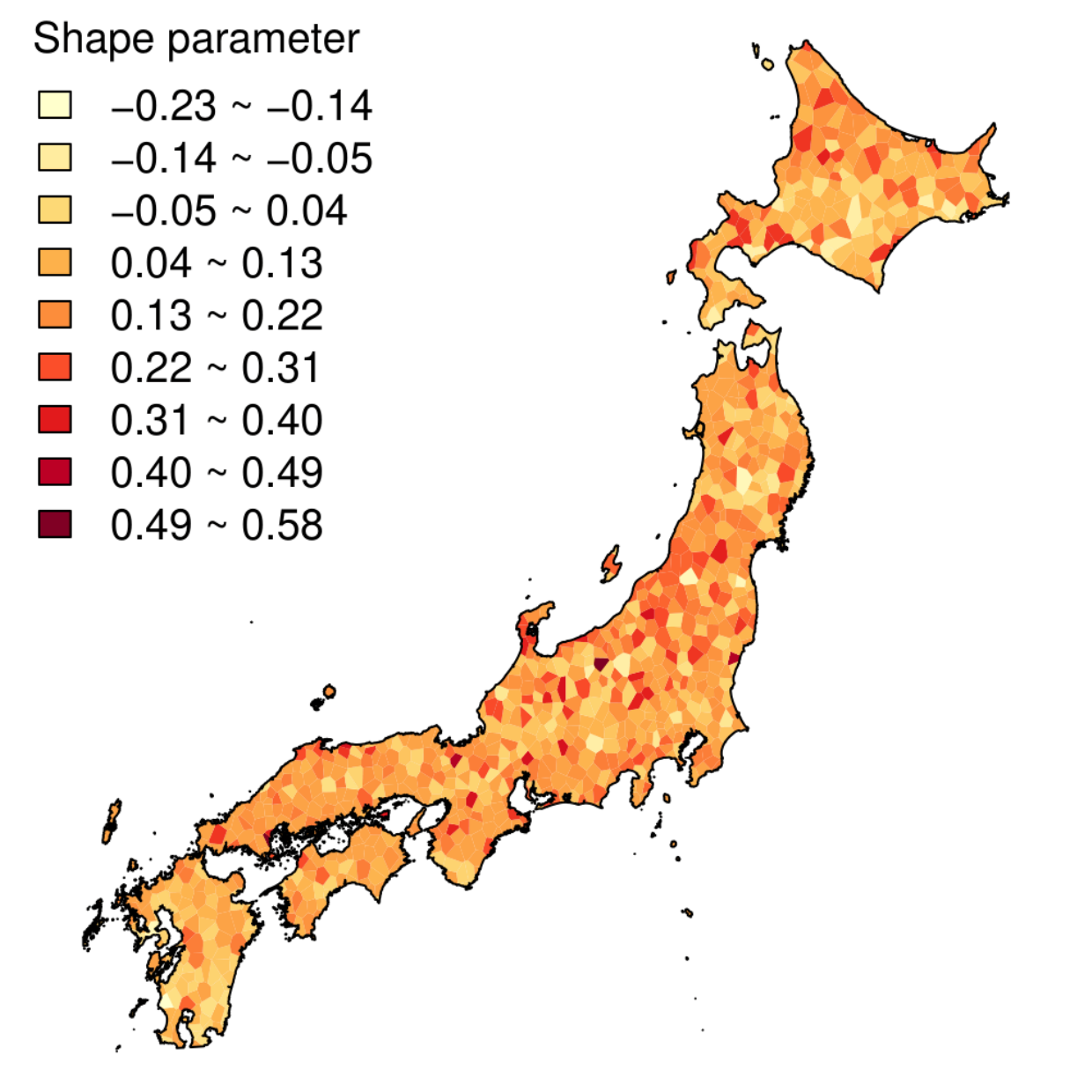}
    \vspace{0.5\baselineskip} 
    \textbf{Graph fused lasso} 
  \end{minipage}
  \end{center}
\caption{Estimator of shape parameter for each site. Left: Cluster-wise estimator. Right: Proposed estimator. \label{Figure4}}
\end{figure}

Lastly, we confirm the estimator of shape parameter, 50-year return level (RL) with 95\% confidence interval, and the group size for selected sites. 
The results are shown in Table \ref{RLtable}. 
The locations of selected sites are presented in Figure \ref{FigSection6_1}. 
At sites where the estimated $\gamma$ increased (or decreased) because of shape parameter grouping, the return levels showed a corresponding increase (or decrease) since the return level is highly sensitive to the shape parameter. 
Because of the grouping effect, the proposed method yields shorter length of  confidence intervals than the cluster-wise method does.
These results were observed for all sites. 
In Hiroshima, Osaka and Tokyo, the upper bound of the confidence interval with proposed method remains similar to that obtained from cluster-wise estimator. 
This similarity led to a more confident and accurate estimation of the return levels without increasing the risk of underestimation.
For sites at which the estimator of the shape parameter decreases (e.g. Nagoya, Sendai), both the return level and the upper bound of the confidence interval also decrease.
This decrease of both implies that, for such sites, the risk might be underestimated. 
The results should therefore be interpreted carefully in subsequent discussions.
Consequently, the proposed method is useful for summarizing information across sites by incorporating neighborhood information.
However, when assessing site-specific risks, additional information and careful interpretation are expected to be necessary.

\begin{table}
\caption{Results for eight selected sites. Estimated shape parameter, 50-year return level (RL) and 95\% confidence interval (CI) for selected sites. GS: Number of sites belonging to the same group as the selected site. \label{RLtable}}
\begin{center}
\begin{tabular}{lccc|cccc}
\hline
site& $\gamma$ & 50-year RL & 95\% CI of RL & $\gamma$ & 50-year RL & 95\% CI of RL & GS\\
\hline
&\multicolumn{3}{c}{\underline{Cluster-wise estimation}} & \multicolumn{3}{c}{\underline{Grouping estimation}}\\[2mm]
\hline
Kagoshima & -0.014 & 301.115 & (231.27, 370.96) & -0.038 & 297.175 & (273.29, 321.06) & 8\\
Fukuoka & -0.042 & 244.119 & (190.11, 298.12) & 0.099 & 321.405 & (292.60, 350.21) & 20\\
Hiroshima & 0.043 & 230.800 & (163.82, 297.78) & 0.121 & 259.042 & (242.38, 275.70) & 68\\
Osaka & 0.222 & 235.139 & (124.95, 345.32) & 0.251 & 254.950 & (189.15, 320.75) & 4\\
Nagoya & 0.317 & 311.586 & (133.11, 490.07) & 0.206 & 299.500 & (257.93, 341.07) & 13\\
Tokyo & 0.040 & 256.071 & (180.09, 332.05) & 0.040 & 256.105 & (201.91, 310.31) & 2\\
Sendai & 0.113 & 246.266 & (154.52, 338.01) & 0.138 & 222.859 & (193.40, 252.32) & 10\\
Asahikawa & 0.093 & 156.416 & (102.15, 210.68) & 0.092 & 141.053 & (112.89, 169.22) & 3\\
\hline
\end{tabular}
\end{center}
\end{table}

\begin{figure}
\begin{center}
\includegraphics[width=0.5\linewidth]{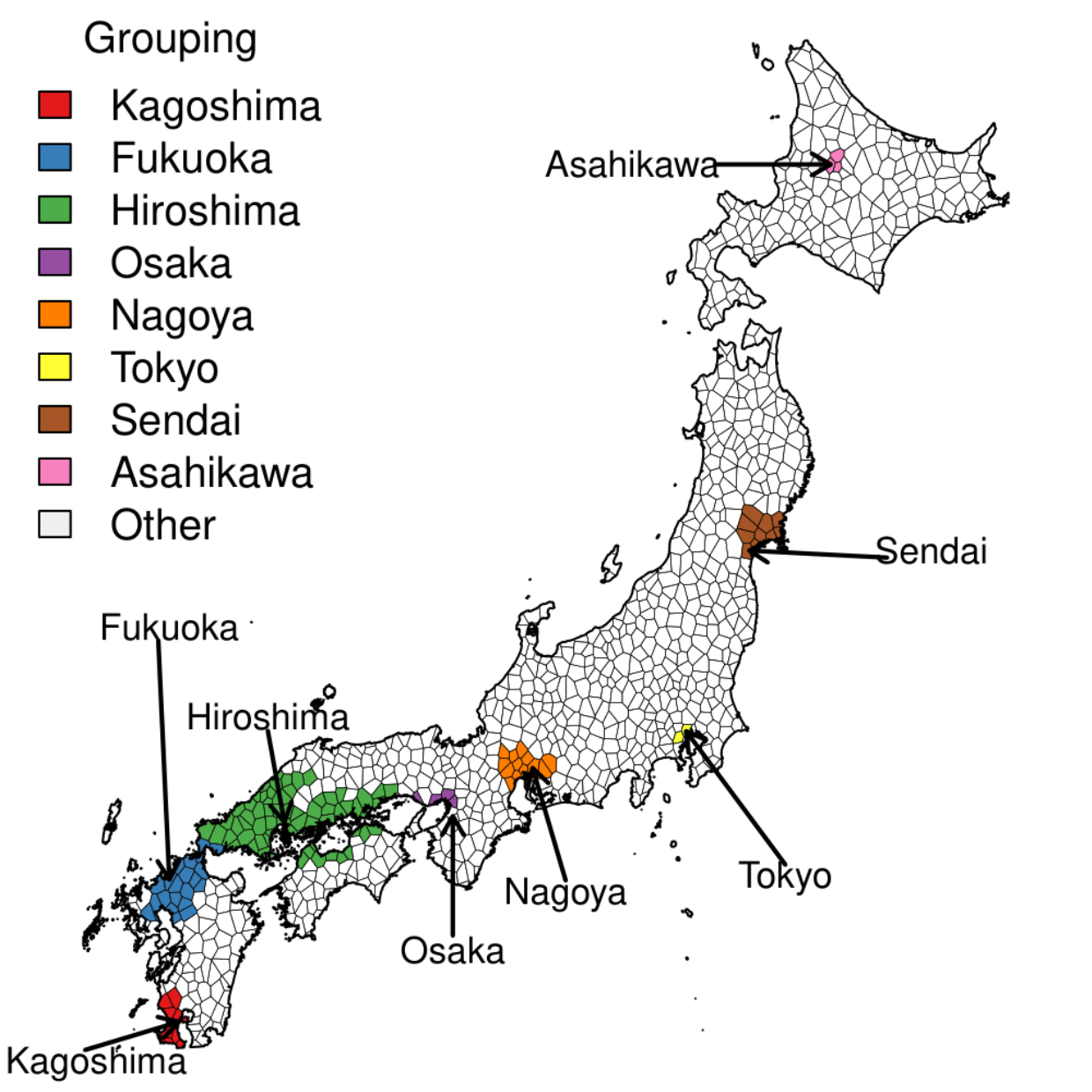}
\end{center}
\caption{Clusters grouped with selected clusters in Table 1. \label{FigSection6_4}}
\end{figure}

The last column denotes the group size (GS) of the new group to which the selected site belongs. 
For example, Kagoshima is grouped with five other (nearby) sites. 
The clusters grouped with eight clusters in Table 1 are presented in Figure \ref{FigSection6_4}. 
The selected clusters are apparently grouped with nearby clusters.
However, not all adjacent clusters are grouped together, probably because of geographic factors such as elevation and other complex climate conditions.

\section{DISCUSSION}

We have studied the structural grouping method of the marginal inference of the extreme value distribution for the clustered data. 
Specifically, we have examined the estimation of the generalized Pareto distribution (GPD) and grouping the shape parameter within the clusters having similar structure. 
As the grouping method of the shape parameter, we have used the graph fused lasso method. 
By setting the graph structure by domain knowledge, prior information or pilot study, we have estimated the shape and the scale parameter in GPD and have grouped the shape parameters simultaneously using the penalized maximum likelihood with graph fused lasso penalty. 
In our method, the graph structure between clusters is treated as user-specified arbitrary. 
This arbitrariness is sometimes inconvenient but it would be helpful in cases where the user has knowledge of the data. 
Actually, for climate data, sites that are geographically close tend to have similar features.
We have reported that such features appeared through high value of the tail dependence in Section 6. 

We now discuss further developments related to the proposed method. 
First, our proposed method is useful for the generalized extreme value distribution (GEV) of the block maxima data. 
By applying graph fused lasso penalty on the shape parameters in GEV between clusters, the estimation of parameters and grouping shape parameters could be achieved. 

A second direction is the transfer learning of the extreme value modeling for the target cluster. 
Transfer learning achieves accurate estimation of the target cluster incorporating the information of other clusters having a related structure to that of the target cluster. 
In this method, the choice of source clusters is crucially important to obtain the estimator for the target cluster. 
Using sparse method, it might be possible to estimate the parameters and the choice of source clusters simultaneously. 
Actually, Chen et al. (2015), Takada and Fujisawa (2020), and Li et al. (2022) studied transfer learning with a sparse penalty in linear regression. 
Also in our proposed method, developing transfer learning is important work.

Finally, we can consider extension to the model varying association with covariates. 
In Section 6, we applied the proposed method to daily precipitation data. 
However, the statisticalproperties of daily precipitation would change according to the time or season. 
Although we analyzed the daily precipitation data by the GPD with stationary shape and scale parameters, the time dependent shape and scale parameters would be regarded as revealing the statistical inference of the climate data. 
In this context, grouping shape parameter function as a function of time presents an interesing direction.
Similarly to a description in a report by Davison and Smith (1990), we also expect the GPD with constant shape parameter and time dependent scale parameter to be useful. 
Our proposed method could then be directly applied to group shape parameters, while the scale functions are nonparametrically estimated for each cluster.

\section*{APPENDIX}

\renewcommand{\theequation}{A\arabic{equation}}
\setcounter{equation}{0}

In this appendix, we describe the technical lemmas and these proofs for theorems. 
We denote $Y\sim GP(\gamma,\sigma)$ if the random variable $Y$ has density function 
\[
h(y\mid\gamma,\sigma) = \frac{\gamma+1}{\sigma}\left(1+\frac{\gamma(\gamma+1)y}{\sigma}\right)^{-1/\gamma-1},\ \ y>0
\]
with $\gamma>-1/2$ and $\sigma>0$. 

\begin{lemma}\label{Hessian}
If $Y\sim GP(\gamma,\sigma)$, then
\[
E\left[ \frac{\partial^2 \log h(Y\mid\gamma,\sigma)}{\partial \gamma^2} \right]
=\frac{-1}{(\gamma+1)^2},\ \ E\left[ \frac{\partial^2\log h(Y\mid\gamma,\sigma)}{\partial \gamma\partial \sigma} \right]
=0
\]
and 
\[
 E\left[ \frac{\partial^2 \log h(Y\mid\gamma,\sigma)}{\partial \sigma^2} \right]
=\frac{-\sigma^2}{2\gamma+1}.
\]
\end{lemma}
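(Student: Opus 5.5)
The plan is to avoid differentiating the reparameterized density $h(y\mid\gamma,\sigma)$ twice by hand. Instead we start from the Fisher information of the classical parameterization $(\gamma,\xi)$ with $\xi=\sigma/(\gamma+1)$, i.e.\ the density $h^o(y\mid\gamma,\xi)$ of Section 2, and transport it through the Jacobian of the map $(\gamma,\sigma)\mapsto(\gamma,\xi)$.

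\textbf{Step 1 (information in the $(\gamma,\xi)$ parameterization).} For $Y\sim h^o(\cdot\mid\gamma,\xi)$ with $\gamma>-1/2$, set $\ell^o=\log h^o$. Using the score identity $E[\partial^2\ell^o]=-E[(\partial\ell^o)(\partial\ell^o)^\top]$, which is valid because $\gamma>-1/2$ keeps the relevant moments finite, we need
\[
E\Bigl[-\frac{\partial^2\ell^o}{\partial\gamma^2}\Bigr]=\frac{2}{(1+\gamma)(1+2\gamma)},\quad
E\Bigl[-\frac{\partial^2\ell^o}{\partial\gamma\partial\xi}\Bigr]=\frac{1}{\xi(1+\gamma)(1+2\gamma)},\quad
E\Bigl[-\frac{\partial^2\ell^o}{\partial\xi^2}\Bigr]=\frac{1}{\xi^2(1+2\gamma)}.
\]
These are the standard values (Smith 1987). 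To verify them, substitute $U=(1+\gamma Y/\xi)^{-1}$; then $U$ has density $\gamma^{-1}u^{1/\gamma-1}$ on $(0,1)$ when $\gamma>0$, with the analogous form when $\gamma<0$. The derivatives of $\ell^o$ are polynomials in $U$ and $\log U$, so every expectation reduces to Beta-type integrals $\int_0^1 u^{a}\,du$ and $\int_0^1 u^a\log u\,du$. The case $\gamma=0$ follows by continuity or by a direct exponential computation.

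\textbf{Step 2 (change of variables).} Write $\xi(\gamma,\sigma)=\sigma/(1+\gamma)$, so that $\partial\xi/\partial\gamma=-\xi/(1+\gamma)$ and $\partial\xi/\partial\sigma=1/(1+\gamma)$. Since $h(y\mid\gamma,\sigma)=h^o(y\mid\gamma,\xi(\gamma,\sigma))$, the expected Hessian transforms as $J^\top I^o J$, where $J$ is the Jacobian of $(\gamma,\sigma)\mapsto(\gamma,\xi)$. The terms of the chain rule that involve second derivatives of $\xi$ are multiplied by the expected score, which is zero, so they drop out in expectation. This step needs only a few lines of algebra:
\begin{itemize}
\item The $(\gamma,\gamma)$ entry is $\frac{2(1+\gamma)-2+1}{(1+\gamma)^2(1+2\gamma)}=\frac{1}{(1+\gamma)^2}$.
\item The $(\gamma,\sigma)$ entry is $\frac{1}{\xi(1+\gamma)^2(1+2\gamma)}-\frac{1}{\xi(1+\gamma)^2(1+2\gamma)}=0$. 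This cancellation is exactly the orthogonality property of Chavez-Demoulin and Davison (2005).
\item The $(\sigma,\sigma)$ entry is $\frac{1}{\xi^2(1+\gamma)^2(1+2\gamma)}=\frac{1}{\sigma^2(2\gamma+1)}$.
\end{itemize}
Changing signs gives the expected second derivatives.

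\textbf{Main obstacle and a caveat.} The real work is in Step 1, specifically the Beta/log integrals and checking integrability near the endpoint of the support when $-1/2<\gamma<0$. This is also where the condition $\gamma>-1/2$ from (C1) enters.

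The calculation produces $-1/(\sigma^2(2\gamma+1))$ for the $\sigma\sigma$ entry, not $-\sigma^2/(2\gamma+1)$ as printed. The printed form cannot be correct on scale grounds, since $\partial^2/\partial\sigma^2$ carries units $\sigma^{-2}$. The corrected form is also the one consistent with Theorem \ref{Oracle}: information $n_j/(\sigma^2(2\gamma+1))$ gives $n_j^{1/2}(\hat\sigma_j/\sigma_{j,w_j}-1)$ asymptotic variance $2\gamma+1$. The proof will therefore establish the entry $-1/\{\sigma^2(2\gamma+1)\}$ and note that the printed version should be read this way.
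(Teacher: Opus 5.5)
Your proof is correct and takes a different route from the paper's. You also correctly identify that the printed $\sigma\sigma$ entry is wrong.

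The paper works directly in the orthogonal parameterization. It differentiates $\log h(y\mid\gamma,\sigma)$ twice by hand and evaluates each expectation from a table of moments of $Z=1+\gamma(\gamma+1)Y/\sigma$: $E[\log Z]=\gamma$, $E[Z^{-1}]=1/(\gamma+1)$, $E[Z^{-2}]=1/(2\gamma+1)$, plus a few mixed terms. You instead take the classical $(\gamma,\xi)$ information as the base computation and transport it by $J^\top I^o J$. The curvature terms of the reparameterization drop out because the expected score vanishes.

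Your route has three advantages:
\begin{itemize}
\item The algebra shrinks to three lines.
\item It shows that orthogonality is exactly the condition $\partial\xi/\partial\gamma=-I^o_{\gamma\xi}/I^o_{\xi\xi}=-\xi/(1+\gamma)$, which is what $\sigma=\xi(\gamma+1)$ enforces.
\item It makes the $\sigma^{-2}$ scaling of the $\sigma\sigma$ entry transparent as $I^o_{\xi\xi}(1+\gamma)^{-2}$.
\end{itemize}

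It also has two costs:
\begin{itemize}
\item The moment computations are only relocated into Step 1.
\item You rely on $E[\partial\ell^o]=0$. For $-1<\gamma<0$ the support endpoint $-\xi/\gamma$ moves with the parameters, so you should state that the Leibniz boundary term vanishes because $h^o$ is zero at that endpoint.
\end{itemize}

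In Step 1, consider the substitution $V=(1+\gamma Y/\xi)^{-1/\gamma}\sim U(0,1)$, so that $1+\gamma Y/\xi=V^{-\gamma}$. It handles both signs of $\gamma$ at once and yields the same moments the paper tabulates. Your $U$, by contrast, needs separate integrals over $(1,\infty)$ when $\gamma<0$.

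On the $\sigma\sigma$ entry, the paper's own proof does not establish the printed value. It displays an incorrect expression for $\partial^2\log h/\partial\sigma^2$. Its displayed $\gamma\gamma$ and $\gamma\sigma$ second derivatives also carry the wrong overall sign, although the expectations it states for those two entries are correct. A direct check confirms your value: $\sigma^2\,\partial^2\log h/\partial\sigma^2=1+(1/\gamma+1)(Z^{-2}-1)$, whose expectation is $1-2(\gamma+1)/(2\gamma+1)=-1/(2\gamma+1)$.
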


\begin{proof}[Proof of Lemma \ref{Hessian}]
The log likelihood function is 
\[
\log h(y\mid\gamma,\sigma)= -\log \sigma+\log(1+\gamma) -\left(\frac{1}{\gamma}+1\right)\log\left(1+\frac{\gamma(\gamma+1)y}{\sigma}\right).
\]
We then obtain 
\begin{eqnarray*}
\frac{\partial^2\log h(Y\mid\gamma,\sigma)}{\partial \gamma^2}&=& 
\frac{1}{(1+\gamma)^2}+
\frac{2}{\gamma^3}\log\left(1+\frac{\gamma(\gamma+1)Y}{\sigma}\right)
-\frac{1}{\gamma^2}\frac{\frac{(2\gamma+1)Y}{\sigma}}{1+ \frac{\gamma(\gamma+1)Y}{\sigma}}\\
&&-\frac{1}{\gamma^2}\frac{\frac{(2\gamma+1)Y}{\sigma}}{1+\frac{\gamma(\gamma+1)Y}{\sigma}}
+\left(\frac{1}{\gamma}+1\right)
\frac{\frac{2Y}{\sigma}\left(1+\frac{\gamma(\gamma+1)Y}{\sigma}\right)-\frac{(2\gamma+1)^2Y^2}{\sigma^2}}{\left(1+\frac{\gamma(\gamma+1)Y}{\sigma}\right)^2},\\
\frac{\partial^2 \log h(Y\mid\gamma,\sigma)}{\partial \gamma\partial\sigma}&=&
-\frac{1}{\gamma^2} \frac{\frac{-\gamma(\gamma+1)Y}{\sigma^2}}{1+\frac{\gamma(\gamma+1)Y}{\sigma}}
+\left(\frac{1}{\gamma}+1\right)\frac{\frac{-(2\gamma+1)Y}{\sigma^2}\left(1+\frac{\gamma(\gamma+1)Y}{\sigma}\right)-\frac{(2\gamma+1)Y}{\sigma}\frac{\gamma(\gamma+1)Y}{\sigma^2}}{\left(1+\frac{\gamma(\gamma+1)Y}{\sigma}\right)^2}\\
&=&
\frac{1}{\gamma^2} \frac{\frac{\gamma(\gamma+1)Y}{\sigma^2}}{1+\frac{\gamma(\gamma+1)Y}{\sigma}}
-\left(\frac{1}{\gamma}+1\right)\frac{\frac{(2\gamma+1)Y}{\sigma^2}}{\left(1+\frac{\gamma(\gamma+1)Y}{\sigma}\right)^2},
\end{eqnarray*}
and 
\[
\frac{\partial^2 \log h(Y\mid\gamma,\sigma)}{\partial \sigma^2}
=
\left(\frac{1}{\gamma}+1\right)\frac{\sigma^2\frac{\gamma(\gamma+1)Y}{\sigma}}{\left(1+\frac{\gamma(\gamma+1)Y}{\sigma}\right)^2}.
\]
To calculate the expectation of the above, we use the results as
\begin{eqnarray*}
&&E\left[\log\left(1+\frac{\gamma(\gamma+1)Y}{\sigma}\right)\right]=\gamma,\ \ 
E\left[\left(1+\frac{\gamma(\gamma+1)Y}{\sigma}\right)^{-1}\right]=
\frac{1}{(\gamma+1)},\\
&&E\left[\left(1+\frac{\gamma(\gamma+1)Y}{\sigma}\right)^{-2}\right]
=
\frac{1}{2\gamma+1},\ \ 
E\left[\frac{\frac{(2\gamma+1)Y}{\sigma}}{1+\frac{\gamma(\gamma+1)Y}{\sigma}}\right]
=
   \frac{2\gamma+1}{(\gamma+1)^2},\\
 &&  E\left[\frac{\frac{(2\gamma+1)Y}{\sigma}}{\left(1+\frac{\gamma(\gamma+1)Y}{\sigma}\right)^2}\right]
=
   \frac{1}{(\gamma+1)^2},
\end{eqnarray*}
and
\[
E\left[\frac{\frac{(2\gamma+1)^2Y^2}{\sigma^2}}{\left(1+\frac{\gamma(\gamma+1)Y}{\sigma}\right)^2}\right]
=
 \frac{2(2\gamma+1)}{(\gamma+1)^3}.
\]
Then, we obtain 
\begin{eqnarray*}
E\left[\frac{\partial^2 \log h(Y\mid\gamma,\sigma)}{\partial \gamma^2}\right]&=& -\frac{1}{(\gamma+1)^2},\\
E\left[\frac{\partial^2 \log h(Y\mid\gamma,\sigma)}{\partial \gamma\partial \sigma}\right]&=&0
\end{eqnarray*}
and 
\[
E\left[\frac{\partial^2 \log h(Y\mid\gamma,\sigma)}{\partial \sigma^2}\right] = -\frac{\sigma^2}{2\gamma+1}.
\]
\end{proof}

Before describing proof of Theorem \ref{Oracle}, we state the second-order condition of EVT of each $F_j\ (j=1,\ldots,J)$. 
For $\tau\in(0,1)$, let $H^{-1}(\tau\mid\gamma)$ is the inverse function of $H(y\mid\gamma)$, i.e., $H^{-1}(\tau\mid\gamma)=\{t : H(t\mid\gamma)=\tau\}$. 
We further define
\[
Q(x\mid\gamma,\rho)=H(x\mid\gamma)^{1+\gamma}\frac{1}{\rho}\left\{\frac{\{H^{-1}(x\mid\gamma)\}^{\gamma+\rho}-1}{\gamma+\rho}-\frac{\{H^{-1}(x\mid\gamma)\}^{\gamma}-1}{\gamma}\right\}
\]
with parameter $\gamma\in\mathbb{R}$ and $\rho \leq 0$ for $x>0$. 
We can define $Q(x\mid 0, \rho)=\lim_{\gamma\rightarrow 0} Q(x\mid\gamma, \rho)$ for $\rho<0$, $Q(x\mid\gamma, 0)=\lim_{\rho\rightarrow 0}Q(x\mid\gamma, \rho)$ for $\gamma\not=0$ and $Q(x\mid0, 0)=\lim_{\gamma,\rho\rightarrow 0} Q(x\mid \gamma, \rho)$. 
Then, the condition (\ref{SecondOrderSimple}) is detailed here. 
For $F_j\in{\cal D}(G_{\gamma_j})$, we now assume that, for the sequence of threshold $w_{j}$, there exists a parameter $\rho_j$, a sequence $\sigma_{j,w_j}$, and an auxiliary function $\alpha_j(w_j)$ satisfying $\alpha_j(w_j)\rightarrow 0$ as $w_j\rightarrow x_j^*:=\sup\{x: F_j(x)<1\}$ such that for any $x>0$,
\begin{eqnarray}
\lim_{w_j\rightarrow x_j^*}\left|\frac{\frac{1-F_j(w_j+x)}{1-F_j(w_j)}- H((\gamma_j+1)x/\sigma_{j,w_j} \mid \gamma_j)}{\alpha_j(w_j)}- Q\left(\frac{(\gamma_j+1)x}{\sigma_{j,w_j}}\middle| \gamma_j,\rho_j\right)\right| \rightarrow 0. \label{SecondOrder}
\end{eqnarray}
The condition (\ref{SecondOrder}) is addressed specifically in Theorem 2.3.8 of reported work by de Haan and Ferreira (2006).
For $\gamma_j>0$, $\alpha_j(w_j)$ satisfies $\alpha_j(zw_j)/\alpha_j(w_j) \rightarrow z^{\rho_j/\gamma_j}$ as $w_j\rightarrow\infty$. 

Let $f_{j, w_j}(y_{ij})$ be the density function of $Y_{ij}=X_{ij}-w_j$ under $X_{ij}>w_j$, for $X_{ij}\sim F_j$, where $F_j\in{\cal D}(G_{\gamma_j})$. 
According to (\ref{SecondOrder}), the density function $f_{j,w_j}(y)=d P(Y_{ij}<y|Y_{ij}>0)/dy$ satisfies the following.
\begin{eqnarray}
f_{j,w_j}(y) = h\left(\frac{y}{\sigma_{j,w_j}}\middle| \gamma_j\right) -\alpha_j(w_j) Q^{\prime}\left(\frac{(\gamma+1)y}{\sigma_{j,w_j}}\middle|\gamma_j,\rho_j\right)(1+o(1)) \label{SecondOrderDensity}
\end{eqnarray}
for $y=x-w_j>0$ as $w_j\rightarrow x_j^*$, where 
\[
Q^{\prime}((\gamma+1)y/\sigma_{j,w_j}\mid\gamma_j,\rho_j)= \frac{d}{dy}Q((\gamma+1)y/\sigma_{j,w_j}\mid\gamma_j,\rho_j).
\]
In fact, $h(y/\sigma_{j,w_j}\mid\gamma_j)= d H((\gamma+1)y/\sigma_{j,w_j}\mid\gamma_j) / dy$ from (\ref{rescaleGP}). 

This detailed notation of second order condition of EVT helps to understand the proof of Theorem \ref{Oracle}.

\begin{proof}[Proof of Theorem \ref{Oracle}]
Consequently, we have
\[
\ell_{{\cal A}}(\gamma,\sigma_1,\ldots,\sigma_J)=\sum_{j=1}^J \sum_{i=1}^n I(Y_{ij}>0)\log h(Y_{ij}\mid\gamma, \sigma_j).  
\]
Then, the maximizer of $\ell_{{\cal A}}$ is $(\hat{\gamma}, \hat{\sigma}_1,\ldots,\hat{\sigma}_J)$. 
Next we calculate the gradient and Hessian of $\ell$. 
First, we obtain 
\begin{eqnarray*}
\frac{\partial \ell_{\cal A}(\gamma,\sigma_1,\ldots,\sigma_J)}{\partial \gamma}
=\sum_{j=1}^J \sum_{i=1}^nI(Y_{ij}>0)\frac{\partial\log h(Y_{ij}\mid\gamma, \sigma_j)}{\partial \gamma}.
\end{eqnarray*}
From (\ref{SecondOrderDensity}), 
\begin{eqnarray*}
E\left[\frac{\partial \log h(y\mid\gamma, \sigma_j)}{\partial \gamma}\right]
&=& 
\int  \frac{\partial\log h(y\mid\gamma, \sigma_j)}{\partial \gamma} h(y\mid\gamma,\sigma_j) dy\\
&&+\alpha_j(w_j)\int \frac{\partial\log h(y\mid\gamma, \sigma_j)}{\partial \gamma}  Q^\prime((\gamma+1)y/\sigma_j\mid\gamma,\rho_j)dy(1+o(1)).
\end{eqnarray*}
By the definition of derivative of log-likelihood function, we have 
\[
\int  \frac{\partial\log h(y\mid\gamma, \sigma_j)}{\partial \gamma} h(y\mid\gamma,\sigma_j) dy=0.
\]
In addition, by the straightforward calculation, it is readily apparent that 
\[
\left|\int \frac{\partial\log h(y\mid\gamma, \sigma_j)}{\partial \gamma}  Q^\prime((\gamma+1)y/\sigma_j\mid\gamma,\rho_j)dy\right| <\infty.
\]
Therefore, we have 
\[
E\left[\frac{\log h(Y_{ij}\mid\gamma, \sigma_j)}{\partial \gamma}\right]
= o(n_{{\cal A}}^{-1/2}),
\]
which indicates that 
\begin{eqnarray}
\frac{1}{n_{{\cal A}}^{1/2}}\left|E\left[\frac{\partial \ell_{\cal A}(\gamma,\sigma_1,\ldots,\sigma_J)}{\partial \gamma}\right]\right|\leq O(n_{{\cal A}}^{1/2} \max_j \alpha_j(w_j) )=o(1). \label{ExGamma}
\end{eqnarray}
Next, from Lemma \ref{Hessian}, it is apparent that 
\[
V\left[\frac{\partial \log h(Y_{ij}\mid\gamma, \sigma_j)}{\partial \gamma} \right]= \frac{n_{{\cal A}}}{(\gamma+1)^2}(1+o(1)).
\]
Therefore, we obtain 
\begin{eqnarray*}
V\left[\frac{1}{n_{{\cal A}}^{1/2}}\frac{\partial \ell_{\cal A}(\gamma,\sigma_1,\ldots,\sigma_J)}{\partial \gamma}\right] =  \frac{1}{(\gamma+1)^2}(1+o(1)).
\end{eqnarray*}
We see from central limit theorem and the Cramer--Wold device that
\[
\left\{n_1^{-1/2}\sum_{i=1}^nI(Y_{i1}>0)\frac{\partial \log h(Y_{i1}\mid\gamma,\sigma_1)}{\partial \gamma},\ldots, n_J^{-1/2}\sum_{i=1}^nI(Y_{iJ}>0)\frac{\partial \log h(Y_{iJ}|\gamma,\sigma_J)}{\partial \gamma} \right\}
\]
is asymptotically distributed as normal with covariance matrix $(\gamma+1)^2 I_J$, where $I_J$ is the $J$-identity matrix. 
This outcome implies that
\[
\frac{\partial \ell_{\cal A}(\gamma,\sigma_1,\ldots,\sigma_J)}{\partial \gamma}
=\frac{1}{n_{{\cal A}}^{1/2}}\sum_{j=1}^J n_j^{1/2} \frac{1}{n_j^{1/2}}\sum_{i=1}^nI(Y_{ij}>0)\frac{\partial \log h(Y_{ij}\mid\gamma,\sigma_j)}{\partial \gamma}
\]
is asymptotically distributed as normal with mean 0 and variance $(\gamma+1)^2$. 
Consequently, 
\[
n_{{\cal A}}^{1/2}(\hat{\gamma}-\gamma) \xrightarrow{D} N(0, (\gamma+1)^2)
\]
holds. 
Next, for $j=1,\ldots,J$, we have
\[
\frac{\partial^2 \ell_{\cal A}(\gamma,\sigma_1,\ldots,\sigma_J)}{\partial \gamma \partial \sigma_j}
=\sum_{i=1}^nI(Y_{ij}>0) \frac{\partial^2 \log h(Y_{ij}\mid\gamma,\sigma_j)}{\partial\gamma\partial\sigma_j}
\]
and we have 
\[
\int \frac{\partial^2 \log h(y\mid\gamma,\sigma_j)}{\partial\gamma\partial\sigma_j} h(y|\gamma,\sigma_j)dy=0
\]
from Lemma \ref{Hessian}. 
Tedious but straightforward calculation implies 
\[
\frac{1}{\sigma_j}\left|\int \frac{\partial^2 \log h(y\mid\gamma,\sigma_j)}{\partial\gamma\partial\sigma_j} Q^\prime((\gamma+1)y/\sigma_j\mid\gamma,\rho_j)dy\right| <\infty.
\]
Consequently, the covariance between $\hat{\gamma}$ and $\hat{\sigma}_j/\sigma_j$ converges to zero. 
Similarly to the calculations above, we obtain 
\[
n_j^{1/2}\left(\frac{\hat{\sigma}_j}{\sigma_j} - 1\right)  \xrightarrow{D} N(0, 2\gamma + 1),\ \ j=1,\ldots, J. 
\]
\end{proof}

\begin{proof}[Proof of Theorem \ref{MainTheorem}]

Let 
\[
(\gamma_{0,{\cal B}_{k,\ell}}, \sigma_{0,j}, j\in{\cal B}_{k,\ell}) = \underset{\gamma,\sigma_j,j\in{\cal B}_{k,\ell}}{\argmin}\ \ \sum_{j\in{\cal B}_{k,\ell}}n_jE\left[-\log h(Y_{ij}|\gamma, \sigma_j) \right] ,\ \ \ell=1,\ldots,L_k, k=1,\ldots,K.
\]
We then define
\begin{eqnarray*}
{\cal M}
=\left\{
(\vec{\gamma},\vec{\sigma}) : \sqrt{n_{{\cal B}_\ell}}|\gamma_j-\gamma_{0,{\cal B}_{k,\ell}}|< C, \sqrt{n_j}\left|\frac{\sigma_{j}}{\sigma_{0,j}}-1\right|<C, j\in{\cal B}_{k,\ell}, \ell=1,\ldots,L_k,k=1,\ldots,K
\right\}
\end{eqnarray*}
and 
\begin{eqnarray*}
{\cal M}_n&=&{\cal M}_n(t_n)\\
&=&
{\cal M}\cap \left\{
(\vec{\gamma},\vec{\sigma}) : |\gamma_j-\hat{\gamma}_{{\cal B}_{k,\ell}}|< t_n, \left|\frac{\sigma_{j}}{\hat{\sigma}_{{\cal B}_{k,\ell},j}}-1\right|<t_n, j\in{\cal B}_{k,\ell}, \ell=1,\ldots,L_k, k=1,\ldots,K
\right\}
\end{eqnarray*}
for sequence $t_n>0$.

Because the Hessian matrix of $\ell_F$ is positive definite, $\ell_F$ has a unique minimizer. 
Therefore, if we show 
\begin{eqnarray*}
\ell_F(\vec{\gamma},\vec{\sigma}) \geq \ell_F(\hat{\vec{\gamma}}_{{\rm oracle}}, \hat{\vec{\sigma}}_{{\rm oracle}})\ \ {\rm for\ any}\  (\vec{\gamma},\vec{\sigma})\in {\cal M}_n, 
\end{eqnarray*}
 $(\hat{\vec{\gamma}}_{{\rm oracle}}, \hat{\vec{\sigma}}_{{\rm oracle}})$ is a strictly local minimizer of $\ell_F$.

 For $(\vec{\gamma},\vec{\sigma})\in{\cal M}$, $\vec{\gamma}^*=(\gamma_1^*,\ldots,\gamma_J^*)$ denotes the orthogonal projection on to the oracle space: $\Gamma_{{\rm oracle}} = \{\vec{\gamma}: \gamma_i=\gamma_j, i,j\in{\cal B}_{k,\ell}, \ell=1,\ldots,L_k, k=1,\ldots,K\}$. 
Also, $(\vec{\gamma}^*, \vec{\sigma})\in{\cal M}$ is satisfied.
To show (\ref{PurposeProof}), we investigate the following two inequalities:
\begin{eqnarray}
\ell_F(\vec{\gamma},\vec{\sigma}) \geq \ell_F(\vec{\gamma}^*,\vec{\sigma}) \geq \ell_F(\hat{\vec{\gamma}}_{{\rm oracle}}, \hat{\vec{\sigma}}_{{\rm oracle}}). \label{PurposeProof}
\end{eqnarray}

We first show
\begin{eqnarray}
\ell_F(\vec{\gamma}^*,\vec{\sigma}) \geq \ell_F(\hat{\vec{\gamma}}_{{\rm oracle}}, \hat{\vec{\sigma}}_{{\rm oracle}}) \label{oracleStar}
\end{eqnarray}
for any $(\vec{\gamma}^*, \vec{\sigma})\in {\cal M}$. 
From Theorem \ref{Oracle}, it is readily apparent that $(\hat{\vec{\gamma}}_{{\rm oracle}},\hat{\vec{\sigma}}_{{\rm oracle}})\in{\cal M}$ with probability tends to one. 
Therefore, the penalty term of $\ell_F(\hat{\vec{\gamma}}_{{\rm oracle}}, \hat{\vec{\sigma}}_{{\rm oracle}})$ is 
\begin{eqnarray}
n\lambda \sum_{i,j} c_{i,j}v_{j,k}|\hat{\gamma}_{i}-\hat{\gamma}_j| = 
n\lambda \sum_{i,j\in{\cal B}^*} c_{i,j}v_{j,k}|\hat{\gamma}_{i}-\hat{\gamma}_j|. \label{penOracle}
\end{eqnarray}
For $i,j\in{\cal B}^*$, there exist $k, h$ with $k\not=h$ such that $\hat{\gamma}_i=\hat{\gamma}_{{\cal A}_k}$ and $\hat{\gamma}_j=\hat{\gamma}_{{\cal A}_h}$. 
For any $j=1,\ldots,J$, the cluster-wise estimator satisfies $\tilde{\gamma}_j= \gamma_j+ O(n_j^{-1/2})$.
 Therefore, under (A3), for $i,j\in{\cal B}^*$, we have $|\tilde{\gamma}_i-\tilde{\gamma}_j|=\delta + O(\max_j n_j^{-1/2})$ as $n\rightarrow\infty$, which implies that, for $i,j\in {\cal B}^*$, 
\[
v_{j,k} =p_\lambda(|\tilde{\gamma}_i-\hat{\gamma}_j|) \leq p_\lambda(a\lambda) =0
\]
because $|\tilde{\gamma}_i-\tilde{\gamma}_j|>\delta/2 > a\lambda$ for sufficiently large $n$. 
Consequently, (\ref{penOracle}) converges to zero as $n\rightarrow\infty$. 
By a similar argument, because $(\vec{\gamma}^*,\vec{\sigma})\in{\cal M}$, the penalty term of $\ell_F(\vec{\gamma}^*,\vec{\sigma})$ also converges to zero as $n\rightarrow\infty$. 
By the definition of $(\hat{\vec{\gamma}}_{{\rm oracle}},\hat{\vec{\sigma}}_{{\rm oracle}})$, we have 
\[
\ell(\vec{\gamma}^*, \vec{\sigma}) \geq \ell(\hat{\vec{\gamma}}_{{\rm oracle}},\hat{\vec{\sigma}}_{{\rm oracle}})
\]
for any $(\vec{\gamma}, \vec{\sigma})\in\mathbb{R}^J\times\mathbb{R}^J$. 
Consequently, with probability tends to one, (\ref{oracleStar}) holds.

We next present the first inequality of (\ref{PurposeProof}), i.e., 
\begin{eqnarray}
\ell_F(\vec{\gamma},\vec{\sigma}) - \ell_F(\vec{\gamma}^*,\vec{\sigma}) \geq 0.
\end{eqnarray}
Similarly to the inequality presented above, the penalty term of $\ell_F(\vec{\gamma}^*,\vec{\sigma})$ becomes zero with probability tending to one. 
Next we consider the penalty term of $\ell_F(\vec{\gamma},\vec{\sigma})$. 
For $j,k\in{\cal B}^*$, with probability tends to one, we have $|\tilde{\gamma}_j-\tilde{\gamma}_k|>\delta/2$, which implies that $v_{j,k}=\rho_\lambda(|\tilde{\gamma}_j-\tilde{\gamma}_k|)\leq p_\lambda(a\lambda)=0$. 
For $i,j\in{\cal B}_{k,\ell}$, we have $|\tilde{\gamma}_j-\tilde{\gamma}_k| \leq O(\max\{n_j^{-1/2},n_k^{-1/2}\})<\lambda$. Therefore,
\[
v_{j,k}=p_\lambda(|\tilde{\gamma}_i-\tilde{\gamma}_j|)\geq |\rho_\lambda(\lambda)|> a_0.
\]
This result implies that, for sufficiently large $n$, the penalty term of  $\ell_F(\vec{\gamma},\vec{\sigma})$ is 
\begin{eqnarray*}
&&n\lambda  \sum_{k=1}^K\sum_{\ell=1}^{L_k}\sum_{i,j\in{\cal B}_{k,\ell}}v_{j,k}c_{i,j}|\gamma_i-\gamma_j| +\lambda  \sum_{j,k\in{\cal A}^*} v_{j,k}c_{j,k}|\gamma_j-\gamma_k|\\
&&
=n\lambda  \sum_{k=1}^K\sum_{\ell=1}^{L_k}\sum_{i,j\in{\cal B}_{k,\ell}}v_{j,k}c_{i,j}|\gamma_i-\gamma_j| \\
&&\geq n \lambda a_0\sum_{k=1}^K\sum_{\ell=1}^{L_k}\sum_{i,j\in{\cal B}_{k,\ell}}c_{i,j}|\gamma_i-\gamma_j| \\
&&\geq  n\lambda a_0  \sum_{k=1}^K\sum_{\ell=1}^{L_k}\max_{i,j\in{\cal B}_{k,\ell}} |\gamma_i-\gamma_j|.
\end{eqnarray*}

For $\vec{\gamma}^*\in\Gamma_{{\rm oracle}}$, we write $\gamma_j^*=\gamma_{{\cal B}_{k,\ell}}^*$ for $j\in{\cal B}_{k,\ell}$ for $\ell
=1,\ldots,L_k, k=1,\ldots,K$. 
Then we consider the log-likelihood part as
\[
\ell(\vec{\gamma},\vec{\sigma})- 
\ell(\vec{\gamma}^*,\vec{\sigma})
=\sum_{k=1}^K \sum_{\ell=1}^{L_k} \sum_{j\in{\cal B}_{k,\ell}}\sum_{i=1}^nI(Y_{ij}>0)\left\{\log h(Y_{ij}\mid\gamma_j,\sigma_j)-\log h(Y_{ij}\mid\gamma^*_{{\cal B}_{k,\ell}},\sigma_{j})\right\}.
\]

For any $(k,\ell)$, by the Taylor expansion, we obtain
\begin{eqnarray}
&&\sum_{j\in{\cal B}_{k,\ell}}\sum_{i=1}^nI(Y_{ij}>0)\left\{\log h(Y_{ij}\mid\gamma_j,\sigma_j)-\log h(Y_{ij}\mid\gamma^*_{{\cal B}_{k,\ell}},\sigma_{j})\right\}\nonumber\\
&&=
\sum_{j\in{\cal B}_{k,\ell}} (\gamma_j-\gamma_{{\cal B}_{\ell,k}}^*) \sum_{i=1}^nI(Y_{ij}>0)\frac{\partial \log h(Y_{ij}\mid\gamma_j^\dagger,\sigma_j)}{\partial \gamma_j}, \label{TaylorGamma}
\end{eqnarray}
where $\gamma_j^\dagger\in\mathbb{R}$ satisfies $|\gamma_j-\gamma^\dagger_j|< |\gamma_j-\gamma^*_{{\cal B}_{k,\ell}}|$ for $j\in{\cal B}_{k,\ell}$. 
We further take the Taylor expansion to $\partial \log h(Y_{i,j}|\gamma_j^\dagger,\sigma_j)/\partial \gamma_j$ around $(\gamma_j,\sigma_j)=(\gamma_{0,j},\sigma_{0,j})$ as 
\begin{eqnarray*}
\frac{\partial \log h(Y_{ij}\mid\gamma_j^\dagger,\sigma_j)}{\partial \gamma_j}
&=&
\frac{\partial \log h(Y_{ij}\mid\gamma_{0,j},\sigma_{0,j})}{\partial \gamma_j}\\
&+&\left\{\frac{\partial^2 \log h(Y_{ij}\mid\gamma_{0,j},\sigma_{0,j})}{\partial^2 \gamma_j}(\gamma_j^\dagger-\gamma_{0,j})
+
\frac{\partial^2 \log h(Y_{ij}\mid\gamma_{0,j},\sigma_{0,j})}{\partial \gamma_j\partial \sigma_j}(\sigma_j-\sigma_{0,j})
 \right\}(1+o_P(1)).
\end{eqnarray*}
From Lemma \ref{Hessian}, we have 
\[
\frac{1}{\sqrt{n_j}}\sum_{i=1}^nI(Y_{ij}>0)\frac{\partial \log h(Y_{ij}\mid\gamma_{0,j},\sigma_{0,j})}{\partial \gamma_j} =O_P(1).
\]
Because 
\[
E\left[\frac{\partial^2 \log h(Y_{ij}\mid\gamma_{0,j},\sigma_{0,j})}{\partial \gamma_j\partial \sigma_j}\right]=0
\]
and $|\sigma_j/\sigma_{0,j}-1|< O(n_j^{-1/2})$, 
we have
\[
\left|\frac{1}{\sqrt{n_j}}\sum_{i=1}^nI(Y_{ij}>0)\frac{\partial^2 \log h(Y_{ij}\mid\gamma_{0,j},\sigma_{0,j})}{\partial \gamma_j\partial \sigma_j}(\sigma_j-\sigma_{0,j})\right| = o_P(1). 
\]
Consequently, there exists the constant $C^*>0$ such that for any $j\in{\cal B}_{k,\ell}$ with $\ell=1,\ldots,L_k, k=1,\ldots,K$, 
\[
\left|\sum_{i=1}^nI(Y_{ij}>0) \frac{\partial \log h(Y_{ij}\mid\gamma_j^\dagger,\sigma_j)}{\partial \gamma_j}\right| \leq 
C^*n_j^{-1/2}|\gamma_j^\dagger - \gamma_{0,j}|.
\]
with probability tending to one. 
In fact, $\gamma_{0,j}=\gamma_{0,{\cal B}_{k,\ell}}$ for $j\in{\cal B}_{k,\ell}$. 
Then, by the definition of $\gamma_j^*$, we obtain 
$|\gamma_j^*-\gamma_{j}| \leq |\gamma_j^*-\hat{\gamma}_{{\cal B}_{k,\ell}}| \leq t_n$. 
From this and the definition of $\gamma_j^\dagger$, it is readily apparent that for any $j\in{\cal B}_{k,\ell}$,
\begin{eqnarray}
|\gamma_j^\dagger-\gamma_{0_j}|\leq |\gamma_j^\dagger-\gamma_j^*|+|\gamma_j^*-\gamma_{0,j}|\leq  |\gamma_j-\gamma_j^*| + |\gamma_j-\gamma_{0,j}|\leq t_n +  C n_{{\cal B}_{k,\ell}}^{-1/2}. \label{Bound}
\end{eqnarray}
Therefore, we obtain that for $j\in{\cal B}_{\ell,k}$, 
\[
\left|\sum_{i=1}^nI(Y_{ij}>0) \frac{\partial \log h(Y_{ij}\mid\gamma_j^\dagger,\sigma_j)}{\partial \gamma_j}\right| \leq C_1 n_j^{1/2}t_n + C_2 (n_j/n_{{\cal B}_{k,\ell}})^{1/2}
\]
for some constant $C_1, C_2>0$, with probability tending to one.
Because $\gamma_{{\cal B}_{k,\ell}}^*$ is the orthogonal projection of $\gamma_j, j\in{\cal B}_{k,\ell}$ on to $\Gamma_{{\rm oracle}}$, we can express 
\[
\gamma_{{\cal B}_{k,\ell}}^* = \frac{1}{|{\cal B}_{k,\ell}|}\sum_{j\in{\cal B}_{k,\ell}} \gamma_j.
\]
Therefore, we obtain 
\begin{eqnarray}
\sum_{j\in{\cal B}_{k,\ell}} (\gamma_j-\gamma_{{\cal B}_{\ell,k}}^*)
\leq
 \frac{1}{|{\cal B}_{k,\ell}|}\sum_{i,j\in{\cal B}_{k,\ell}}|\gamma_i-\gamma_j| \leq |{\cal B}_{k,\ell}| \max_{i,j\in{\cal B}_{k,\ell}}|\gamma_i-\gamma_j|. \label{OrthogonalIneq}
\end{eqnarray}

From (\ref{TaylorGamma}), (\ref{Bound}), and (\ref{OrthogonalIneq}), we obtain 
\begin{eqnarray*}
|\ell(\vec{\gamma},\vec{\sigma})-  \ell(\vec{\gamma}^*,\vec{\sigma})|
\leq  \sum_{k=1}^K\sum_{\ell=1}^{L_k} \left\{C_1 t_n \max_{j\in{\cal B}_{k,\ell}} n_j^{1/2} + C_2 \frac{\max_{j\in{\cal B}_{k,\ell}} n_j^{1/2}}{n_{{\cal B}_{k,\ell}}^{1/2}}\right\}  |{\cal B}_{k,\ell}| \max_{i,j\in{\cal B}_{k,\ell}}|\gamma_i-\gamma_j|.
\end{eqnarray*}
Therefore, we have 
\begin{eqnarray*}
&&\ell_F(\vec{\gamma}, \vec{\sigma})-\ell_F(\vec{\gamma}^*, \vec{\sigma})\\
&&\geq
n\lambda a_0  \sum_{k=1}^K\sum_{\ell=1}^{L_k}
\left\{
1- 
 \frac{ |{\cal B}_{k,\ell}|}{n\lambda a_0}\left\{C_1 t_n \max_{j\in{\cal B}_{k,\ell}} n_j^{1/2} + C_2 \frac{\max_{j\in{\cal B}_{k,\ell}} n_j^{1/2}}{n_{{\cal B}_{k,\ell}}^{1/2}}\right\} 
\right\}\max_{i,j\in{\cal B}_{k,\ell}} |\gamma_i-\gamma_j|.
\end{eqnarray*}
From the definition of $n_j$, it is readily apparent that 
\[
 \frac{\max_{j\in{\cal B}_{k,\ell}} n_j^{1/2}}{n_{{\cal B}_{k,\ell}}^{1/2}} \rightarrow 0
\]
as $n\rightarrow\infty$. 
Next, by taking $t_n=(\max_{1\leq j\leq J} n_j)^{-1/2} (\max_{\ell,k} |{\cal B}_{k,\ell}|)^{-1} a_0 (n\lambda)^\eta$ for some $\eta\in(0,1)$, we have
\[
\frac{1}{a_0 n\lambda}C_1 t_n \max_{j\in{\cal B}_{k,\ell}} n_j^{1/2}\rightarrow 0,\ \ {\rm as}\ \ n\rightarrow\infty,
\]
which indicates that  
\[
\ell_F(\vec{\gamma}, \vec{\sigma})-\ell_F(\vec{\gamma}^*, \vec{\sigma}) >0.
\]
This completes the proof. 
\end{proof}

\section*{SUPPLEMENTARY}

The Supplementary file provides additional simulation study. The preliminary and supplementary analyses of the rainfall application are related to Section 6 to support our main findings.

\subsection*{ACKNOWLEDGEMENTS} 
T. Y. was supported by Institute of Mathematics for Industry, Joint Usage/Research Center in Kyushu University (FY2024, Reference No. 2024a004). 
T. Y. was also financially supported by JSPS KAKENHI (Grant Nos. JP22K11935 and JP23K28043). 
S. K. was financially supported by JSPS KAKENHI (Grant Nos. JP23K11008, JP23H03352, JP23H00809, and JP25H01107).

\def\bibname{References}

\newpage

\setcounter{section}{0}
\renewcommand{\thesection}{S\arabic{section}}

\begin{center}
    {\LARGE \textbf{Supplementary file for \\ ^^ ^^ Structural grouping of extreme value models \\ via graph fused lasso''}}
    
    \vspace{1.5em}
    
    {\large Takuma Yoshida$^{1,3}$, Koki Momoki$^{2}$, and Shuichi Kawano$^{4}$}
    
    \vspace{1em}
    
    {\small
    $^{1,2}$\textit{Graduate School of Science and Engineering, Kagoshima University}\\
    $^{3}$\textit{Data Science and AI Innovation Research Promotion Center, Shiga University}\\
    $^{4}$\textit{Faculty of Mathematics, Kyushu University}
    }
\end{center}

\vspace{2em}

\section{Additional Simulation Study}

In this section, we designate GFL as the proposed grouping estimator constructed by graph fused lasso.

\subsection{Large sample size}

We consider a similar model to that presented in the main article. 
However, the sample size is $n=600$. 
Roughly speaking, in this setting, the cluster-wise estimator already has good performance. 
Figure \ref{FigSupSection1_1_1} shows the overall behavior and MSE ratio of the GFL and cluster-wise estimator. 
From the left panel, it is readily apparent that the difference between two estimators was quite small. 
The biases of the estimators were also small. 
In addition, the ratio of MSE appeared in right panel was close to one for all clusters. 
Consequently, when the sample size is large, the GFL has similar performance as cluster-wise estimator, which implies that the proposed method is useful as summarization method of clusters.

\begin{figure}[h]
\begin{center}
\includegraphics[width=150mm,height=60mm]{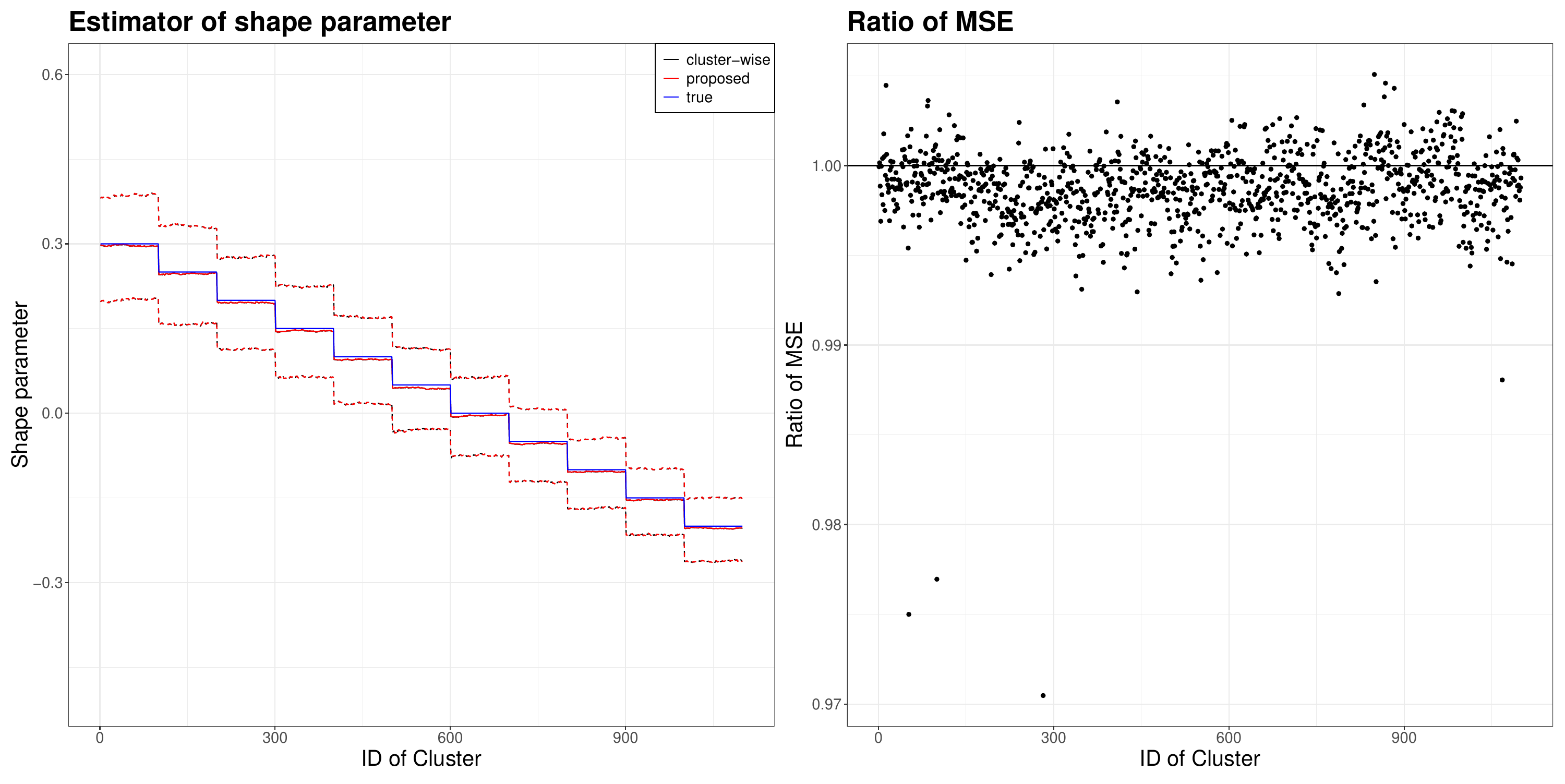}
\end{center}
\caption{Results of proposed estimator for each cluster $n=600$. 
Left: The cluster-wise estimator (black), the proposed estimator (red), and the true shape parameter (blue) for all clusters. 
The solid lines are medians. Dashed lines are lower/upper 5\% quantile. 
Right: ratio of MSE for all clusters $j=1,\ldots,J$. 
 \label{FigSupSection1_1_1}}
\end{figure}

In Figure \ref{FigSupSection1_1_2}, we presented the coverage probabilities and ratio of length of 95\% confidence interval (CI) of return level with period $\tau=1/(2n)$. 
From the left panel, the coverage probabilities constructed by GFL were lower than those by cluster-wise estimator for $j\leq 700$. 
These clusters have positive or zero shape parameters. 
On the other hand, for $j\geq 701$, where the shape parameter is negative, the GFL was not inferior to cluster-wise estimator 
In particular, for $j\geq 901$ ($\gamma_j \leq -0.1$), the GFL outperformed cluster-wise estimator on coverage probability. 
We can also confirm from right panel that the length of CI constructed rom GFL is short. 
Consequently, for large $n$, which is the case in which the estimation bias is small, the efficiency of the GFL was fully observed for clusters having negative shape parameters. 
We lastly note about the coverage probabilities for $601\leq j\leq 700$, which are small compared with other clusters. 
In these clusters, the true shape parameters are zero. 
However, since the estimates do not obtain exactly to zero, they tend to oscillate between positive and negative values. 
This oscillation is likely a primary factor in the reduction of coverage probabilities.

\begin{figure}
\begin{center}
\includegraphics[width=150mm,height=60mm]{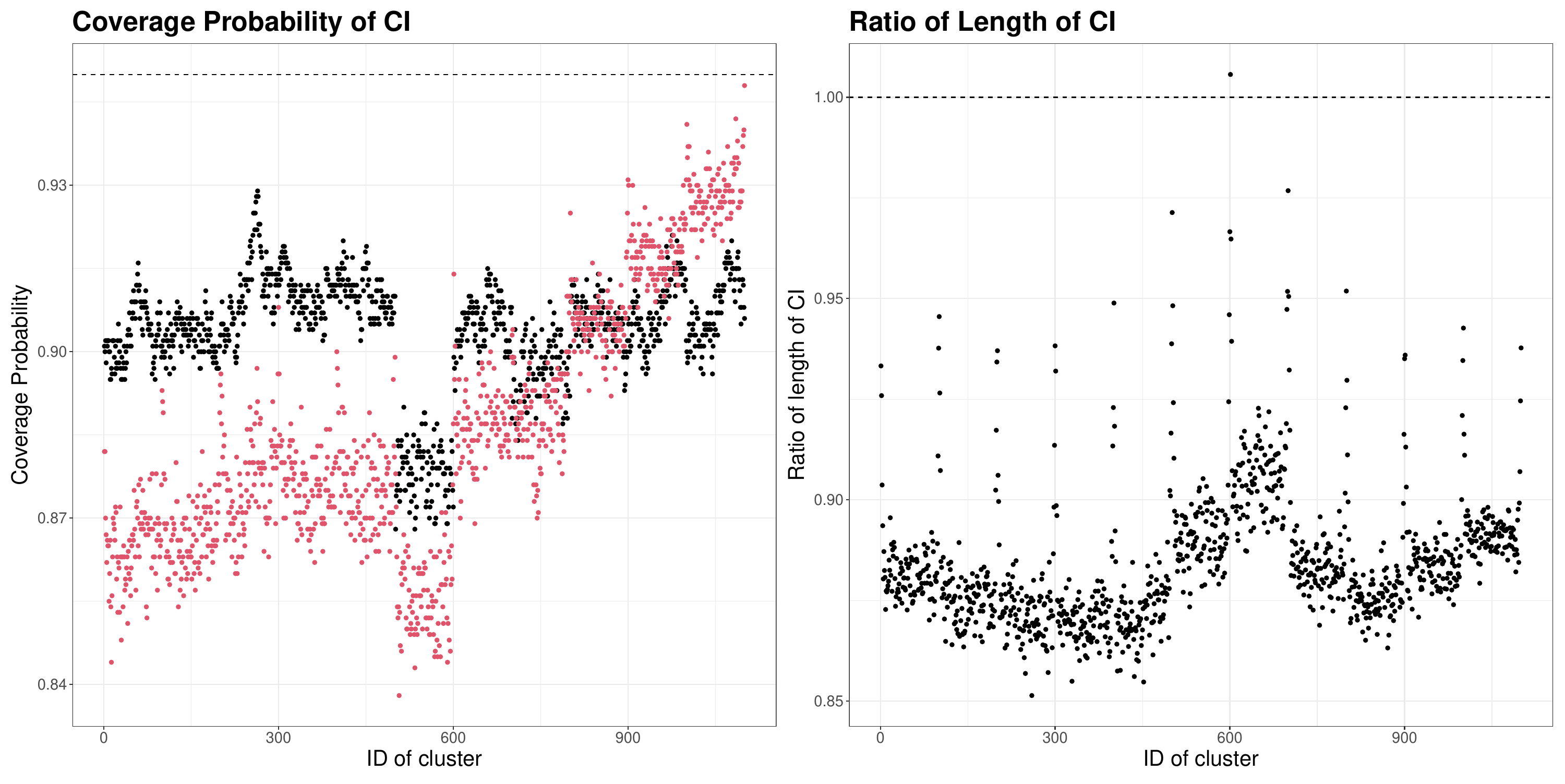}
\end{center}
\caption{ Results for 95\% confidence interval (CI) of return level for $j=1,\ldots,J$ for $n=500$. 
Left: Coverage probabilities from GFL (red) and cluster-wise method (black). 
Right: Average of ratio of length of CI of GFL over cluster-wise estimator.
\label{FigSupSection1_1_2}}
\end{figure}

\subsection{Homogeneity graph}

We considered a similar model to that given in the main article. 
The sample size was $n=120$. 
However, the graph structure of GFL was changed. 
In the main article, we used the graph under the situation that prior knowledge is useful. 
However, if we have no prior information between clusters, then the graph structure of GFL can only be constructed using the distance between cluster-wise estimators for each pair of clusters.
We defined the edge of the graph as 
\[
c_{j,k} = I(|\tilde{\gamma}_j-\tilde{\gamma}_k|<\delta)
\]
for some threshold $\delta>0$. 
For discussion in this section, $\delta$ is defined as the maximum value satisfying
\[
\sum_{j<k} I(|\tilde{\gamma}_j-\tilde{\gamma}_k|<\delta) <3000.
\]
Other settings are similar to those presented in Section 5 of the main article. 

The left panel of Figure \ref{FigSupSection1_2_1} shows the median, lower/upper 5\% quantile of the estimators, and true shape parameter for each cluster.
We see that both estimators have similar behavior. 
The ratios of MSE are shown in the right panel. 
For many clusters, MSE of GFL outperformed that of cluster-wise estimator. 
In particular, for $j\geq 601$ corresponding to zero and negative shape parameters, this tendency was remarkable. 
The result of MSE was drastically different from the results found for the estimator with a correctly specified graph presented in Figure 1 of main article.

\begin{figure}
\begin{center}
\includegraphics[width=150mm,height=60mm]{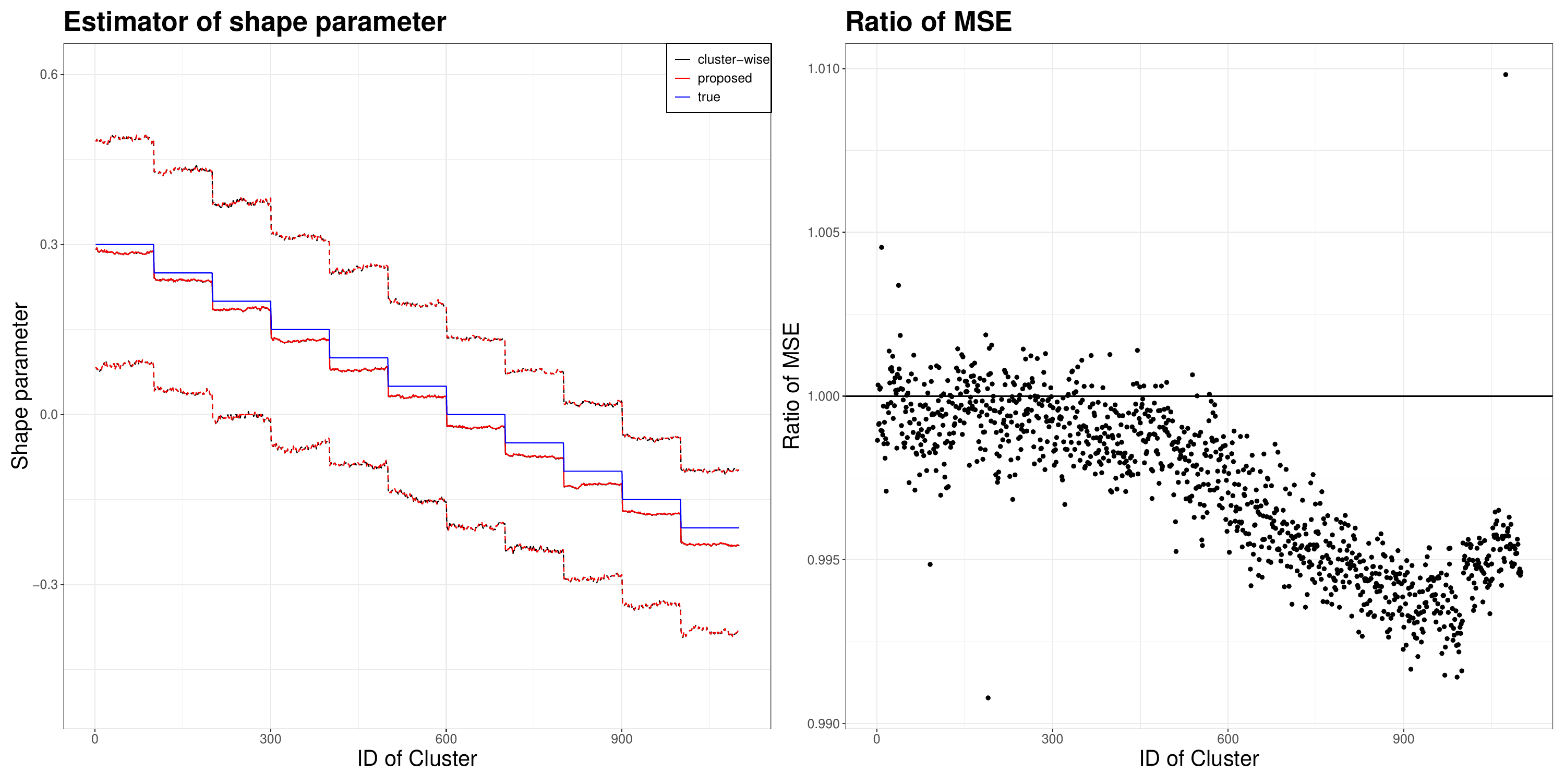}
\end{center}
\caption{Results of the proposed estimator with homogeneity graph for $j=1,\ldots,J$. 
Left: The cluster-wise estimator (black), the proposed estimator (red), and the true shape parameter (blue) for all clusters. 
The solid lines are medians. Dashed lines are lower/upper 5\% quantiles. 
Right: ratio of MSE for all clusters $j=1,\ldots,J$. 
 \label{FigSupSection1_2_1}}
\end{figure}

The results of the 95\% confidence intervals (CIs) for the return levels are presented in Figure \ref{FigSupSection1_2_2}. Their descriptions are similar to those of Figure \ref{FigSupSection1_1_2}.
This behavior is explainable by the length of the CI shown in the right panel.
Unfortunately, the coverage probability constructed by GFL was inferior to that by cluster-wise estimator for all clusters. 
This results indiate that the length of CI  from GFL was too short. 
Actually, we see from right panel that this consideration is true  for $j\leq 900$. 
These results suggest that the number of edges in the homogeneity graph should be controlled carefully. 
An overly dense graph might engender over-grouping and can degrade inferential performance.
Consequently, the findings of this example imply that we need to decide the graph structure carefully because it strongly affects the behavior of the estimator.

\begin{figure}
\begin{center}
\includegraphics[width=150mm,height=60mm]{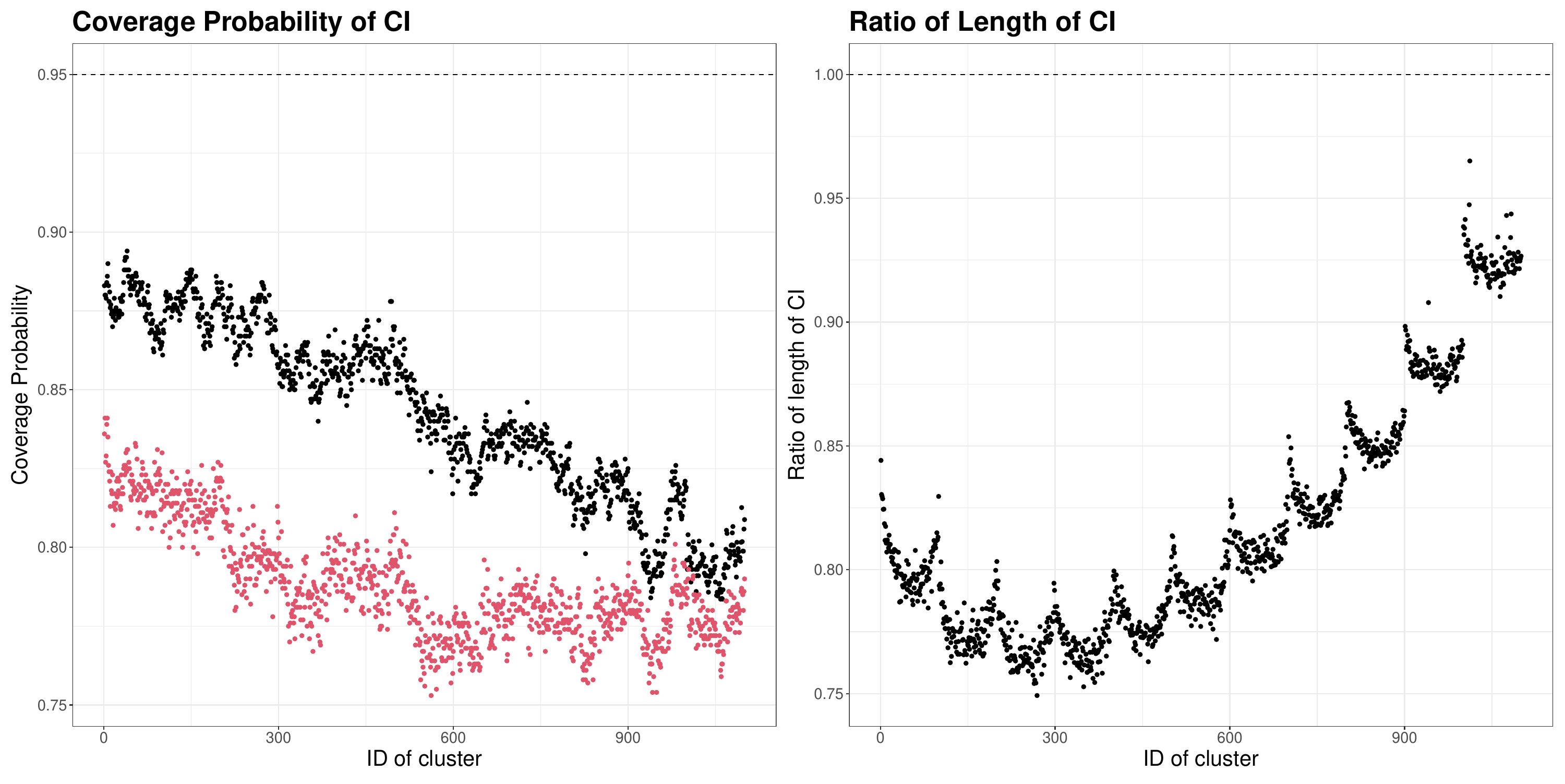}
\end{center}
\caption{ Results for 95\% confidence interval (CI) of return level for $j=1,\ldots,J$. 
Left: Coverage probabilities from proposed method with homogeneity graph (red) and cluster-wise method (black). 
Right: Average of ratio of length of CI of proposed estimator with homogeneity graph over the cluster-wise estimator. \label{FigSupSection1_2_2}}
\end{figure}

\subsection{Additional model with peak over threshold}

Because the true model used in main article is the fully GPD, the threshold selection was not needed. 
As described in this section, we consider a more complex model to elucidate the sensitivity of the threshold. 
For $i=1,\ldots,n$, the uniform random vector $(U_{i,1},\ldots,U_{i,J})$ was generated from a similar method to that of Section 5 of the main article. 
Let $w_0=\Phi^{-1}(0.95)\approx 1.64$, where $\Phi$ is the distribution function of standard normal. 
We then generated $X_{ij}$ from the distribution as
\begin{eqnarray*}
X_{ij}
=
\left\{
\begin{array}{cc}
\Phi^{-1}(U_{ij}) & U_{ij}<0.95,\\
w_0 + F_j^{-1}\left(\frac{U_{ij}-0.95}{0.05}\right) & U_{ij}\geq 0.95,
\end{array}
\right.
\end{eqnarray*}
where $F_j$ is the generalized Pareto distribution (GPD) given in Section 5 of the main article. 
The distribution function presented above is the type of mixed distribution. 
In this setting, $w_0$ is apparent as the true threshold. 
For this example, we need to use the peak over threshold method to fit GPD to investigate the tail behavior of the distribution.
We now set the threshold as $w_j = w_0$ for all $j=1,\ldots,J$. 
In this simulation, the sample size is set as $n=2400$.
Then, the number of effective sample $n_j$ is random and its rate distributes to the binomial with sample size $n=2400$ and occurrence rate $p=0.95$. 
Therefore, we have $E[n_j]=120$, which is similar to the effective sample size in the setting of main article, and $\sqrt{V[n_j]}\approx 10.67$. 
Consequently, for each $j$, we obtain $n_j\in[100, 140]$ with probability about 0.95. 
We applied the GFL to the extreme value data observed from the setting above.
In our method, the graph structure of fused lasso was set as similar to that given in Section 5 of the main article. 

Figure \ref{FigSupSection1_3_1} presents the overall behavior of the GFL and the cluster-wise estimator (left panel), as well as the ratio of the MSE of the two estimators (right panel).
From the left panel, the GFL exhibited behavior very similar to that of the cluster-wise estimator across all clusters, indicating that the proposed method can serve as an effective summarizing approach even for the peak over-threshold setting.
From right panel, for many clusters, the ratio was less than one, demonstrating an overall improvement in estimation accuracy.

\begin{figure}
\begin{center}
\includegraphics[width=150mm,height=60mm]{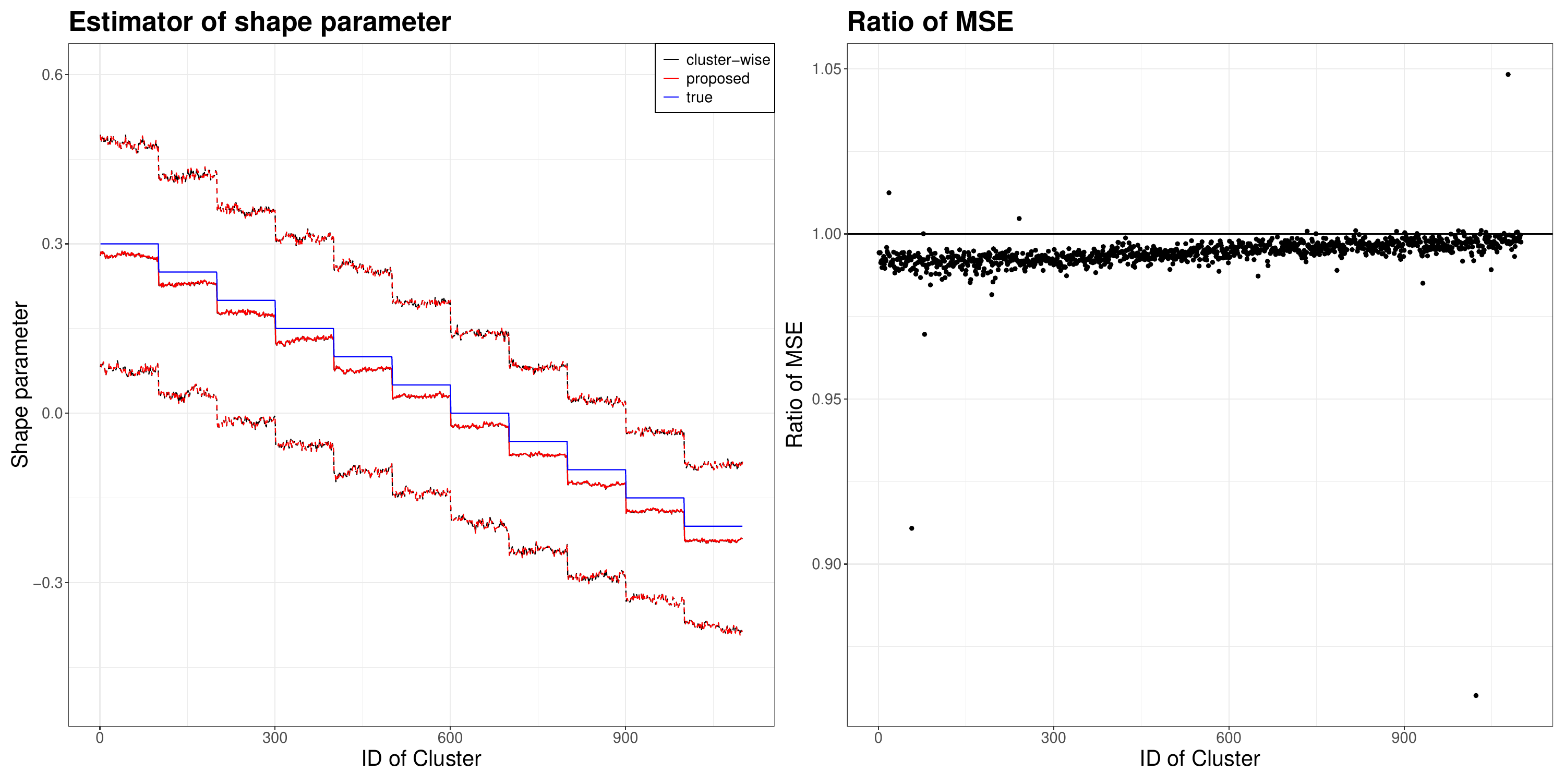}
\end{center}
\caption{Results of proposed estimator for $j=1,\ldots,J$ for mixed model. 
The description is similar to that for Figure \ref{FigSupSection1_2_1}. 
 \label{FigSupSection1_3_1}}
\end{figure}

Figure \ref{FigSupSection1_3_2} describes results of 95\% CI of return level for mixed distribution. 
The descriptions are similar to Figure \ref{FigSupSection1_1_2}.  
Consequently, for several clusters, the coverage probability of CI constructed by the GFL was improved. 
This result clearly demonstrated the efficiency of the GFL. 
Nevertheless, we see from right panel the length of CI by GFL was smaller than that by cluster-wise estimator for many clusters. 
This result and right panel of Figure \ref{FigSupSection1_3_1} indicates the GFL improves not only variance but also bias. 
Consequently, we were able to confirm that the proposed method is more useful when the true model is not full GPD.

\begin{figure}
\begin{center}
\includegraphics[width=150mm,height=60mm]{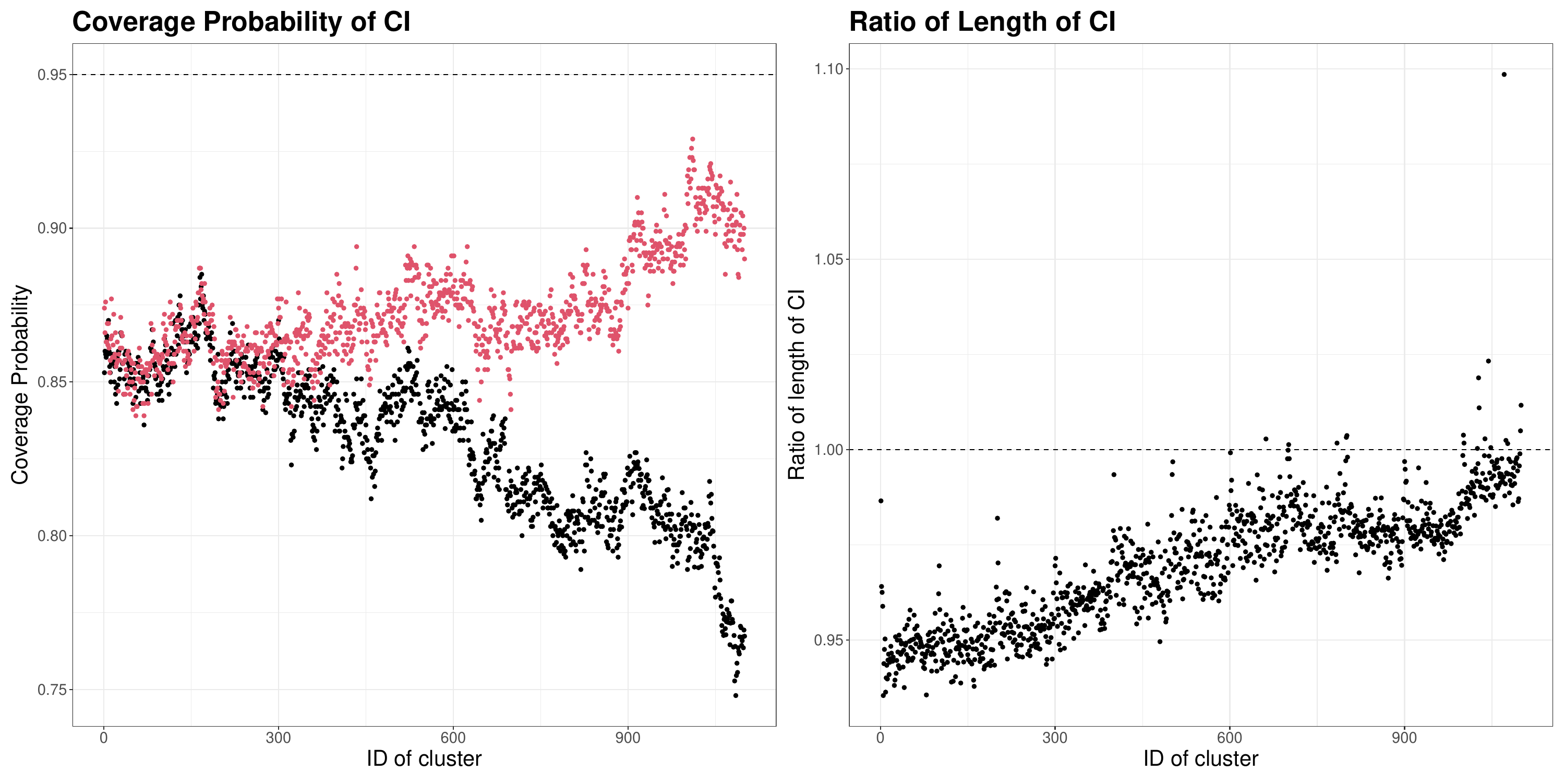}
\end{center}
\caption{ Results for 95\% confidence interval (CI) of return level for $j=1,\ldots,J$ for mixed model. 
The description is similar to that given for Figure \ref{FigSupSection1_1_2}.  \label{FigSupSection1_3_2}}
\end{figure}

\section{Additional analysis in application}

In Section 6 of the main article, we presented the results of applying our proposed method to rainfall data observed at multiple sites in Japan. This section reports preliminary and supplementary analyses that support these applications.

\subsection{Threshold selection}

When we apply the GPD to the data, the threshold selection is important. 
Because the climate condition varies across sites, the threshold must be selected appropriately for each site.
Meanwhile, because our goal is to group the tail behavior of rainfall for sites, we aim to balance the conditions across clusters to the greatest extent possible. 
To this ends, we chose the thresholds for clusters so that the effective sample sizes are equal for all clusters.
We describe how to choose the effective sample size (and threshold). 
Let $X_{(1),j}\geq\cdots\geq X_{(n),j}$ be order statistics from $X_{1,j},\ldots,X_{n,j}$ for $j=1,\ldots,J$. 
Then, for the effective sample size $k$, we define the risk as 
\[
R(k) = \frac{1}{J}\sum_{j=1}^J \frac{1}{k}\sum_{i=1}^k \{X_{(i),j}-Q_j(p_i)\}^2,
\]
where $p_i =i/(k+0.5)$ and 
\[
Q(p) =\frac{\tilde{\sigma}_j}{\tilde{\gamma}_j(\tilde{\gamma}_j+1)}\{p^{-\tilde{\gamma}_j}-1\}.
\]
This risk is evident in the QQ-plot errors of the GPD. 
We define the optimal number of exceedances, $k=k_{opt}$, as the stable point along the path of $R(k)$. 
Then, for each cluster, the threshold $w_j$ is determined so that $n_{0j}=\sum_{i=1}^n I(X_{ij}>w_j) = k_{opt} (j=1,\ldots,J)$. 

\begin{figure}
\begin{center}
\includegraphics[width=150mm,height=80mm]{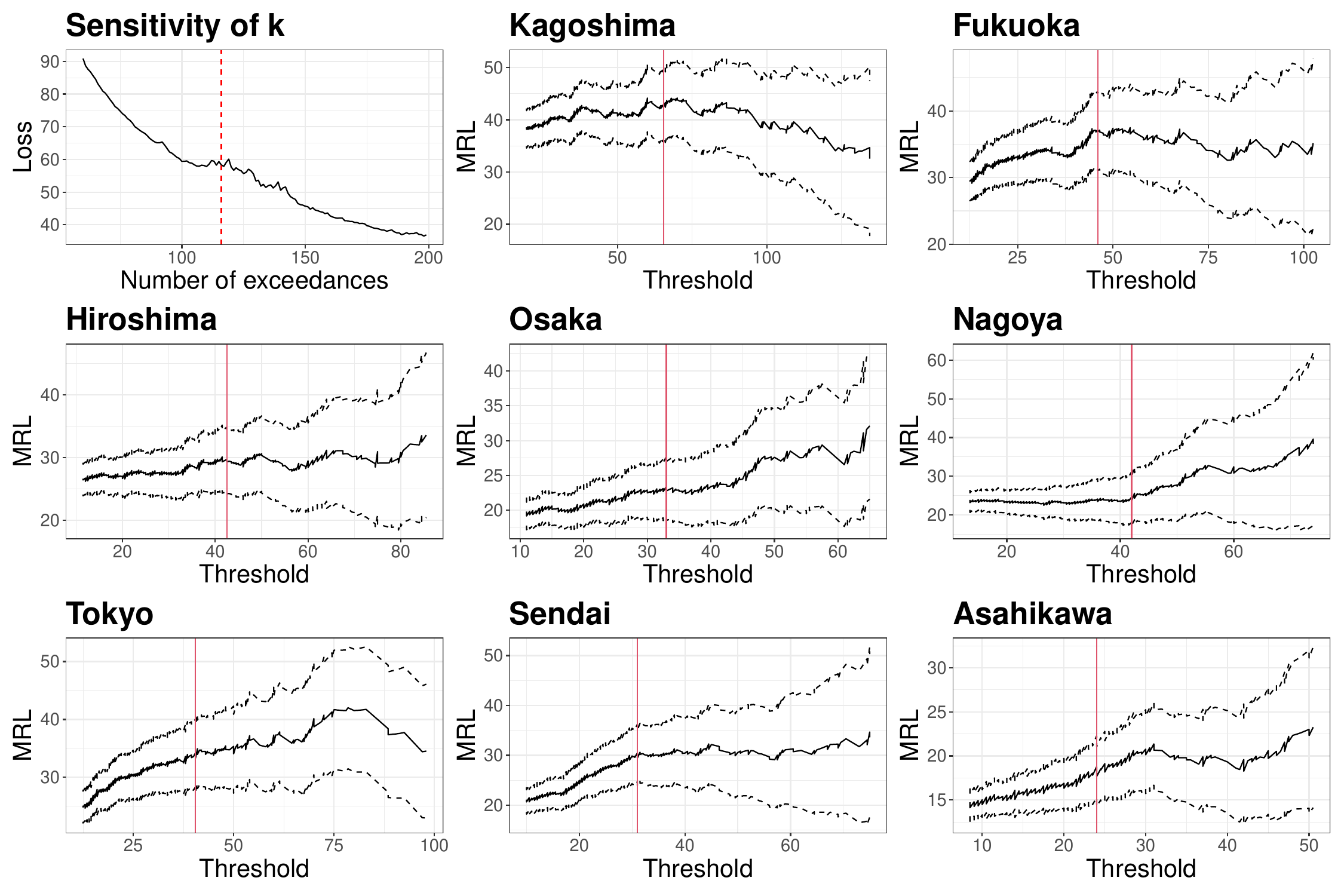}
\end{center}
\caption{Top left: Sensitivity of $R(k)$ over $k$. Remaining panels: Mean residual life plot of selected clusters (sites) with selected threshold (red). \label{FigSupSection2_1}}
\end{figure}

The top left panel of Figure \ref{FigSupSection2_1} illustrates the path of $R(k)$. 
The red vertical line shows stable point, which was $k_{opt}=116$. 
This corresponds to about the 0.967th quantile of data for each cluster.
We next check the validity of the threshold using a mean residual life plot (MRL), as described in Chapter 4 of Coles (2001).
We now specifically examine the eight sites presented in Figure 1 of the main text: Kagoshima, Fukuoka, Hiroshima, Osaka, Nagoya, Tokyo, Sendai, and Asahikawa. 
Panels other than that at the top left in Figure \ref{FigSupSection2_1} present the MRL plot and the 95\% confidence interval for these eight sites. 
For all eight sites, the MRL remains stable around the threshold shown by the red vertical line.

Figure \ref{FigSupSection2_2} presents the spatial distribution of threshold values across all clusters, revealing that these values are highly sensitive to geometric characteristics.

\begin{figure}
 \centering
\includegraphics[width=0.45\linewidth]{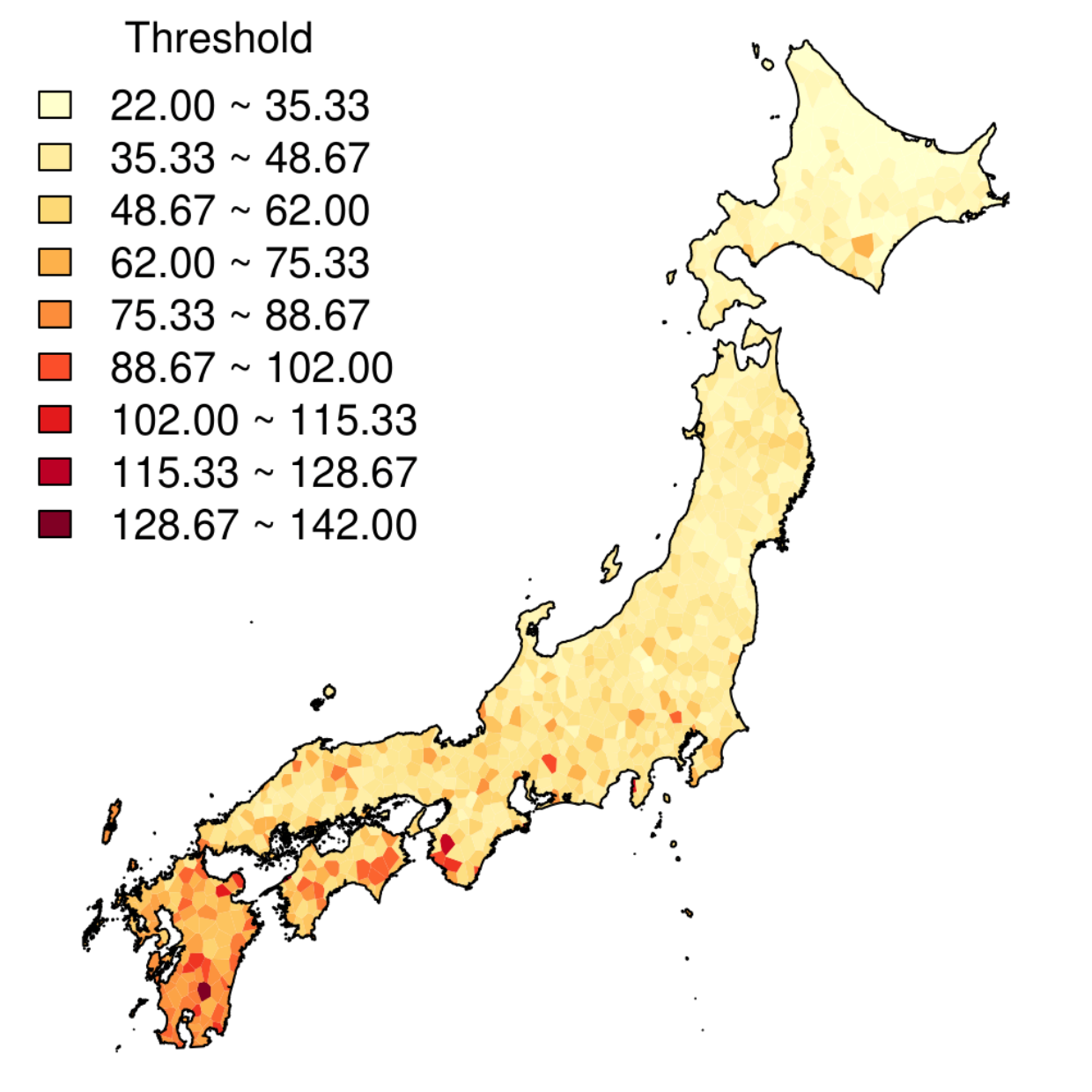}
\caption{Threshold value for all clusters. \label{FigSupSection2_2}}
\end{figure}

\subsection{Information of graph}

In our method, making the graph of the fused lasso is important.
In Section 6 of the main article, the graph is constructed by the asymptotic dependence measure of each pair of the clusters. 
We now confirm which pairs of clusters are connected by graph fused lasso. 
The left panel of Figure \ref{FigSupSection2_34} shows the asymptotic dependence between the precipitation of Kagoshima and other sites (left) and Sendai and other sites (right). 
It is readily apparent that the strength of the asymptotic dependence captures the distance between clusters. 
Although it was not shown here, for any site, the magnitude of the asymptotic dependence is strongly related to the distance between sites, with closer sites showing greater asymptotic dependence. 

\begin{figure}[h]
\centering
  \begin{minipage}{0.45\linewidth}
    \centering
    \includegraphics[width=\linewidth]{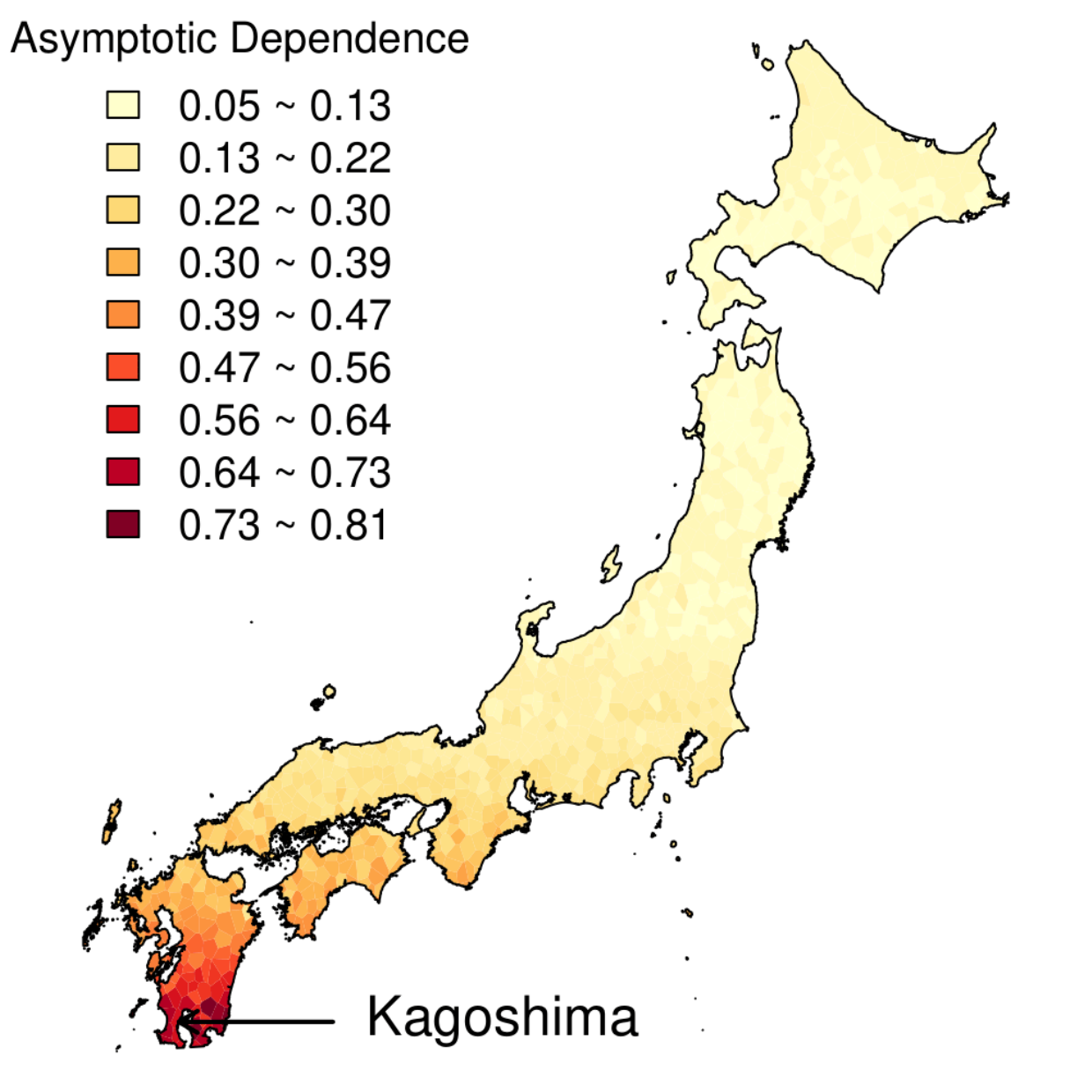}
    \vspace{0.5\baselineskip}
    \textbf{Kagoshima and other sites}
  \end{minipage}
  \hfill
  \begin{minipage}{0.45\linewidth}
    \centering
    \includegraphics[width=\linewidth]{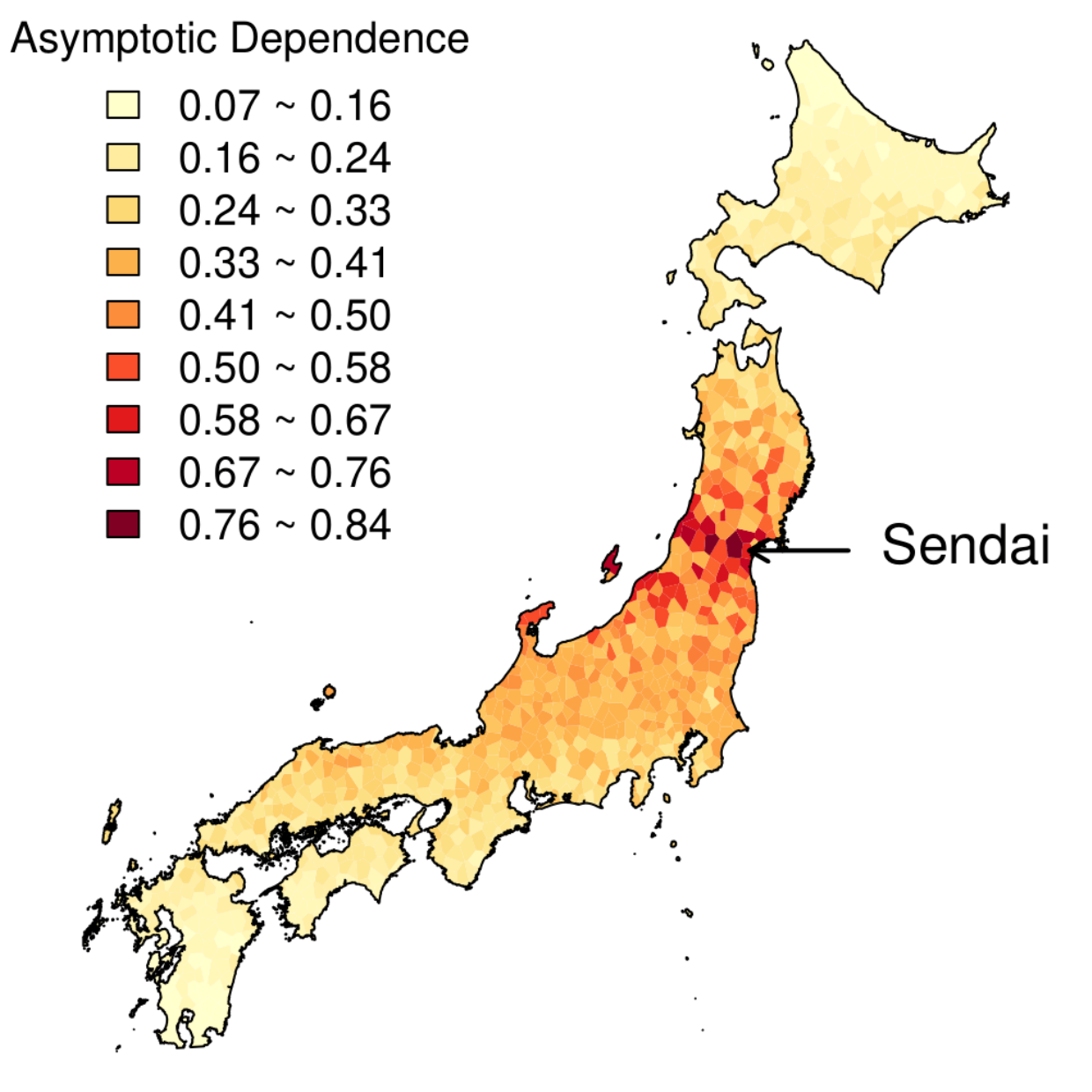}
    \vspace{0.5\baselineskip}
    \textbf{Sendai and other sites}
  \end{minipage}
\caption{Asymptotic dependence for each pair of Kagoshima and other sites (left) and Sendai and other sites (right).}
\label{FigSupSection2_34}
\end{figure}

In the main article, the edge is added to the pairs that the asymptotic dependence over cut-off $c^*=0.76$. 
Then, the total number of edges was 849. 
We next confirm the sensitivity of the cut-off.
The left panel of Figure \ref{FigSupSection2_5} shows the value of $c^*$ and the number of edges. 
For example, when the cut-off value is $0.65$, the number of edges is greater than 6000, whereas the number of edges is smaller than 5 when the cut-off is larger than 0.85. 
Consequently, $c^*$ is crucial to the graph size for grouping clusters. 
In the right panel, we present the BIC value versus cut-off of the asymptotic dependence using the graph of the fused lasso. 
The value $c^*=0.76$ attained the minimum value of BIC in this example.

 \begin{figure}
\begin{center}
\includegraphics[width=150mm,height=60mm]{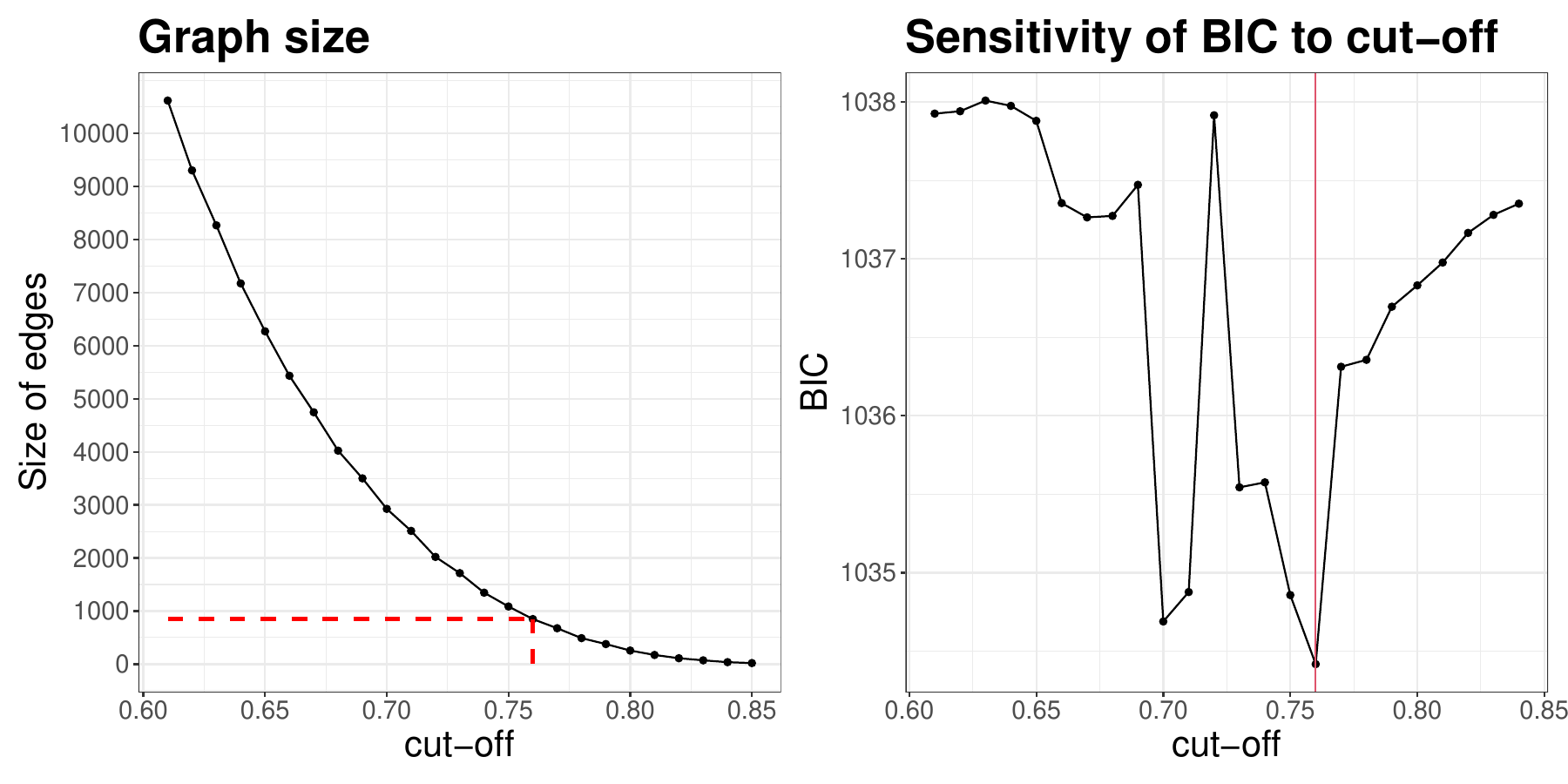}
\end{center}
\caption{Left: Number of edges for each cut-off value. Right: BIC versus cut-off. \label{FigSupSection2_5}}
\end{figure}

The left panel of Figure \ref{FigSupSection2_6} shows the BIC value against with the number of new group $K=K(\lambda)$ with $c^*=0.76$. 
The minimum BIC is attained when $\lambda=0.368$ and $K=K(\lambda) = 293$. 
We can see from the right panel of Figure \ref{FigSupSection2_6} how many sites were accumulated within each new group. 
The largest group included 68 sites, and this group contains Hiroshima. 
The second, third and fourth largest included 36, 22 and 20. 
Fukuoka belongs to the fourth largest group.
Among the resulting groups, the case in which two sites were merged was the most frequent; 54 new groups were formed in this manner.

\begin{figure}
 \centering
\includegraphics[width=150mm,height=60mm]{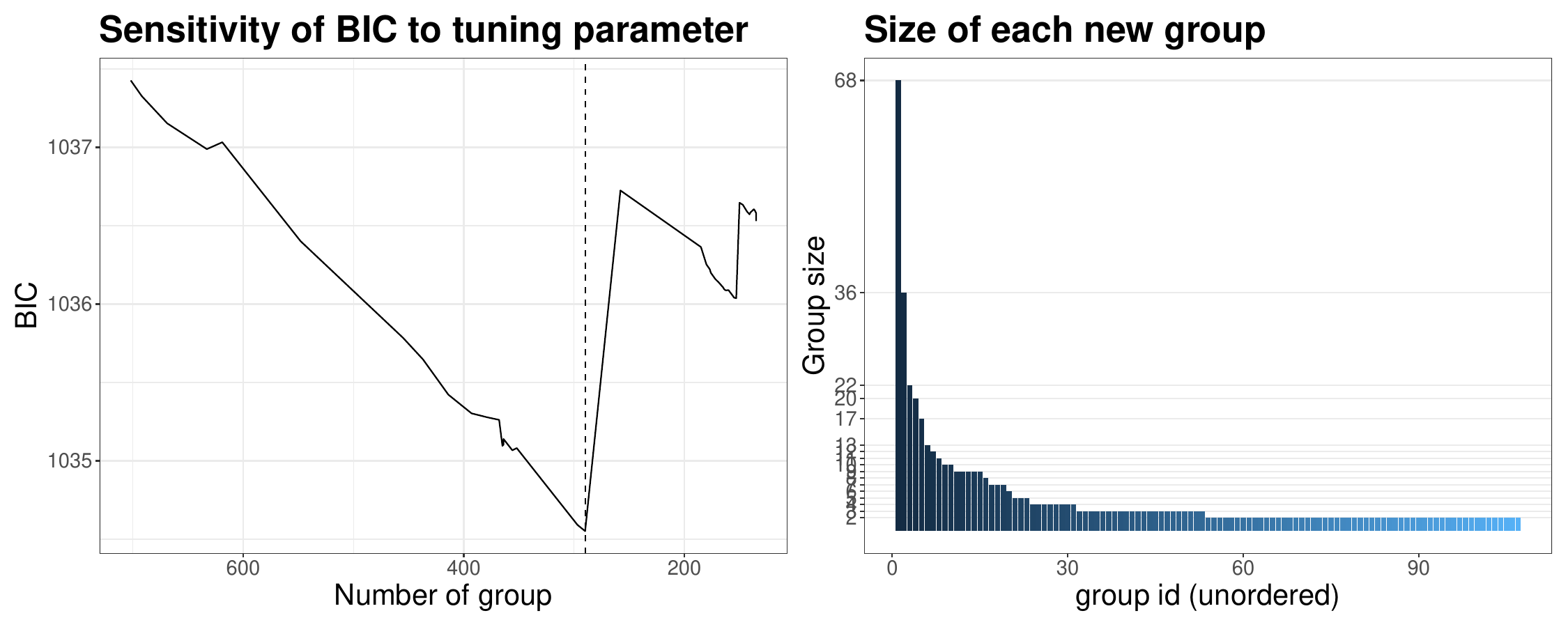}
\caption{Result of proposed grouping method. 
Left: The BIC value against with number of new group. The vertical line shows estimated number of new group.
Right: The number of clusters within each new group. \label{FigSupSection2_6}
}
\end{figure}

\subsection{Additional results of return level}

\begin{figure}[h]
\centering
  \begin{minipage}{0.45\linewidth}
    \centering
    \includegraphics[width=\linewidth]{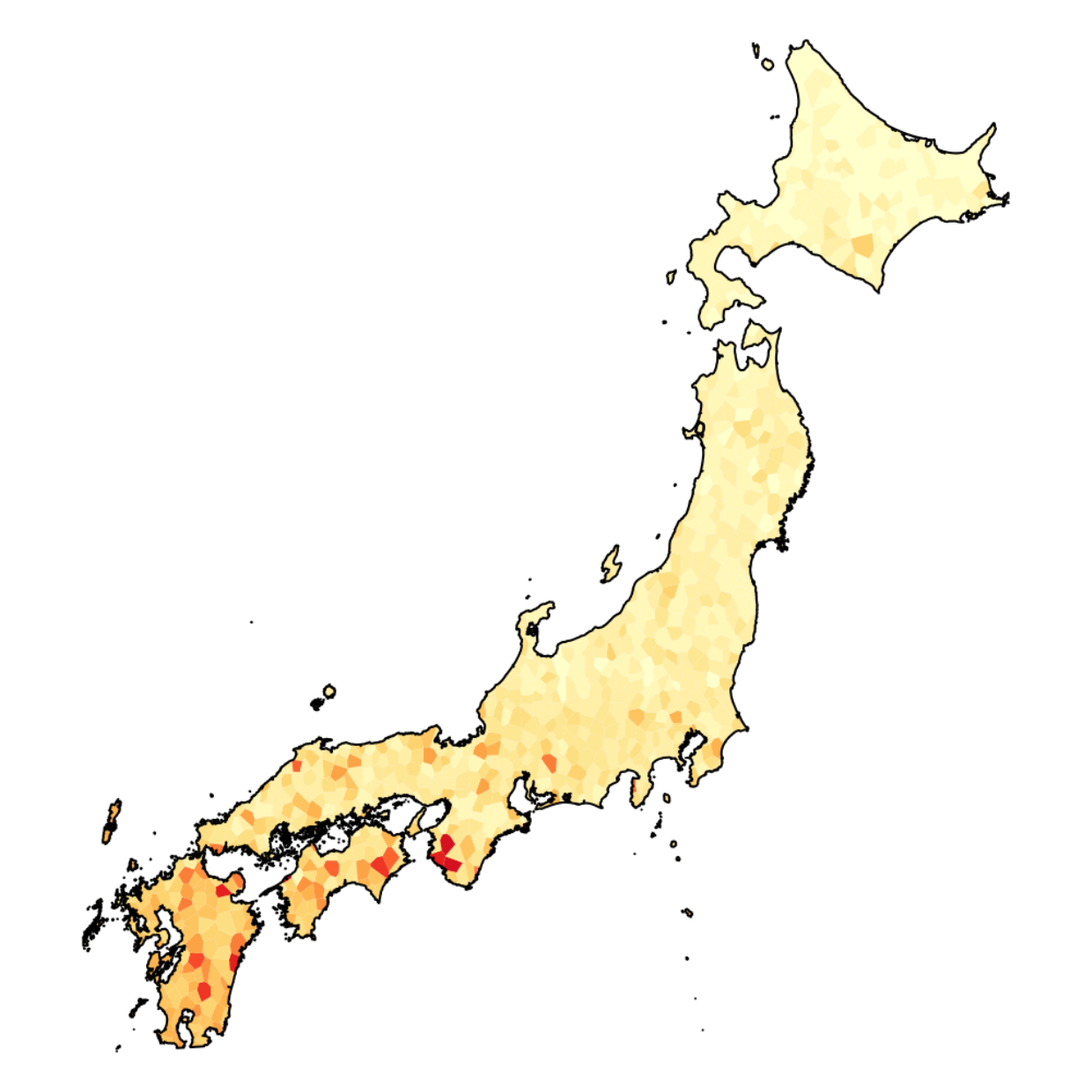}
    \vspace{0.5\baselineskip}
    \textbf{Cluster-wise}
  \end{minipage}
  \hfill
  \begin{minipage}{0.45\linewidth}
    \centering
    \includegraphics[width=\linewidth]{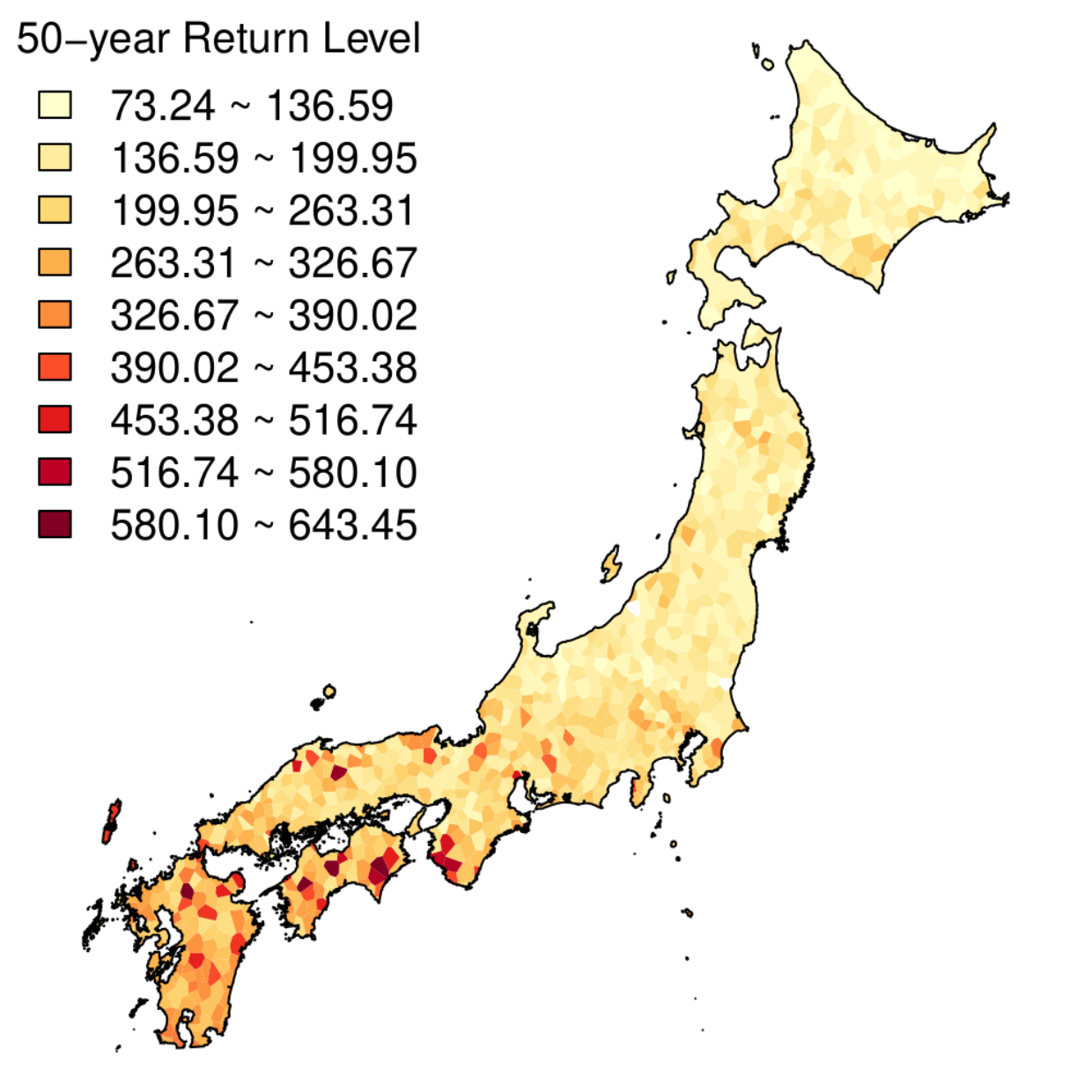}
    \vspace{0.5\baselineskip}
    \textbf{Graph fused lasso}
  \end{minipage}
\caption{50-year return level for all clusters constructed by cluster-wise estimator (left) and proposed estimator (right)}
\label{FigSupSection2_78}
\end{figure}

Figure \ref{FigSupSection2_78} shows a map of 50-year return levels for all clusters. 
Left and right panels respectively correspond to the cluster-wise and proposed method. 
Overall, the return levels increased across many sites. 
This can be interpreted as the estimates at many sites being pulled upward due to the influence of neighboring high-risk sites and incorpolates these features.

 \begin{figure}
\centering
\includegraphics[width=0.45\linewidth]{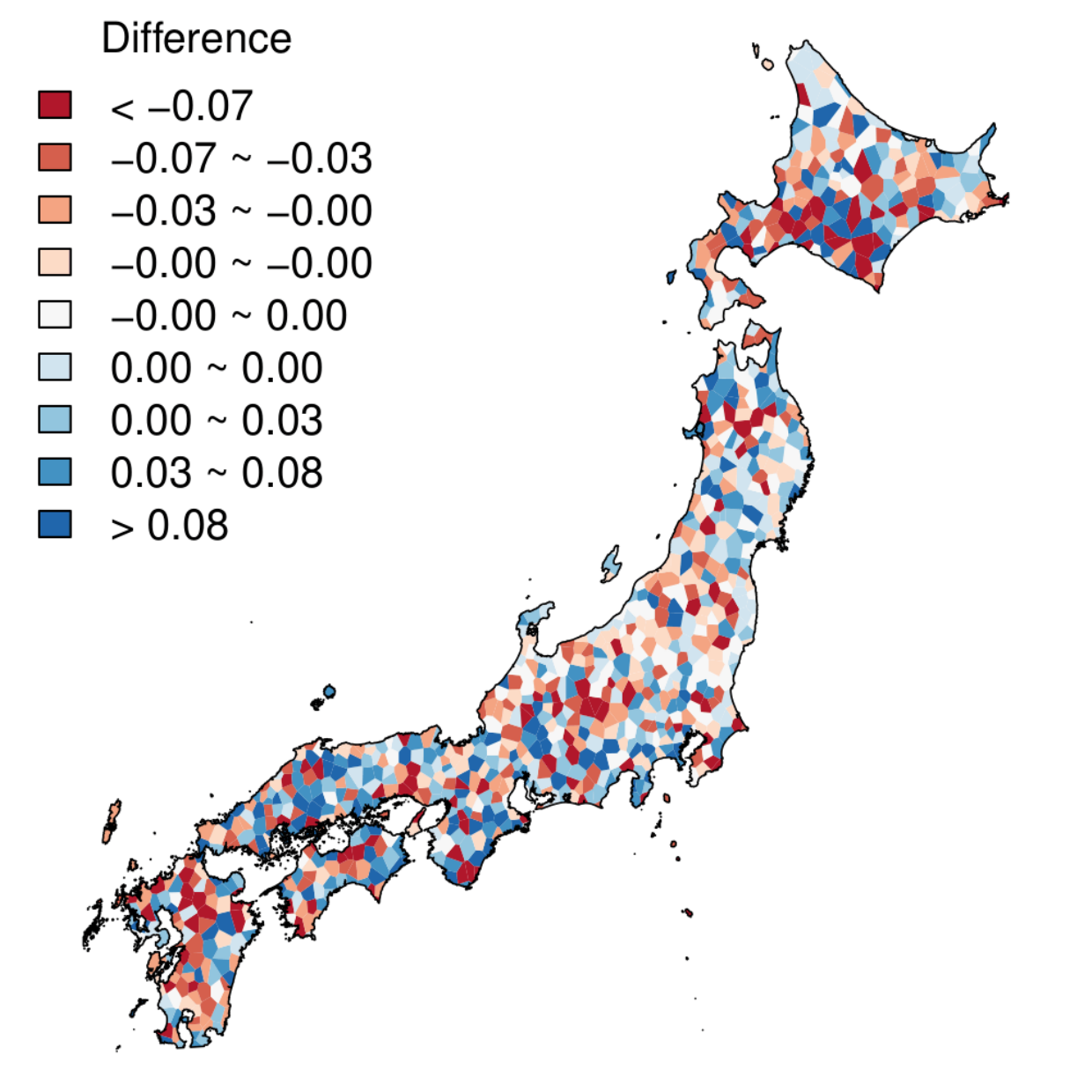}
\caption{Difference between shape parameter constructed using the cluster-wise estimator and the proposed estimator for all clusters. \label{FigSupSection2_9}}
\end{figure}

Figure \ref{FigSupSection2_9} depicts the image map of the difference between estimator of shape parameter constructed by cluster-wise method and proposed grouping method for all clusters. 
Clusters shown in red or blue correspond to locations at which grouping is applied, whereas white clusters denote no grouping.
The colors reflect the direction and magnitude of differences between estimators.
Overall, the results exhibit a sparse pattern, suggesting that the proposed method tends to induce grouping across sites, leading to increased homogeneity.

\subsection{Proposed method with homogeneity graph} 

Here, we consider the homogeneity based graph as the other type of graph (Ke et al. 2015). 
That is, the edge is added to $j$-th and $k$-th clusters if $|\tilde{\gamma}_j-\tilde{\gamma}_k|<\delta$ for some $\delta>0$. 
For analyses in this section, we set $\delta=0.005$. 
Then, the total number of edges is 5043. 

 \begin{figure}
\centering
\includegraphics[width=0.45\linewidth]{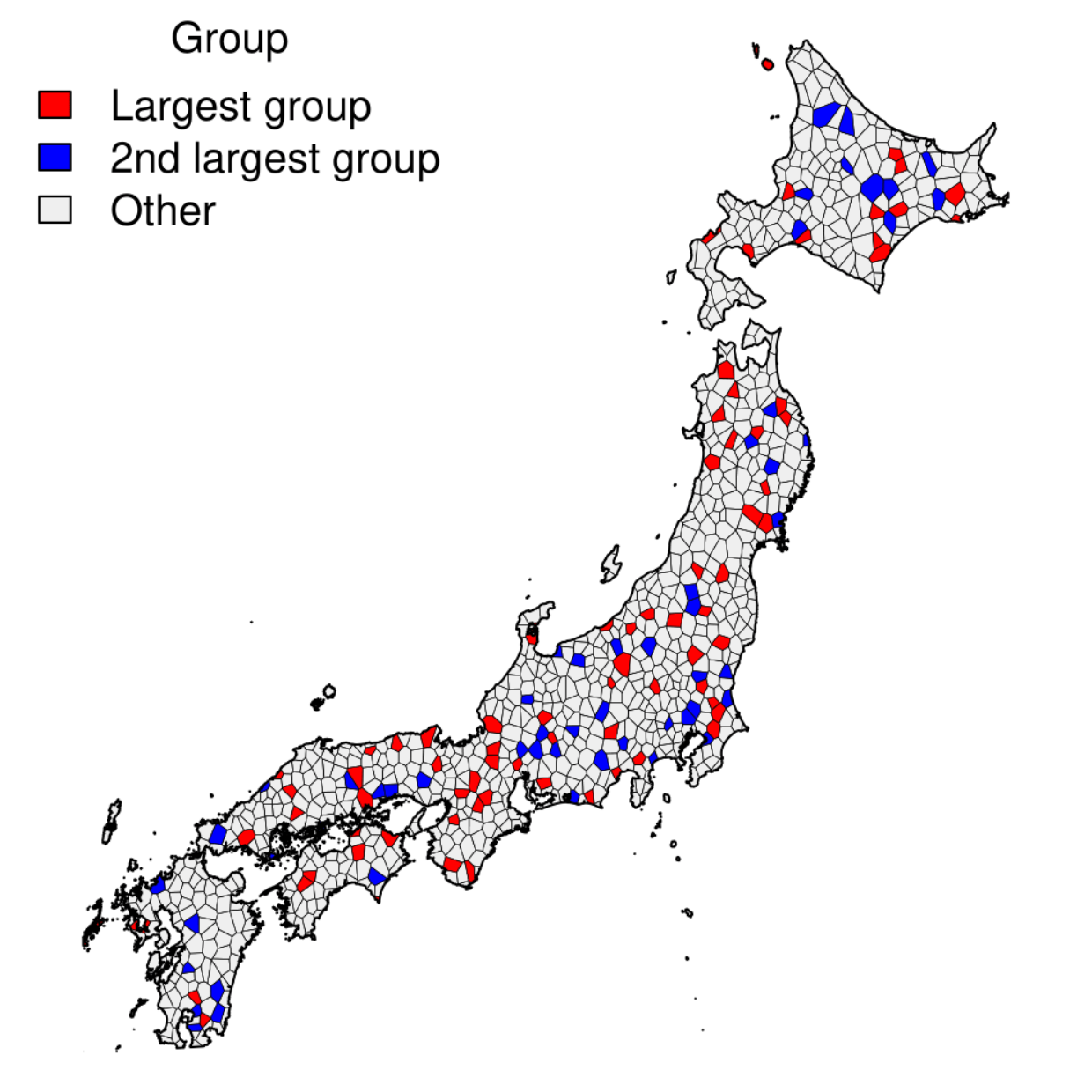}
\caption{Clusters grouped using the proposed method with the homogeneity graph. \label{FigSupSection2_10}}
\end{figure}

In this setting, we apply the proposed method to the same data as used in Section 6.
As a result, the respective sizes of the largest and the second largest groups are 60 and 43.
In Figure \ref{FigSupSection2_10}, clusters belonging to these two groups are indicated.
Clusters marked with the same color (red or blue) are grouped as having the same shape parameter.

The results showed that the obtained grouping did not reflect the spatial structure of sites.
Consequently, the tail behavior of each cluster becomes difficult to interpret.
Moreover, the resulting grouping appears to be influenced by estimation error.

These findings suggest that incorporating a user-specified graph based on prior information is important for obtaining a reasonable interpretation.
At least in cases where prior knowledge such as tail dependence or spatial distance between clusters is available, such information would be useful to construct the graph structure of the fused lasso.

\def\bibname{References}

\end{document}